\title{The Strahler Number of a Parity Game} %TODO Please add
\titlerunning{The Strahler Number of a Parity Game} %TODO optional, please use if title is longer than one line
\author{Laure Daviaud}{Department of Computer Science, City, University of London, UK}{Laure.Daviaud@city.ac.uk}{https://orcid.org/0000-0002-9220-7118}{}%(Optional) author-specific funding acknowledgements}%TODO mandatory, please use full name; only 1 author per \author macro; first two parameters are mandatory, other parameters can be empty. Please provide at least the name of the affiliation and the country. The full address is optional
\author{Marcin Jurdzi\'nski}{Department of Computer Science, University of Warwick, UK}{Marcin.Jurdzinski@warwick.ac.uk}{https://orcid.org/0000-0003-3640-8481}{}
\author{K.\,S. Thejaswini}{Department of Computer Science, University of Warwick, UK}{Thejaswini.Raghavan.1@warwick.ac.uk}{}{}
\authorrunning{L.~Daviaud, M.~Jurdzi\'nski, and K.\,S.~Thejaswini} %TODO mandatory. First: Use abbreviated first/middle names. Second (only in severe cases): Use first author plus 'et al.'
\keywords{parity game, attractor decomposition, progress measure,
  universal tree, Strahler number} %TODO mandatory; please add comma-separated list of keywords
\newcommand{\Bc}{\mathcal{B}}
\newcommand{\Cc}{\mathcal{C}}
\newcommand{\Dc}{\mathcal{D}}
\newcommand{\Gc}{\mathcal{G}}
\newcommand{\Hc}{\mathcal{H}}
\newcommand{\Lc}{\mathcal{L}}
\newcommand{\Rc}{\mathcal{R}}
\newcommand{\Tc}{\mathcal{T}}
\newcommand{\Uc}{\mathcal{U}}
\newcommand{\Vc}{\mathcal{V}}
\newcommand{\Even}{\mathrm{Even}}
\newcommand{\Odd}{\mathrm{Odd}}
\newcommand{\Reg}[2]{\Rc^{#1}\!\left(#2\right)}
\newcommand{\Def}[2]{\Dc^{#1}\!\left(#2\right)}
\newcommand{\Strah}[1]{\mathrm{Str}\left(#1\right)}
\newcommand{\height}[1]{\mathrm{height}\left(#1\right)}
\newcommand{\leaves}[1]{\mathrm{leaves}\left(#1\right)}
\newcommand{\floor}[1]{\left\lfloor #1 \right\rfloor}
\newcommand{\ceil}[1]{\left\lceil #1 \right\rceil}
\newcommand{\seq}[1]{\left\langle #1 \right\rangle}
\newcommand{\tpl}[1]{\left( #1 \right)}
\newcommand{\eset}[1]{\left\{\, #1 \,\right\}}
\newcommand{\Nats}{\mathbb{N}}
\theoremstyle{remark}
\newtheorem{question}[theorem]{Question}
\theoremstyle{plain}
\begin{document}

\maketitle

%TODO mandatory: add short abstract of the document
\begin{abstract}
  The Strahler number of a rooted tree is the largest height of a
  perfect binary tree that is its minor. 
  The Strahler number of a parity game is proposed to be defined as
  the smallest Strahler number of the tree of any of its attractor
  decompositions.  
  It is proved that parity games can be solved in quasi-linear space
  and in time that is polynomial in the number of vertices~$n$ and
  linear in~$({d}/{2k})^k$, where $d$ is the number of priorities and
  $k$ is the Strahler number. 
  This complexity is quasi-polynomial because the Strahler number is
  at most logarithmic in the number of vertices. 
  The proof is based on a new construction of small Strahler-universal
  trees.  

  It is shown that the Strahler number of a parity game is a robust, 
  and hence arguably natural, parameter:  
  it coincides with its alternative version based on trees of progress
  measures and---remarkably---with the register number 
  defined by Lehtinen~(2018).  
  It follows that parity games can be solved in quasi-linear space and
  in time that is polynomial in the number of vertices and linear 
  in~$({d}/{2k})^k$, where $k$ is the register number.
  This significantly improves the running times and space achieved for
  parity games of bounded register number by Lehtinen~(2018) and by
  Parys~(2020). 

  The running time of the algorithm based on small Strahler-universal
  trees yields a novel trade-off $k \cdot \lg(d/k) = O(\log n)$ 
  between the two natural parameters that measure the structural
  complexity of a parity game, 
  which allows solving parity games in polynomial time. 
  This includes as special cases the asymptotic settings of those
  parameters covered by the results of Calude, Jain Khoussainov, Li,
  and Stephan~(2017), of Jurdzi\'nski and Lazi\'c~(2017), and of 
  Lehtinen~(2018), and it significantly extends the range of such
  settings, for example to $d = 2^{O\left(\sqrt{\lg n}\right)}$ and 
  $k = O\!\left(\sqrt{\lg n}\right)$. 
\end{abstract}

\newpage

\section{Context}

\subparagraph*{Parity Games.}
Parity games are a fundamental model in automata theory and
logic~\cite{EJ91,Zie98,GTW01,BW18}, and their applications to
verification, program analysis, and synthesis. 
In particular, they are intimately linked to the problems of emptiness
and complementation of non-deterministic automata on
trees~\cite{EJ91,Zie98}, model checking and satisfiability of fixpoint
logics~\cite{EJS93,BW18}, and evaluation of nested fixpoint
expressions~\cite{BKMP19,HS19}. 
It is a long-standing open problem whether parity games can be solved
in polynomial time~\cite{EJS93}. 

The impact of parity games goes well beyond their home turf of
automata theory, logic, and formal methods.
For example, an answer~\cite{Fri09} of a question posed originally for
parity games~\cite{VJ00} has strongly inspired major breakthroughs on
the computational complexity of fundamental algorithms in stochastic 
planning~\cite{Fea10} and linear optimization~\cite{Fri11,FHZ11}.

\subparagraph*{Strahler Number.}
The Strahler number has been proposed by Horton (1945) and made 
rigorous by Strahler~(1952), in their morphological study of river
networks in hydrogeology. 
It has been also studied in other sciences, such as botany, anatomy, 
neurophysiology, physics, and molecular biology, where branching
patterns appear.  
The Strahler number has been identified in computer science by
Ershov~\cite{Ers58} as the smallest number of registers needed to
evaluate an arithmetic expression. 
It has since been rediscovered many times in various areas of computer
science;
see the surveys of Knuth~\cite{Knu73}, Viennot~\cite{Vie90}, and 
Esparza, Luttenberger, and Schlund~\cite{ELS16}.

\subparagraph{Related Work.}
A major breakthrough in the quest for a polynomial-time algorithm for
parity games was achieved by Calude, Jain, Khoussainov, Li, and
Stephan~\cite{CJKLS17}, who have given the first quasi-polynomial
algorithm. 
Other quasi-polynomial algorithm have been developed soon after by
Jurdzi\'nski and Lazi\'c~\cite{JL17}, and Lehtinen~\cite{Leh18}.  
Czerwi\'nski, Daviaud, Fijalkow, Jurdzi\'nski, Lazi\'c, and
Parys~\cite{CDFJLP19} have introduced the concepts of 
\emph{universal trees} and \emph{separating automata}, and  argued
that all the aforementioned quasi-polynomial algorithms were intimately 
linked to them. 

By establishing a quasi-polynomial lower bound on the size of
universal trees, Czerwi\'nski et al.\ have highlighted the fundamental
limitations of the above approaches, motivating further the study of
the attractor decomposition algorithm due to McNaughton~\cite{McN93}
and Zielonka~\cite{Zie98}. 
Parys~\cite{Par19} has proposed an ingenious quasi-polynomial version
of McNaughton-Zielonka algorithm, but Lehtinen, Schewe, and
Wojtczak~\cite{LSW19}, and Jurdzi\'nski and Morvan~\cite{JM20} have
again strongly linked all quasi-polynomial variants of the
attractor decomposition algorithm to universal trees.  

Among several prominent quasi-polynomial algorithms for parity games,
Lehtinen's approach~\cite{Leh18} has relatively least attractive
worst-case running time bounds.
Parys~\cite{Par20} has offered some running-time improvements to
Lehtinen's algorithm, but it remains significantly worse than
state-of-the-art bounds of Jurdzi\'nski and Lazi\'c~\cite{JL17}, and
Fearnley, Jain, de Keijzer, Schewe, Stephan, and
Wojtczak~\cite{FJKSSW19}, in particular because it always requires at 
least quasi-polynomial working space.

\subparagraph{Our Contributions.}
We propose the Strahler number as a parameter that measures the
structural complexity of dominia in a parity game and that governs the 
computational complexity of the most efficient algorithms currently
known for solving parity games.  
We establish that the Strahler number is a robust, and hence natural, 
parameter by proving that it coincides with its version based on trees
of progress measures and with the register number defined by
Lehtinen~\cite{Leh18}.  

We give a construction of small Strahler-universal trees that, when
used with the progress measure lifting algorithm~\cite{Jur00,JL17} or
with the universal attractor decomposition algorithm~\cite{JM20},
yield algorithms that work in quasi-linear space and
quasi-polynomial time.
Moreover, usage of our small Strahler-universal trees allows to solve
parity games in polynomial time for a wider range of asymptotic
settings of the two natural structural complexity parameters  
(number of priorities~$d$ and the Strahler/register number~$k$) 
than previously known, and that covers as special cases the $k = O(1)$ 
criterion of Lehtinen~\cite{Leh18} and the $d < \lg n$  and $d =
O(\log n)$  criteria of of Calude et al.~\cite{CJKLS17}, and of
Jurdzi\'nski and Lazi\'c~\cite{JL17}, respectively.

\section{Dominions, Attractor Decompositions, and Their Trees}
\label{section:tuning}

\subparagraph*{Strategies, Traps, and Dominions.}

A \emph{parity game}~\cite{EJ91} $\Gc$ consists of a finite directed
graph~$(V, E)$, a partition $(V_{\Even}, V_{\Odd})$ of the set of
vertices~$V$, and a function $\pi : V \to \eset{0, 1, \dots, d}$ that
labels every vertex~$v \in V$ with a non-negative integer~$\pi(v)$
called its \emph{priority}.  
We say that a cycle is \emph{even} if the highest vertex priority on
the cycle is even; otherwise the cycle is \emph{odd}.
We say that a parity game is \emph{$(n, d)$-small} if it has at
most~$n$ vertices and all vertex priorities are at most~$d$. 

For a set~$S$ of vertices, we write $\Gc \cap S$ 
for the substructure of~$\Gc$ whose graph is the subgraph of~$(V, E)$
induced by the sets of vertices~$S$.
Sometimes, we also write $\Gc \setminus S$ to denote 
$\Gc \cap (V \setminus S)$. 
We assume throughout that every vertex has at least one outgoing edge, 
and we reserve the term \emph{subgame} to substructures $\Gc \cap S$, 
such that every vertex in the subgraph of $(V, E)$ induced by~$S$ has
at least one outgoing edge.

A (positional) \emph{Steven strategy} is a set $\sigma \subseteq E$ 
of edges such that:
\begin{itemize}
\item
  for every $v \in V_{\Even}$, there is an edge $(v, u) \in \sigma$,
\item
  for every $v \in V_{\Odd}$, if $(v, u) \in E$ then $(v, u) \in \sigma$.
\end{itemize}
For a non-empty set of vertices $R$, we say that a Steven
strategy~$\sigma$ \emph{traps Audrey in $R$} if  
$w \in R$ and $(w, u) \in \sigma$ imply $u \in R$. 
We say that a set of vertices~$R$ is a 
\emph{trap for Audrey}~\cite{Zie98} if there is a Steven strategy that
traps Audrey in~$R$.  
Observe that if~$R$ is a trap in a game~$\Gc$ then $\Gc \cap R$ is a
subgame of~$\Gc$.
For a set of vertices $D \subseteq V$, we say that a Steven
strategy~$\sigma$ is a 
\emph{Steven dominion strategy on $D$} if
  $\sigma$ traps Audrey in~$D$ and 
  every cycle in the subgraph $(D, \sigma)$ is even.
Finally, we say that a set~$D$ of vertices is a
\emph{Steven dominion}~\cite{JPZ08} if there is a Steven dominion
strategy on it. 

Audrey strategies, trapping Steven, and Audrey dominions are defined in
an analogous way by swapping the roles of the two players. 
We note that the sets of Steven dominions and of Audrey dominions are each  
closed under union, and hence the largest Steven and Audrey dominions
exist, and they are the unions of all Steven and Audrey dominions,
respectively. 
Moreover, every Steven dominion is disjoint from every Audrey
dominion.

\subparagraph*{Attractor Decompositions.}

In a parity game~$\Gc$, for a target set of vertices~$B$
(``bullseye'') and a set of vertices~$A$ such that $B \subseteq A$, 
we say that a Steven strategy~$\sigma$ is a 
\emph{Steven reachability strategy to $B$ from~$A$} if every infinite  
path in the subgraph $(V, \sigma)$ that starts from a vertex in~$A$
contains at least one vertex in~$B$.

For every target set~$B$, there is the largest
(with respect to set inclusion) set from which there is a Steven
reachability strategy to~$B$ in~$\Gc$;
we call this set the 
\emph{Steven attractor to~$B$ in~$\Gc$}~\cite{Zie98}. 
\emph{Audrey reachability strategies} and \emph{Audrey attractors} are 
defined analogously.
We highlight the simple fact that if~$A$ is an attractor for a player
in~$\Gc$ then its complement $V \setminus A$ is a trap for them.

If $\Gc$ is a parity game in which all priorities do not exceed a
non-negative even number~$d$ then we say that 
$\Hc \: = \:
\seq{A, (S_1, \Hc_1, A_1), \dots, (S_\ell, \Hc_\ell, A_\ell)}$
is a \emph{Steven $d$-attractor decomposition}~\cite{DJL18,DJL19,JM20}
of~$\Gc$ if: 
\begin{itemize}
\item
  $A$ is the Steven attractor to the (possibly empty) set of vertices
  of priority~$d$ in~$\Gc$;
\end{itemize}
and setting $\Gc_1 = \Gc \setminus A$, for all $i = 1, 2, \dots, \ell$,
we have: 
\begin{itemize}
\item
  $S_i$ is a non-empty trap for Audrey in~$\Gc_i$ in which every
  vertex priority is at most~$d-2$;  
\item
  $\Hc_i$ is a Steven $(d-2)$-attractor decomposition of 
  subgame~$\Gc \cap S_i$; 
\item
  $A_i$ is the Steven attractor to $S_i$ in~$\Gc_i$;
\item
  $\Gc_{i+1} = \Gc_i \setminus A_i$;
\end{itemize}
and the game $\Gc_{\ell+1}$ is empty.
If $d = 0$ then we require that $\ell = 0$.

The following proposition states that if a subgame induced by a trap
for Audrey has a Steven attractor decomposition then the trap is a
Steven dominion. 
Indeed, a routine proof argues that the union of all the Steven
reachability strategies, implicit in the attractors listed in the
decomposition, is a Steven dominion strategy. 

\begin{proposition}[\cite{Zie98,DJL18,JM20}]
\label{prop:decomposition-dominion-even}
  If $d$ is even, $R$ is a trap for Audrey in~$\Gc$, and there is
  a Steven $d$-attractor decomposition of~$\Gc \cap R$, then $R$ is a
  Steven dominion in~$\Gc$. 
\end{proposition}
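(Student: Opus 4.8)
The plan is to prove the proposition by induction on the even number~$d$, extracting from a Steven $d$-attractor decomposition $\Hc = \seq{A, (S_1, \Hc_1, A_1), \dots, (S_\ell, \Hc_\ell, A_\ell)}$ of $\Gc \cap R$ a single positional Steven strategy~$\sigma$ and then verifying that $\sigma$ traps Audrey in~$R$ and that every cycle in $(R, \sigma)$ is even. The base case $d = 0$ is immediate: by definition $\ell = 0$, so $R$ is exactly the Steven attractor to the set of priority-$0$ vertices; a Steven reachability strategy to that set, together with any choice of outgoing edge at the priority-$0$ vertices staying inside~$R$ (possible since $R$ is a trap for Audrey, hence $\Gc \cap R$ is a subgame), has the property that every infinite path hits priority~$0$ infinitely often, so every cycle is even.

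For the inductive step, assume the claim for $d-2$. First I would assemble~$\sigma$ as the union of: (i) a Steven reachability strategy witnessing that $A$ is the attractor to the priority-$d$ set in $\Gc \cap R$; (ii) for each~$i$, the strategy~$\sigma_i$ obtained from the induction hypothesis applied to $\Hc_i$, which is a $(d-2)$-attractor decomposition of $\Gc \cap S_i$ — note $S_i$ is a trap for Audrey in $\Gc_i$, and one checks it is also a trap for Audrey in $\Gc \cap (R \setminus (\text{earlier removed sets}))$, so the induction hypothesis yields a Steven dominion strategy on~$S_i$ living inside~$S_i$; and (iii) for each~$i$, a Steven reachability strategy witnessing that $A_i$ is the attractor to~$S_i$ in~$\Gc_i$. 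On each vertex these pieces have pairwise disjoint domains (the sets $A$, $S_1$, $A_1$, $S_2$, $A_2$, \dots partition~$R$ because the construction peels them off successively until $\Gc_{\ell+1}$ is empty), so~$\sigma$ is a well-defined positional Steven strategy, and one verifies $\sigma$ traps Audrey in~$R$: from $A$ the reachability strategy stays in $R$; from $S_i$ the dominion strategy stays in $S_i \subseteq \Gc_i$; from $A_i$ the reachability strategy stays in $\Gc_i$; and since each $\Gc_{i+1} = \Gc_i \setminus A_i$ and each $S_i, A_i$ are attractors/traps, no $\sigma$-edge can leave~$R$.

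The heart of the argument, and the main obstacle, is the cycle-parity check: every cycle in $(R, \sigma)$ is even. I would argue by a "highest block" analysis. Consider a cycle~$C$ in $(R, \sigma)$. If $C$ meets~$A$, then since $A$ is an attractor equipped with a reachability strategy to the priority-$d$ vertices, and $C$ is a finite cycle confined to $A$ on the portion inside $A$ (edges out of $A \setminus (\text{priority-}d)$ go toward the target and stay in $A$), the cycle must actually pass through a priority-$d$ vertex; as $d$ is even and is the maximal priority overall, $C$ is even. If $C$ avoids~$A$, let~$i$ be the least index with $C \cap (S_i \cup A_i) \neq \emptyset$; since edges of the $A_i$-reachability strategy lead from $\Gc_i$ toward $S_i$ and the $S_i$-dominion strategy keeps the token inside~$S_i$, and $C$ cannot reenter $A_j$ or $S_j$ for $j < i$ (those were removed and there are no $\sigma$-edges back into them — here one uses that $\Gc_{i}$ is the subgame on a trap), the cycle~$C$ is entirely contained in $S_i$, where it is a cycle in $(S_i, \sigma_i)$; by the induction hypothesis applied to the $(d-2)$-attractor decomposition~$\Hc_i$ of $\Gc \cap S_i$, that cycle is even. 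In both cases $C$ is even, so $\sigma$ is a Steven dominion strategy on~$R$, and $R$ is a Steven dominion in~$\Gc$. The delicate points to get right are the no-return claims (no $\sigma$-edge leaves a block $S_i$ or $A_i$ to an earlier block), which follow from the successive-removal structure $\Gc_{i+1} = \Gc_i \setminus A_i$ together with $A_i$ being a Steven attractor (so its complement in $\Gc_i$ is a trap for Steven — wait, for Audrey), and from confining all reachability strategies to the relevant subgames; these are exactly the routine verifications the excerpt alludes to when it says a routine proof argues the union is a dominion strategy.
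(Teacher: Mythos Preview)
Your proof is correct and follows exactly the approach the paper sketches: assemble the union of the reachability strategies implicit in the attractors (together with the inductively obtained dominion strategies on the~$S_i$) and verify by a highest-block cycle analysis that it is a Steven dominion strategy on~$R$. One small slip to clean up: the parenthetical claim that ``there are no $\sigma$-edges back into'' earlier blocks $A_j$, $j<i$, is false---$\sigma$ contains all of Audrey's edges, and those can go from $\Gc_i$ back to $A$ or to $A_j$ for $j<i$---but your conclusion that $C$ avoids those blocks is already immediate from your choice of~$i$ as the least index with $C \cap A_i \neq \emptyset$ (and from $C$ avoiding~$A$), so the argument stands; similarly, the complement of a Steven attractor is a trap for \emph{Steven}, not Audrey, and this is what ensures Steven's chosen edges on $\Gc_i$ stay in~$\Gc_i$.
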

Attractor decompositions for Audrey can be defined in the analogous
way by swapping the roles of players as expected, and then a dual
version of the proposition holds routinely.

The following theorem implies that every vertex in a parity game is
either in the largest Steven dominion or in the largest Audrey
dominion---it is often referred to as the 
\emph{positional determinacy theorem} for parity games. 

\begin{theorem}[\cite{EJ91,McN93,Zie98,JM20}]
\label{thm:attractor-decompositions-of-largest-dominia}
  For every parity game~$\Gc$, there is a partition of the set of
  vertices into a trap for Audrey~$W_{\Even}$ and a trap for
  Steven~$W_{\Odd}$, such that there is a Steven attractor
  decomposition of $\Gc \cap W_{\Even}$ and an Audrey attractor
  decomposition of $\Gc \cap W_{\Odd}$.   
\end{theorem}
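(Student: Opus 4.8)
The plan is to prove Theorem~\ref{thm:attractor-decompositions-of-largest-dominia} by induction on the number of priorities~$d$, closely mirroring the recursive structure of the McNaughton--Zielonka algorithm and of the definition of attractor decompositions. The base case $d = 0$ (or, handling parity, the case where the top priority is~$0$ or~$1$) is immediate: if all priorities are~$0$, then every cycle is even, the empty decomposition $\seq{\emptyset}$ witnesses $W_{\Even} = V$ and $W_{\Odd} = \emptyset$; the case of a single odd priority is symmetric. For the inductive step I would first reduce to the case that $d$ is even (if the largest priority actually occurring is odd, swap the roles of the two players and of "even"/"odd" throughout).

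For the inductive step with $d$ even, the plan is a fixed-point / monotone-sequence argument. Set $W^0 = V$, and given a trap $W^j$ for Audrey, let $A$ be the Steven attractor (inside $\Gc \cap W^j$) to the set of priority-$d$ vertices, let $\Gc' = (\Gc \cap W^j) \setminus A$, which is a subgame with all priorities at most $d-2$; apply the induction hypothesis to $\Gc'$ to obtain a partition of its vertices into $W'_{\Even}$ (with a Steven $(d-2)$-attractor decomposition) and $W'_{\Odd}$ (with an Audrey $(d-2)$-attractor decomposition, which is in particular an Audrey $d$-attractor decomposition after padding). Then let $B^{j} = $ the Audrey attractor to $W'_{\Odd}$ within $\Gc \cap W^j$, and set $W^{j+1} = W^j \setminus B^j$. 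One checks that $W^{j+1}$ is again a trap for Audrey (the complement of an Audrey attractor is an Audrey trap, by the highlighted fact, and a trap inside a trap is a trap), and that $W^{j+1} \subseteq W^j$; since $V$ is finite this descending sequence stabilises at some $W^\ast$. Setting $W_{\Even} = W^\ast$ and $W_{\Odd} = V \setminus W^\ast$, the stabilisation $B^\ast = \emptyset$ means that in $\Gc \cap W^\ast$ the residual game $\Gc'$ after removing the attractor~$A^\ast$ has an empty Audrey part, so $\Gc' = \Gc \cap W'_{\Even}$ admits a Steven $(d-2)$-attractor decomposition $\Hc'$; I then assemble the Steven $d$-attractor decomposition of $\Gc \cap W_{\Even}$ as $\seq{A^\ast, (S_1, \Hc_1, A_1), \dots}$ by reading off the components of~$\Hc'$ (its outer attractor-trap-attractor triples), which fits the definitional format exactly. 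Dually, one has to exhibit an Audrey attractor decomposition of $\Gc \cap W_{\Odd}$: the set $W_{\Odd}$ is accumulated as a disjoint union of the sets $B^j$, each of which is an Audrey attractor (within the relevant subgame) to a set $W'_{\Odd}$ that itself carries an Audrey $(d-2)$-attractor decomposition; these can be concatenated, using that each $W^{j+1}$ is an Audrey-trap contained in the previous game, to build a single Audrey $d$-attractor decomposition whose successive "outer" blocks are exactly these attractors and their targets.

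The main obstacle, and the place where care is needed, is verifying that the decompositions assembled from the recursive pieces genuinely satisfy every clause of the definition of a $d$-attractor decomposition --- in particular, that at each stage the relevant set is a \emph{non-empty} trap for the opponent in the \emph{current residual subgame} $\Gc_i$ (not merely in the original game), and that the inner decomposition $\Hc_i$ is over the right subgame $\Gc \cap S_i$ with the right priority bound $d-2$. This bookkeeping is where the monotonicity of traps under taking subgames, and the interaction of attractors with traps (complement of an attractor is a trap, and an attractor computed in a subgame stays inside that subgame), must be invoked repeatedly. The emptiness-at-termination argument --- that $W^\ast$ stabilising forces the Audrey part of the residual game on the Steven side to vanish --- is the conceptual heart: it is exactly the statement that the McNaughton--Zielonka recursion has reached a fixed point, and it is what lets the two one-sided decompositions exist simultaneously. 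Everything else is the routine induction sketched above; since earlier work (the cited \cite{EJ91,McN93,Zie98,JM20}) establishes this, I would present the argument at the level of detail above and defer the clause-by-clause verification to a remark or to the references.
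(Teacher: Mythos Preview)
The paper does not supply its own proof of this theorem: it is stated with citations to \cite{EJ91,McN93,Zie98,JM20} and used as background. Your plan --- induction on the top priority, following the McNaughton--Zielonka recursion with a fixed-point iteration that repeatedly peels off Audrey attractors to her sub-dominia until the Steven side stabilises --- is precisely the argument given in those references, so there is nothing substantively different to compare.

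That said, there is a bookkeeping error you should repair. After removing the Steven attractor $A$ to the set of priority-$d$ vertices, the residual game $\Gc' = (\Gc \cap W^j) \setminus A$ has all priorities at most $d-1$, not $d-2$: vertices of priority $d-1$ need not lie in~$A$. Hence the induction hypothesis applied to~$\Gc'$ does not directly hand you a Steven $(d-2)$-attractor decomposition of~$W'_{\Even}$ (which may well contain priority-$(d-1)$ vertices), nor an ``Audrey $(d-2)$-attractor decomposition'' of~$W'_{\Odd}$ (Audrey decompositions are indexed by odd integers, so that phrase is ill-typed). The clean remedy, as in Zielonka's original argument, is to step the induction down one priority at a time, swapping the roles of the two players at each level; the hypothesis on~$\Gc'$ then yields an Audrey $(d-1)$-attractor decomposition of~$W'_{\Odd}$, and it is the internal structure of that $(d-1)$-level recursion --- a sequence of Audrey-traps of priority at most $d-2$, each carrying a Steven $(d-2)$-decomposition, together with their Steven attractors --- that at the stable point supplies the list $(S_i, \Hc_i, A_i)$ needed for the Steven $d$-decomposition of~$W_{\Even}$. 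With this correction, the remainder of your outline (the monotone descent of the~$W^j$, termination by finiteness, and the concatenation on the Audrey side) is sound.
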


\subparagraph*{Ordered Trees and Their Strahler Numbers.}

Ordered trees are defined inductively;
the trivial tree $\seq{}$ is an ordered tree and so is a sequence
$\seq{T_1, T_2, \dots, T_\ell}$, where $T_i$ is an ordered 
tree for every $i = 1, 2, \dots, \ell$.
The trivial tree has only one node called the root, which is a leaf;
and a tree of the form $\seq{T_1, T_2, \dots, T_\ell}$ has the root
with $k$ children, the root is not a leaf, and the $i$-th child of the
root is the root of ordered tree~$T_i$.  

Because the trivial tree~$\seq{}$ has just one node, we sometimes
write~$\circ$ to denote it. 
If $T$ is an ordered tree and $i$ is a positive integer, then 
we use the notation $T^i$ to denote the sequence 
$T, T, \dots, T$ consisting of $i$ copies of tree~$T$. 
Then the expression $\seq{T^i} = \seq{T, \dots, T}$ denotes 
the tree whose root has~$i$ children, each of which is the root of a
copy of~$T$.  
We also use the $\cdot$ symbol to denote concatenation of sequences,
which in the context of ordered trees can be interpreted as sequential
composition of trees by merging their roots;
for example, 
$\seq{\seq{\circ^3}} \cdot \seq{\circ^4, \seq{\seq{\circ}}^2} 
= 
\seq{\seq{\circ^3}, \circ^4, \seq{\seq{\circ}}^2}
= 
\seq{\seq{\circ, \circ, \circ}, \circ, \circ, \circ, \circ, \seq{\seq{\circ}}, \seq{\seq{\circ}}}$.

For an ordered tree~$T$, we write $\height{T}$ for its
\emph{height} and $\leaves{T}$ for its 
\emph{number of leaves}, which are defined by the following routine 
induction: 
the trivial tree~$\seq{} = \circ$ has $1$ leaf and its height is~$1$; 
the number of leaves of tree $\seq{T_1, T_2, \dots, T_\ell}$ is
the sum of the numbers of leaves of trees~$T_1$, $T_2$, \dots, 
$T_\ell$; 
and its height 
is $1$ plus the maximum height of trees~$T_1$, $T_2$, \dots, $T_\ell$. 
For example, the tree 
$\seq{\seq{\circ^3}, \circ^4, \seq{\seq{\circ}}^2}$
has $9$ leaves and height~$4$
We say that an ordered tree is \emph{$(n, h)$-small}
if it has at most $n$ leaves and its height is at most~$h$.

The \emph{Strahler number} $\Strah{T}$ of a tree~$T$ is defined to
be the largest height of a perfect binary tree that is a minor
of~$T$.  
Alternatively, it can be defined by the following structural
induction: 
the Strahler number of the trivial tree~$\seq{} = \circ$ is~$1$; 
and if $T = \seq{T_1, \dots, T_\ell}$ and 
$m$ is the largest Strahler number of trees~$T_1, \dots, T_\ell$, 
then $\Strah{T} = m$ if there is a unique $i$ such that 
$\Strah{T_i} = m$, 
and $\Strah{T} = m+1$ otherwise. 
For example, 
  we have 
  $\Strah{\seq{\seq{\circ^3}, \circ^4, \seq{\seq{\circ}}^2}} = 2$
  because $\Strah{\circ} = \Strah{\seq{\seq{\circ}}} = 1$ and
  $\Strah{\seq{\circ^3}} = 2$.

\begin{proposition}
\label{prop:Strahler-small}
  For every $(n, h)$-small tree~$T$, we have 
  $\Strah{T} \leq h$ and $\Strah{T} \leq \lfloor \lg n \rfloor + 1$. 
\end{proposition}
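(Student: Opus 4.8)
The plan is to work with the structural-inductive definition of the Strahler number given in the excerpt, rather than the minor-based one, since both bounds then follow from a single easy induction each. Throughout, fix an $(n,h)$-small tree~$T$, so that $\leaves{T} \leq n$ and $\height{T} \leq h$.

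For the bound $\Strah{T} \leq h$, I would first prove the sharper statement $\Strah{T} \leq \height{T}$ by structural induction on~$T$. The base case is the trivial tree~$\seq{} = \circ$, for which $\Strah{\circ} = 1 = \height{\circ}$. For the inductive step, let $T = \seq{T_1, \dots, T_\ell}$ and let $m$ be the largest of $\Strah{T_1}, \dots, \Strah{T_\ell}$; by the induction hypothesis $m \leq \max_i \height{T_i} = \height{T} - 1$, and since $\Strah{T} \leq m+1$ by definition, we obtain $\Strah{T} \leq \height{T} \leq h$.

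For the bound $\Strah{T} \leq \lfloor \lg n \rfloor + 1$, the key lemma is that every ordered tree~$T$ satisfies $\leaves{T} \geq 2^{\Strah{T}-1}$, again proved by structural induction. The trivial tree has one leaf and Strahler number~$1$, so $\leaves{\circ} = 1 = 2^{0}$. For $T = \seq{T_1, \dots, T_\ell}$, let $m = \max_i \Strah{T_i}$. If exactly one child, say $T_i$, attains $\Strah{T_i} = m$, then $\Strah{T} = m$ and $\leaves{T} \geq \leaves{T_i} \geq 2^{m-1} = 2^{\Strah{T}-1}$; if at least two children, say $T_i$ and $T_j$ with $i \neq j$, attain the maximum, then $\Strah{T} = m+1$ and $\leaves{T} \geq \leaves{T_i} + \leaves{T_j} \geq 2^{m-1} + 2^{m-1} = 2^{\Strah{T}-1}$. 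From the lemma, $n \geq \leaves{T} \geq 2^{\Strah{T}-1}$, hence $\Strah{T} - 1 \leq \lg n$, and since $\Strah{T}-1$ is a non-negative integer this yields $\Strah{T} - 1 \leq \lfloor \lg n \rfloor$, i.e.\ $\Strah{T} \leq \lfloor \lg n \rfloor + 1$.

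I do not expect a genuine obstacle: both arguments are direct inductions mirroring the recursive definition of $\Strah{\cdot}$, and the only point requiring a little care is the case split in the inductive steps (a unique maximiser versus at least two maximisers among the children), which is exactly the case split in the definition of the Strahler number. The only other subtlety is the final rounding step, where integrality of $\Strah{T}-1$ upgrades $\Strah{T}-1 \leq \lg n$ to $\Strah{T}-1 \leq \lfloor \lg n \rfloor$. One could alternatively derive both inequalities from the minor-based definition --- a perfect binary tree of height $k$ has height $k$ and $2^{k-1}$ leaves, and passing to a minor of an ordered tree cannot increase its height, nor (with a slightly more careful argument) its number of leaves --- but formalising the leaf-count behaviour of minors is marginally more work than the self-contained induction above, so I would present the inductive proof.
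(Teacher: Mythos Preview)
Your proof is correct. The paper states this proposition without proof, treating it as an elementary observation about Strahler numbers; your two structural inductions are exactly the standard way to justify it, and the case split in each inductive step matches the definition of~$\Strah{\cdot}$ precisely. There is nothing to compare against, and no gap in what you wrote.
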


\subparagraph*{Trees of Attractor Decompositions.}

The definition of an attractor decomposition is inductive and we
define an ordered tree that reflects the hierarchical structure of an
attractor decomposition.
If $d$ is even and 
$\Hc = \seq{A, (S_1, \Hc_1, A_1), \dots, (S_\ell, \Hc_\ell, A_\ell)}$ 
is a Steven $d$-attractor decomposition then we define the 
\emph{tree of attractor decomposition~$\Hc$}~\cite{DJL19,JM20}, 
denoted by $T_{\Hc}$, to be the trivial ordered tree~$\seq{}$ if 
$\ell = 0$, and otherwise, to be the ordered tree  
$\seq{T_{\Hc_1}, T_{\Hc_2}, \dots, T_{\Hc_\ell}}$, where for every
$i = 1, 2, \dots, \ell$, tree $T_{\Hc_i}$ is the tree of attractor 
decomposition~$\Hc_i$.  
Trees of Audrey attractor decompositions are defined analogously.

Observe that the sets $S_1, S_2, \dots, S_\ell$ in an attractor
decomposition as above are non-empty and pairwise disjoint, which
implies that trees of attractor decompositions are small relative to
the number of vertices and the number of distinct priorities in a
parity game. 
The following proposition can be proved by routine structural
induction.  

\begin{proposition}[\cite{DJL19,JM20}]
\label{prop:tree-of-decomposition-is-small}
  If $\Hc$ is an attractor decomposition of an $(n, d)$-small parity
  game 
  then its tree $T_{\Hc}$ is $(n, \ceil{d/2}+1)$-small.
\end{proposition}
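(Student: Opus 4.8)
The plan is to prove Proposition~\ref{prop:tree-of-decomposition-is-small} by structural induction on the attractor decomposition, mirroring the inductive definitions of both the decomposition and its tree. I would state the claim for Steven decompositions (the Audrey case being symmetric): if $\Hc$ is a Steven $d$-attractor decomposition of an $(n,d)$-small parity game $\Gc$, then $T_{\Hc}$ has at most $n$ leaves and height at most $\ceil{d/2}+1$. The induction is on $d$ (equivalently, on the structure of $\Hc$).

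For the base case, if $d = 0$ then the definition of a Steven $d$-attractor decomposition forces $\ell = 0$, so $T_{\Hc} = \seq{}$ is the trivial tree, which has $1 \leq n$ leaf and height $1 = \ceil{0/2}+1$. For the inductive step, write $\Hc = \seq{A, (S_1, \Hc_1, A_1), \dots, (S_\ell, \Hc_\ell, A_\ell)}$ with $d$ even and positive. If $\ell = 0$ the same trivial-tree bound applies, so assume $\ell \geq 1$; then $T_{\Hc} = \seq{T_{\Hc_1}, \dots, T_{\Hc_\ell}}$. Each $\Hc_i$ is a Steven $(d-2)$-attractor decomposition of the subgame $\Gc \cap S_i$; if $n_i = \lvert S_i \rvert$, then $\Gc \cap S_i$ is $(n_i, d-2)$-small, so by the induction hypothesis $T_{\Hc_i}$ has at most $n_i$ leaves and height at most $\ceil{(d-2)/2}+1 = \ceil{d/2}$. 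Summing the leaf bounds, $\leaves{T_{\Hc}} = \sum_{i=1}^\ell \leaves{T_{\Hc_i}} \leq \sum_{i=1}^\ell n_i \leq n$, where the last inequality is exactly the observation preceding the proposition: the sets $S_1, \dots, S_\ell$ are non-empty and pairwise disjoint subsets of $V$ (disjointness because $S_{i+1} \subseteq \Gc_{i+1} = \Gc_i \setminus A_i$ with $S_i \subseteq A_i$, and each earlier $S_j$ together with its attractor is removed). For the height, $\height{T_{\Hc}} = 1 + \max_i \height{T_{\Hc_i}} \leq 1 + \ceil{d/2} = \ceil{d/2}+1$.

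The only slightly delicate point is making the disjointness argument precise: one needs that each $S_i$ lies in $\Gc_i$ and that $\Gc_{i+1} = \Gc_i \setminus A_i$ with $S_i \subseteq A_i$ (since $A_i$ is the Steven attractor to $S_i$, it contains $S_i$), so that $S_1, \dots, S_\ell$ occupy disjoint slices of $V$ that are consumed one after another; together with non-emptiness of each $S_i$ this gives $\sum_i \lvert S_i \rvert \leq \lvert V \rvert \leq n$. This is the "routine" content the paper alludes to, and it is genuinely routine once the bookkeeping of which vertices are removed at each stage is written down. Everything else is a mechanical unwinding of the definitions of $\height{\cdot}$ and $\leaves{\cdot}$, so I do not anticipate any real obstacle; the main care is simply to track the parameter shift $d \mapsto d-2$ correctly so that $\ceil{(d-2)/2}+1 = \ceil{d/2}$, which is where the $+1$ in $\ceil{d/2}+1$ comes from at the top level.
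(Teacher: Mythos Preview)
Your proposal is correct and is precisely the routine structural induction that the paper alludes to (the paper does not spell out a proof, citing \cite{DJL19,JM20} and remarking that it ``can be proved by routine structural induction''). Your handling of the leaf count via disjointness of the $S_i$ and of the height via the shift $\ceil{(d-2)/2}+1 = \ceil{d/2}$ (using that $d$ is even) is exactly what is needed.
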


We define the 
\emph{Strahler number of an attractor decomposition~$\Hc$}, denoted 
by~$\Strah{\Hc}$, to be the Strahler number 
$\Strah{T_{\Hc}}$ of its tree~$T_{\Hc}$.   
We define the \emph{Strahler number of a parity game} to be the
maximum of the smallest Strahler numbers of attractor decompositions
of the largest Steven and Audrey dominions, respectively.

\section{Strahler Strategies in Register Games}

This section establishes a connection between the register number of a
parity game defined by Lehtinen~\cite{Leh18} and the Strahler number. 
More specifically, we argue that from every Steven attractor
decomposition of Strahler number~$k$, we can derive a dominion
strategy for Steven in the $k$-register game.
Once we establish the Strahler number upper bound on the register
number, we are faced with the following two natural questions: 

\begin{question}
\label{question:Strahler-eq-register}
  Do the Strahler and the register numbers coincide?
\end{question}
\begin{question}
\label{question:Strahler-algorithmic}
  Can the relationship between Strahler and register numbers be
  exploited algorithmically, in particular, to improve the running
  time and space complexity of solving register games studied by 
  Lehtinen~\cite{Leh18} and Parys~\cite{Par20}?
\end{question}
This work has been motivated by those two questions and it answers 
them both positively 
(Lemma~\ref{lem:Strahler-bounds-Lehtinen} and
Theorem~\ref{thm:Lehtinen-bounds-Strahler}, 
and Theorem~\ref{thm:Strahler-pm-run-time}, respectively).

For every positive number~$k$, a Steven \emph{$k$-register game}
on a parity game~$\Gc$ is another parity game $\Reg{k}{\Gc}$ whose 
vertices, edges, and priorities will be referred to as \emph{states},
\emph{moves}, and \emph{ranks}, respectively, for disambiguation.   
The states of the Steven $k$-register game on~$\Gc$ are either
pairs $\left(v, \seq{r_{k}, r_{k-1}, \dots, r_1}\right)$ or
triples $\left(v, \seq{r_{k}, r_{k-1}, \dots, r_1}, p\right)$,
where $v$ is a vertex in~$\Gc$, 
$d \geq r_{k} \geq r_{k-1} \geq \cdots \geq r_1 \geq 0$, and  
$1 \leq p \leq 2k+1$. 
The former states have rank~$1$ and the latter have rank~$p$.  
Each number $r_i$, for $i = k, k-1, \dots, 1$, is referred to as 
the value of the $i$-th register in the state. 
Steven owns all states 
$\left(v, \seq{r_{k}, r_{k-1}, \dots, r_1}\right)$ and the owner
of vertex~$v$ in~$\Gc$ is the owner of states
$\left(v, \seq{r_{k}, r_{k-1}, \dots, r_1}, p\right)$ for
every~$p$.  
How the game is played by Steven and Audrey is determined by the
available moves:
\begin{itemize}
\item
  at every state 
  $\left(v, \seq{r_{k}, r_{k-1}, \dots, r_1}\right)$, 
  Steven picks $i$, such that $0 \leq i \leq k$, and \emph{resets} 
  registers $i, i-1, i-2, \dots, 1$, leading to state 
  $\left(v, 
  \seq{r'_{k}, \dots, r'_{i+1}, r'_i, 0, \dots, 0}, p\right)$ 
  of rank~$p$ and with updated register values, where:
  \[
  p \: = \:
  \begin{cases}
    2i & 
    \text{if $i \geq 1$ and $\max\left(r_i, \pi(v)\right)$ is even}, 
    \\
    2i+1 & 
    \text{if $i = 0$, or if $i \geq 1$ and 
      $\max\left(r_i, \pi(v)\right)$ is odd}; 
  \end{cases}
  \]
  $r'_j = \max\!\left(r_j, \pi(v)\right)$ for $j \geq i+1$, 
  and $r'_i = \pi(v)$; 

\item
  at every state 
  $\left(v, \seq{r_{k}, r_{k-1}, \dots, r_1}, p\right)$, 
  the owner of vertex~$v$ in~$\Gc$ picks an edge $(v, u)$ in~$\Gc$,
  leading to 
  state~$\left(u, \seq{r_{k}, r_{k-1}, \dots, r_1}\right)$ of
  rank~$1$ and with unchanged register values. 
\end{itemize}
For example, at state 
$\left(v, \seq{9, 6, 4, 4, 3}\right)$ of rank~$1$, if the priority
$\pi(v)$ of vertex~$v$ is~$5$ and 
Steven picks~$i = 3$, this leads to state 
$\left(v, \seq{9, 6, 5, 0, 0}, 7\right)$ of rank~$2i+1 = 7$ because 
$\max\!\left(r_3, \pi(v)\right) = \max(4, 5) = 5$ is odd, 
$r'_4 = \max(r_4, \pi(v)) = \max(6, 5) = 6$, and 
$r'_3 = \pi(v) = 5$. 

Observe that the first components of states on every cycle in
game~$\Reg{k}{\Gc}$ form a (not necessarily simple) cycle in parity
game~$\Gc$;
we call it the cycle in~$\Gc$ \emph{induced} by the cycle
in~$\Reg{k}{\Gc}$.   
If a cycle in~$\Reg{k}{\Gc}$ is even
(that is, the highest state rank on it is even)
then the induced cycle in~$\Gc$ is also even.
Lehtinen~\cite[Lemmas~3.3 and~3.4]{Leh18} has shown that a vertex~$v$
is in the largest Steven dominion in~$\Gc$ if and only if there is a
positive integer~$k$ such that a state $\tpl{v, \overline{r}}$, for
some register values~$\overline{r}$ is in the largest Steven dominion
in~$\Reg{k}{\Gc}$.  
Lehtinen and Boker~\cite[a comment after Definition~3.1]{LB20} have
further clarified that for every~$k$, if a player has a dominion
strategy in~$\Reg{k}{\Gc}$ from a state whose first component is a
vertex~$v$ in~$\Gc$, then they also have a dominion strategy
in~$\Reg{k}{\Gc}$ from every state whose first component is~$v$.
This allows us to say without loss of rigour that a vertex~$v$
in~$\Gc$ is in a dominion in~$\Reg{k}{\Gc}$. 

By defining the 
\emph{(Steven) register number}~\cite[Definition~3.5]{Leh18} 
of a parity game~$\Gc$ to be the smallest number~$k$ such that  
all vertices $v$ in the largest Steven dominion in~$\Gc$ are in a
Steven dominion in~$\Reg{k}{\Gc}$, 
and by proving the $1 + \lg n$ upper bound on the register number of
every $(n, d)$-small parity game~\cite[Theorem~4.7]{Leh18}, Lehtinen
has contributed a novel quasi-polynomial algorithm for solving parity
games, adding to those by Calude et al.~\cite{CJKLS17} and
Jurdzi\'nski and Lazi\'c~\cite{JL17}.

Lehtinen~\cite[Definition~4.8]{Leh18} has also considered the concept
of a Steven \emph{defensive dominion strategy} in a
$k$-register game (for brevity, we call it a $k$-defensive strategy): 
it is a Steven dominion strategy on a set of states
in~$\Reg{k}{\Gc}$ in which there is no state of rank~$2k+1$. 
Alternatively, the same concept can be formalized by defining the
\emph{defensive $k$-register game} $\Def{k}{\Gc}$, which is played
exactly like the $k$-register game~$\Reg{k}{\Gc}$, but in which Audrey
can also win just by reaching a state of rank~$2k+1$.
Note that the game $\Def{k}{\Gc}$ can be thought of as having the
winning criterion for Steven as being a conjunction of a parity
and a safety criteria, and the winning criterion for Audrey
as a disjunction of a parity and a reachability criteria.
Routine arguements allow to extend positional determinacy from parity
games to such games with combinations of parity, and safety or
reachability winning criteria.

We follow Lehtinen~\cite[Definition~4.9]{Leh18} by defining the 
\emph{(Steven) defensive register number} of a Steven dominion~$D$
in~$\Gc$ as the smallest number~$k$ such that Steven has a defensive
dominion strategy in~$\Reg{k}{\Gc}$ on a set of states that 
includes all $\tpl{v, \seq{r_{k}, \dots, r_1}}$ for $v \in D$, 
and such that  $r_{k}$ is an even number at least as large as every
vertex priority in~$D$. 
We propose to call it the \emph{Lehtinen number} of a Steven dominion
in~$\Gc$ to honour Lehtinen's insight that led to this---as we argue
in this work---fundamental concept. 
We also define the Lehtinen number of a vertex in~$\Gc$ to be the
smallest Lehtinen number of a Steven dominion in~$\Gc$ that includes
the vertex, and the Lehtinen number of a parity game to be the
Lehtinen number of its largest Steven dominion. 
We also note that the register and the Lehtinen numbers of a parity
game nearly coincide (they differ by at most one), and hence the
conclusions of our analysis of the latter also apply to the former.

\begin{lemma}
\label{lem:Strahler-bounds-Lehtinen}
  The Lehtinen number of a parity game is 
  no larger than 
  its Strahler number. 
\end{lemma}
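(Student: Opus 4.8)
The plan is to prove, by induction on the even number~$d$, a claim of the following shape: whenever $D$ is a trap for Audrey in~$\Gc$, $d$ bounds all priorities in~$D$, and $\Hc$ is a Steven $d$-attractor decomposition of $\Gc \cap D$ with $k = \Strah{\Hc}$, Steven has a Steven dominion strategy in~$\Reg{k}{\Gc}$ on a set of states that contains every $\tpl{v, \seq{r_{k}, \dots, r_1}}$ with $v \in D$ and $r_k$ an even number at least~$d$, and on which no state of rank~$2k+1$ ever occurs. Such a strategy is by definition a $k$-defensive Steven dominion strategy of exactly the shape required by the Lehtinen number, so applying the claim to the largest Steven dominion~$D$ (a trap for Audrey that carries a Steven attractor decomposition by Theorem~\ref{thm:attractor-decompositions-of-largest-dominia}) and to an attractor decomposition of $\Gc \cap D$ of least Strahler number~$k$ shows that the Lehtinen number of~$D$, and hence that of~$\Gc$, is at most $k \le \Strah{\Gc}$.

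For the inductive step, write $\Hc = \seq{A, (S_1, \Hc_1, A_1), \dots, (S_\ell, \Hc_\ell, A_\ell)}$ and $k_i = \Strah{\Hc_i}$; the inductive definition of the Strahler number of a tree gives $k_i \le k$ for every~$i$, with at most one index for which equality holds. For his \emph{move} choices in~$\Reg{k}{\Gc}$, Steven follows a Steven dominion strategy~$\sigma$ on~$D$ realizing~$\Hc$ (obtained as in the proof of Proposition~\ref{prop:decomposition-dominion-even}): on~$A$ it is a reachability strategy to the priority-$d$ vertices, on each~$A_i$ a reachability strategy to~$S_i$, and on each~$S_i$ a Steven dominion strategy realizing~$\Hc_i$; this alone guarantees that the play stays in states over~$D$ and that every cycle it induces in~$\Gc$ is even. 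For his \emph{reset} choices, Steven resets register~$k$ at every state whose vertex has priority~$d$, while inside each subgame $\Gc \cap S_i$ he plays the strategy furnished by the induction hypothesis for~$\Hc_i$, touching only the registers $1, \dots, k_i$: if $k_i < k$, register~$k$ stays put during that sojourn; if $k_i = k$, register~$k$ is reused as the top register of the inductive strategy for~$\Hc_i$, which is possible because no priority-$d$ vertex ever occurs inside~$S_i$.

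To verify the conclusion, first note that resetting register~$k$ is always safe: all priorities are at most~$d$, register~$k$ is reset to the value~$d$ at each priority-$d$ vertex and only $\max$-bumped in between (so it stays at most~$d$), and inside an~$S_i$ with $k_i = k$ it is likewise at most~$d-2$ after its first reset there; hence $\max(r_k, \pi(v))$ equals~$d$ (or~$d-2$ in the latter case) whenever Steven resets register~$k$, which is even, so rank~$2k+1$ is never produced. Next, every cycle in~$\Reg{k}{\Gc}$ has even highest rank: a cycle meeting a priority-$d$ vertex carries a state of rank~$2k$, the largest rank ever produced, so its highest rank is~$2k$; a cycle avoiding priority~$d$ must, by the reachability properties of~$\sigma$, avoid~$A$ and stay out of every $A_i \setminus S_i$, hence lies---by a routine argument about the structure of~$\sigma$ along an attractor decomposition---within a single~$S_i$, where it is consistent with the inductive strategy and so has even highest rank by the induction hypothesis.

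The step I expect to be the main obstacle is pinning down the precise invariant on the register values that makes the induction close---in particular, controlling the low-order registers at the interfaces between the regions~$A$, the~$A_i \setminus S_i$, and the~$S_i$, and around the vertices of odd priority~$d-1$, which can occur in~$A$ and in the~$A_i$ but, crucially, in no~$S_i$. Passing such a vertex can leave a register holding the odd value~$d-1$, producing a transient odd rank when a deeper register is later reset; the argument has to show that this never lands on a cycle (because any cycle through the polluting region also passes through a higher-priority vertex whose reset re-establishes an even value before the cycle can close) and that the induction hypothesis can still be invoked at each~$S_i$ notwithstanding such pollution on entry. Granting this, the Strahler number makes the register count tight: in the tie case $k = 1 + \max_i k_i$ every child has strictly smaller Strahler number, so register~$k$ is genuinely dedicated to priority~$d$ and never claimed by a subdecomposition; in the unique-maximum case the single heaviest child reuses register~$k$ one priority level down; so along every root-to-leaf branch of~$T_{\Hc}$ the register index decreases exactly at the nodes where the Strahler number decreases, and $k = \Strah{\Hc}$ registers always suffice.
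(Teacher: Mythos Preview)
Your approach is the paper's: induct on the structure of~$\Hc$, let Steven follow the reachability/dominion strategies of~$\Hc$ for his vertex moves, reset register~$k$ at the priority-$d$ vertices, and inside each~$S_i$ delegate to the inductive substrategy on registers $1,\dots,k_i$, reusing register~$k$ only in the at-most-one child with $k_i=k$. The device the paper uses to close exactly the obstacle you flag is to strengthen the induction hypothesis so that the constructed strategy is \emph{$\Gc$-positional}: which register Steven resets and which edge he picks depend only on the current vertex, not on the register contents. This lets the inductive substrategy~$\sigma_i$ be lifted verbatim to a strategy~$\tau_i$ defined from \emph{every} state over~$S_i$, irrespective of polluted register values; and any cycle in the $k$-register game whose vertex components lie entirely in~$S_i$ has its low register values determined by the cycle itself (never-reset registers are constant on the cycle by monotonicity of max-bumping, and reset registers are wiped to priorities at most~$d-2$), so its sequence of ranks coincides with that of a sister cycle whose top register is set to a large even value and which therefore sits inside the inductive dominion~$\Omega_i$, hence is even.

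One slip to fix: your claim that $\max(r_k,\pi(v))$ equals $d$ ``or $d-2$ in the latter case'' at every reset of register~$k$ is wrong for resets inside~$S_m$ after the first one---there the value may be any priority occurring in~$S_m$, possibly odd. Defensiveness is not rescued by the parity of~$r_k$ but by the induction hypothesis: the \emph{first} reset of register~$k$ inside~$S_m$ happens with $r_k\geq d$ still even (rank~$2k$), and by $\Gc$-positionality the resulting state is identical to one reached under~$\sigma_m$ from a state in~$\Omega_m$; it therefore lies in~$\Omega_m$, and from then on the inductive defensiveness guarantee for~$\sigma_m$ applies directly.
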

The arguments used in our proof of this lemma are similar to those
used in the proof of the main result of
Lehtinen~\cite[Theorem~4.7]{Leh18}.  
Our contribution here is to pinpoint the Strahler number of an
attractor decomposition as the structural parameter of a dominion that
naturally bounds the number of registers used in Lehtinen's
construction of a defensive dominion strategy.

\begin{proof}[Proof of Lemma~\ref{lem:Strahler-bounds-Lehtinen}]
  Consider a parity game $\Gc$ and let $d$ be the least even integer
  no smaller than any of the priority in~$\Gc$. Consider a Steven
  d-attractor decomposition $\Hc$ of $\Gc$ of Strahler number $k$. We
  construct a defensive $k$-register strategy for Steven
  on~$\Reg{k}{\Gc}$. The strategy is defined inductively on the height
  of $\Tc_\Hc$, and has the additional property of being
  \textit{$\Gc$-positional} in the following sense: if $\left(\left(v,
  \seq{r_{k}, \dots, r_1}\right), \left(v, \seq{r'_{k}, \dots, r'_1},p
  \right) \right)$ is a move then the register reset by Steven only
  depends on $v$, not on the values in the registers. Similarly, if
  $\left(\left(v, \seq{r_{k}, \dots, r_1}, p\right), \left(u,
  \seq{r_{k}, \dots, r_1} \right) \right)$ is a move and $v$ is owned
  by Steven, $u$ only depends on $v$ and not on the values of the
  registers or $p$.  

  \subparagraph{Strategy for Steven.}
  If $\Hc = \seq{A,\emptyset}$, then $\Gc$ consists of the set of
  vertices of priority $d$ and of its Steven attractor. In this case,
  Steven follows the strategy induced by the reachability strategy in
  $A$ to the set of vertices of priority $d$, only resetting register
  $r_1$ immediately after visiting a state with first component a
  vertex of priority $d$ in~$\Gc$.  
  More precisely, the Steven defensive strategy is defined with the
  following moves: 
  \begin{itemize}
  \item 
    $\left( \left(v, \seq{r_1}\right), \left(v, \seq{r_1}, 1\right)
    \right)$ if $v$ is not a vertex of priority $d$ in~$\Gc$; 
  \item 
    $\left( \left(v, \seq{r_1}\right), \left(v, \seq{r'_1}, 2\right)
    \right)$ if $v$ is a vertex of priority $d$ in~$\Gc$ and $r'_1 =
    \max(r_1,d)$ is even; 
  \item 
    $\left( \left(v, \seq{r_1}\right), \left(v, \seq{r'_1}, 3\right)
    \right)$ if $v$ is a vertex of priority $d$ in~$\Gc$ and $r'_1 =
    \max(r_1,d)$ is odd (we state this case for completeness but this
    will never occur); 
  \item 
    $\left(\left(v, \seq{r_1}, p\right), \left(u, \seq{r_1}\right)
    \right)$ where $(v,u)$ belongs to the Steven reachability strategy
    from $A$ to the set of vertices of priority $d$ in~$\Gc$. 
  \end{itemize} 
  Note that this strategy is $\Gc$-positional.

  Suppose now that 
  $\Hc \: = \: \seq{A, (S_1, \Hc_1, A_1), \dots,
    (S_\ell, \Hc_\ell, A_\ell)}$ 
  and that it has Strahler number~$k$. 
  For all $i = 1, 2, \dots, \ell$, let $k_i$ be the Strahler number
  of~$\Hc_i$.  
  By induction, for all $i$, we have a Steven defensive $k_i$-register
  strategy $\sigma_i$, which is $(\Gc \cap S_i)$-positional, on a set
  of states $\Omega_i$ in~$\Reg{k_i}{\Gc \cap S_i}$ including all the
  states $\left(v, \seq{r_{k_i}, \dots, r_1}\right)$ for $v \in S_i$ and
  $r_{k_i}$ an even number at least as large as every vertex  
  priority in~$S_i$. 
Let $\Gamma_i$ be the set of states in~$\Reg{k}{\Gc \cap S_i}$ defined as all the states $\left(v, \seq{d, r_{k-1}, \dots, r_1}\right)$ for $v \in S_i$ if $k_i \neq k$ and as the union of the states $\left(v, \seq{d, r_{k-1}, \dots, r_1}\right)$ for $v \in S_i$ and $\Omega_i$, otherwise.

The strategy $\sigma_i$ induces a strategy
on $\Gamma_i$ in~$\Reg{k}{\Gc \cap S_i}$ by simply ignoring registers $r_{k_i+1}, \ldots, r_{k}$, and using $(\Gc \cap S_i)$-positionality to define moves from the states not in $\Omega_i$. More precisely, in a state $\left(v, \seq{r_{k}, \ldots, r_1}\right)$, Steven resets register $j$ if and only if register $j$ is reset in a state $\left(v, \seq{r'_{k_i}, \dots, r'_1}\right)$ of $\Omega_i$ according to $\sigma_i$. This is well defined by $(\Gc \cap S_i)$-positionality. Similarly, we add moves $\left(\left(v, \seq{r_{k}, \dots, r_1}, p\right), \left(u, \seq{r_{k}, \dots, r_1} \right) \right)$ to the strategy if and only if there is a move $\left(\left(v, \seq{r'_{k_i}, \dots, r'_1},p'\right), \left(u, \seq{r'_{k_i}, \dots, r'_1}\right) \right)$ in $\sigma_i$. This is again well-defined by $(\Gc \cap S_i)$-positionality.

This strategy is denoted by $\tau_i$. Note that $\tau_i$ is a defensive $k$-register strategy on $\Gamma_i$, which is $\Gc$-positional.

The Steven defensive strategy in~$\Reg{k}{\Gc}$ is defined by the following moves, where $S$ denotes the set of vertices of priority $d$ in $\Gc$:
  \begin{itemize}
  \item On the set of states with first component a vertex of $A_i\setminus S_i$, the moves are given by $\tau_i$.
  \item On the set of states with first component a vertex of $A\setminus S$, Steven uses the strategy induced by the reachability strategy from $A_i$ to $S_i$, without resetting any registers.
  \item On $\Reg{k}{\Gc \cap (A\setminus S)}$, Steven uses the strategy induced by the reachability strategy from $A$ to $S$, without resetting any registers.
  \item On the set of states with first component a vertex of $S$, 
  \begin{itemize}
  \item $\left( \left(v, \seq{r_k, \ldots, r_1}\right), \left(v, \seq{d, 0, \ldots, 0}, p\right) \right)$ where $v$ is a vertex in $S$ and $p=2k$ if $\max(r_k,d)$ is even and $p=2k+1$ otherwise. 
  \item $\left( \left(v, \seq{r_k, \ldots, r_1},p\right), \left(u, \seq{r_k, \ldots, r_1}\right) \right)$ for some uniquely chosen $u$ such that $(v,u)$ in $E$ if $v$ is owned by Steven and for all $u$ such that $(v,u)$ in $E$ if $v$ is owned by Audrey.
  \end{itemize}
  \end{itemize} 
Observe that this strategy is $\Gc$-positional.

\subparagraph{Correctness of the Strategy.} 
We prove now that the strategy defined above is indeed a defensive
$k$-register strategy. 
We proceed by induction on the height of $\Tc_\Hc$ and define a set of
states $\Gamma$, including all the states 
$\left(v, \seq{d,r_{k-1}, \dots, r_1}\right)$ such that $v$ is a
vertex of $\Gc$.    

  \textit{Base Case:}
  If the height of $\Tc_\Hc$ is $0$ and $\Hc = \seq{A,\emptyset}$, let $\Gamma$ be the set of states $\left(v, \seq{r_1}\right)$ and $\left(v, \seq{r_1} , p\right)$ with $v$ a vertex of $\Gc$, $1\leq r_1 \leq d$ and $p$ being either $1$ or $2$. It is easy to see that the strategy defined above is a defensive dominion strategy on this set.

  \textit{Inductive step:} If $\Hc \: = \: \seq{A, (S_1, \Hc_1, A_1), \dots, (S_\ell, \Hc_\ell, A_\ell)}$ with Strahler number $k$ and $k_i$ being the Strahler number of $\Hc_i$ for all $i$ (note that $k_i \leq k$ for all $i$, and by definition of Strahler number, there is at most one $m$ such that $k_m=k$), we define $\Gamma$ to be the set comprising the union of the $\Gamma_i$ and all the states of the form $\left(v, \seq{r_{k}, \ldots, r_1}\right)$ and $\left(v, \seq{r_{k}, \ldots , r_1} , p\right)$ with $v$ a vertex of $(A_i\setminus S_i) \cup A$ and $1\leq p \leq 2k$. 

  \textit{Case 1: } For each $i$, $k_i<k$. 
  
  We first show that $\Gamma$ is a trap for Audrey for the strategy defined above, showing that rank $2k+1$ can never be reached (implying that the strategy is defensive). This comes from the fact that the register of rank $k$ is only reset in a state $\left(v, \seq{r_k, \ldots, r_1}\right)$ with $v$ in $S$. Since $\max(r_k, d)= d$ is even then this leads to a state $\left(v, \seq{d,0, \ldots, 0}, 2k\right)$. Otherwise, register $k$ is never reset, so a state with rank $2k+1$ cannot be reached.

  Consider now any cycle in $\Reg{k}{\Gc}$ with moves restricted to the strategy constructed above. If this cycle contains a state whose first component is a vertex of $S$, then as explained above, the highest rank in the cycle is $2k$. Otherwise, the cycle is necessarily in $\Reg{k}{\Gc\cap S_i}$ for some $i$. By induction, $\tau_i$ is winning and so the cycle is even.
  
  \textit{Case 2: } There is a unique $m$ such that $k_m = k$. 
  
  We first show that a state of rank $2k+1$ is never reached. Observe that register $k$ is reset in two places: (1) immediately after a state with first component a vertex of $S$ is visited, (2) if register $k$ is reset by $\tau_m$. In the first case, similarly as shown above, a state of rank $2k$ is reached. In the second case, register $k$ is either reset in a state $\left(v, \seq{d, r_{k-1}, \ldots, r_1}\right)$, and similarly as above, a state of rank $2k$ is reached, or in a state of $\Omega_i$. In this case, as $\tau_i$ is defensive on $\Omega_i$ by induction, a state of rank $2k+1$ cannot be reached, and the highest rank that can be reached is $2k$.
  
  Proving that every cycle is even is similar to the previous case.
\end{proof}

\section{Strahler-Optimal Attractor Decompositions}

In this section we prove that every parity game whose Lehtinen number
is~$k$ has an attractor decomposition of Strahler number at most~$k$. 
In other words, we establish the Lehtinen number upper bound on the
Strahler number, which together with
Lemma~\ref{lem:Strahler-bounds-Lehtinen} provides a positive answer to 
Question~\ref{question:Strahler-eq-register}. 

\begin{theorem}
\label{thm:Lehtinen-bounds-Strahler}
  The Strahler number of a parity game is no larger than its Lehtinen
  number. 
\end{theorem}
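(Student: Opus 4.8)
\emph{Proof plan.}
The plan is to prove only the Steven half of the statement, the Audrey half being symmetric: if the largest Steven dominion~$D$ of~$\Gc$ has Lehtinen number~$k$, then $\Gc \cap D$ has a Steven $d$-attractor decomposition of Strahler number at most~$k$, where $d$ is the least even upper bound on the priorities in~$\Gc$. For the induction I would prove the stronger claim that \emph{whenever $R$ is a trap for Audrey in a game with priorities at most~$d$ and Steven has a defensive $k$-register dominion strategy on the canonical set of states over~$R$, the subgame $\Gc \cap R$ has a Steven $d$-attractor decomposition of Strahler number at most~$k$}, proceeding by induction on the even number~$d$. The base case $d = 0$ is immediate, as the only decomposition has the trivial tree and $k \ge 1$.

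For the inductive step, fix a defensive $k$-register dominion strategy~$\sigma$ on~$R$. The top component of the decomposition is forced: $A$ is the Steven attractor to the priority-$d$ vertices of $\Gc \cap R$, and $\Gc_1 = (\Gc \cap R) \setminus A$ is a trap for Audrey with priorities at most $d - 1$. It then remains to split~$\Gc_1$ into Steven attractors~$A_i$ to traps~$S_i$ of priorities at most $d - 2$, carrying sub-decompositions~$\Hc_i$, so that the decomposition $\Hc$ thus built has a tree $T_{\Hc} = \seq{T_{\Hc_1}, \dots, T_{\Hc_\ell}}$ of Strahler number at most~$k$. Since the induction hypothesis applied to each~$\Gc \cap S_i$ already gives $\Strah{\Hc_i} \le k$, by the inductive definition of the Strahler number it suffices to arrange that all but \emph{at most one} of the games~$\Gc \cap S_i$ carry a defensive $(k-1)$-register dominion strategy: for those the induction hypothesis yields $\Strah{\Hc_i} \le k-1$, so at most one~$T_{\Hc_i}$ attains Strahler number~$k$, and hence $\Strah{\Hc} \le k$.

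The heart of the argument, and the step I expect to be the main obstacle, is exactly this claim: the defensive $k$-register dominion strategy on~$R$ localizes to the pieces of~$\Gc_1$ so that only one of them genuinely needs the top register. The mechanism I would use tracks the top register~$r_k$ under~$\sigma$: since $\sigma$ is defensive, resetting register~$k$ always produces rank~$2k$ and never~$2k+1$, so along every play consistent with~$\sigma$ the stretches between consecutive resets of register~$k$ lie in subgames on which register~$k$ is frozen and only $k - 1$ registers are active. Using the register-$k$-reset behaviour of~$\sigma$, one would carve out of~$\Gc_1$ the single distinguished piece on which register~$k$ is reset---morally, the part of~$\Gc_1$ whose register-game image can reach a rank-$2k$ state---leaving every other piece inside one freeze-interval of register~$k$, where deleting register~$k$ from~$\sigma$ yields a defensive $(k-1)$-register dominion strategy. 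Making this precise is delicate because~$\sigma$ lives in the inflated game~$\Reg{k}{\Gc}$ rather than in~$\Gc$; because a trap for Audrey inside a dominion need not itself be a dominion; and because the register-$k$-reset points of~$\sigma$ need not coincide with the priority-$d$ vertices of~$\Gc$. Reconciling these points---ideally after first normalising~$\sigma$ so that it resets register~$k$ only at outermost positions---is where the real work lies, and it leans, as in the proof of Lemma~\ref{lem:Strahler-bounds-Lehtinen}, on the fact that the register contents $\seq{r_k, \dots, r_1}$ form a weakly decreasing sequence and thus trace a branch of a tree of height~$k$.

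As a sanity check and an alternative handle on the Strahler bound, I would verify the clean auxiliary fact that the subgame of~$\Reg{k}{\Gc}$ induced by the states carrying Steven's defensive $k$-register dominion strategy is a Steven dominion all of whose ranks lie in $\eset{1, \dots, 2k}$, and that \emph{any} Steven attractor decomposition of such a game has a tree of height at most~$k$: the innermost recursion level, on priorities at most~$2$, can only produce the trivial tree, since a valid $2$-attractor decomposition of a game with no priority-$0$ vertex forces the whole game to equal the attractor to its priority-$2$ vertices, and propagating this upward caps the height at~$k$ rather than~$k+1$. By Proposition~\ref{prop:Strahler-small} this subgame then has Strahler number at most~$k$, and the localization step above is exactly what transfers such a decomposition down to $\Gc \cap R$ along the projection $\tpl{v, \seq{r_k, \dots, r_1}} \mapsto v$.
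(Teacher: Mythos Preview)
Your plan names the right target---build the decomposition level by level so that at most one child subtree reaches Strahler number~$k$---but the mechanism you propose for isolating the distinguished piece is where it breaks, and not merely for the bookkeeping reasons you list. Whether a defensive $k$-register strategy~$\sigma$ resets register~$k$ is a property of a \emph{state} $\tpl{v,\bar r}$, not of a vertex: it can depend on~$\bar r$ and hence on Audrey's history, so its projection to~$\Gc_1$ need not be a trap for Audrey with priorities at most~$d-2$, and there is no reason the complement should split into pieces each carrying a defensive $(k-1)$-register strategy. Normalising~$\sigma$ to be $\Gc$-positional does not obviously help---the only such normalisation the paper has is the one in Lemma~\ref{lem:Strahler-bounds-Lehtinen}, which already \emph{starts} from an attractor decomposition---and even then the ``freeze intervals'' of register~$k$ are segments of plays, not subgames, so two plays through the same vertex can sit in different intervals. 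Your sanity-check route runs into the same wall: the defensive-dominion subgame of~$\Reg{k}{\Gc}$ does have an attractor decomposition of tree height at most~$k$, but the projection $\tpl{v,\bar r}\mapsto v$ takes neither attractors to attractors nor Audrey traps to Audrey traps, so ``transferring the decomposition down'' is the whole problem rather than a closing formality.

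The paper sidesteps this by taking the dual viewpoint and never dissecting a Steven strategy at all. It builds the decomposition greedily---at each stage choosing $S_0$ to be the largest Steven dominion in $\Def{m}{\Gc'_0}$ for the \emph{smallest} $m$ making this nonempty, where $\Gc'_0$ is $\Gc_1$ with the Audrey attractor to priority-$(d{-}1)$ vertices removed---and records, as an invariant it calls \emph{offensive optimality}, the Audrey dominion strategies on $\Def{\Strah{\Hc_i}-1}{\Gc'_i}$ and on $\Def{\Strah{\Hc_i}}{\Gc'_i\setminus S_i}$ that this minimality guarantees. A separate lemma then splices these Audrey witnesses into a single winning strategy for Audrey in $\Def{\Strah{\Hc}-1}{\Gc}$ from a state with top register~$d$, which forces the Lehtinen number to be at least~$\Strah{\Hc}$. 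Steven's $k$-register strategy enters only as the crude upper bound $m\le k$; it is the Audrey side that certifies the Strahler number, which is precisely what your approach lacks.
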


When talking about strategies in parity games in
Section~\ref{section:tuning}, we only considered 
positional strategies, for which it was sufficient to verify the
parity criterion on (simple) cycles.
Instead, 
we explicitly consider the parity criterion on infinite paths here,
which we find more convenient to establish properties of Audrey
strategies in the proof of
Theorem~\ref{thm:Lehtinen-bounds-Strahler}. 

First, we introduce the concepts of \emph{tight} and
\emph{offensively optimal} attractor decompositions.

\begin{definition} 
\label{def:Tight}
  A Steven $d$-attractor decomposition $\Hc$ of $\Gc$ is \emph{tight}
  if Audrey has a winning strategy from at least one state in
  $\Def{\Strah{\Hc}-1}{\Gc}$ in which the value of register
  $\Strah{\Hc}-1$ is~$d$. 
\end{definition}

By definition, the existence of a tight Steven $d$-attractor
decomposition on a parity game implies that the Lehtinen number of the
game is at least its Strahler number, from which 
Theorem~\ref{thm:Lehtinen-bounds-Strahler} follows. 
Offensive optimality of an attractor decomposition, the concept we
define next, may seem less natural and more technical than tightness,
but it facilitates our proof that every game has a tight attractor
decomposition.

\begin{definition} 
  \label{def:Opt}
  Let   
  $\Hc = \seq{A,(S_1, \Hc_1, A_1), \ldots,(S_\ell, \Hc_\ell, A_\ell)}$
  be a Steven $d$-attractor decomposition, let games $\Gc_i$ for 
  $i = 1, 2, \dots, \ell$ be as in the definition of an attractor
  decomposition, let $A_i'$ be the Audrey attractor of the set of
  vertices of priority $d-1$ in $\Gc_i$, and let 
  $\Gc_i' = \Gc_i\setminus A_i'$.  
  We say that $\Hc$ is \emph{offensively optimal} if for every 
  $i = 1, 2, \ldots, \ell$, we have:  
  \begin{itemize}
  \item 
    Audrey has a dominion strategy on $\Def{\Strah{\Hc_i}-1}{\Gc'_i}$; 
  \item 
    Audrey has a dominion strategy on
    $\Def{\Strah{\Hc_i}}{\Gc_i'\setminus S_i}$. 
  \end{itemize} 
\end{definition}

Proving that every offensively optimal Steven attractor
decomposition is tight (Lemma~\ref{lemma:AudreyStrategyTight}), and
that every Steven dominion in a parity game has an offensively optimal 
Steven attractor decomposition (Lemma~\ref{lemma:TightStrahler}), will 
complete the proof of Theorem~\ref{thm:Lehtinen-bounds-Strahler}. We first give two propositions that will be useful in the proofs.

\begin{proposition}
   \label{prop:offensivetodominion}
   For every parity game $\Gc$ and non negative integer $k$, if Audrey has a dominion strategy from every state of $\Def{k}{\Gc}$ then Audrey has a dominion strategy on $\Reg{k}{\Gc}$. 
 \end{proposition}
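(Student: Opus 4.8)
The plan is to compare the defensive game $\Def{k}{\Gc}$ with the full register game $\Reg{k}{\Gc}$ and argue that a collection of defensive dominion strategies, one from each state, can be glued together into a single dominion strategy on all of $\Reg{k}{\Gc}$. The key observation is that the only difference between the two games is that in $\Def{k}{\Gc}$ Audrey additionally wins by reaching a state of rank $2k+1$, whereas in $\Reg{k}{\Gc}$ such a state is merely a state of the highest (odd) rank. So a Steven defensive dominion strategy — which by definition avoids rank $2k+1$ entirely — is in particular a Steven dominion strategy in $\Reg{k}{\Gc}$ on the same set of states. Thus the real content is to turn the hypothesis "Audrey has a dominion strategy from \emph{every} state of $\Def{k}{\Gc}$" into "Steven has \emph{no} dominion strategy from any state, hence (by determinacy) a Steven dominion strategy can only live on the empty set" — which would make the conclusion vacuous — unless we instead read the conclusion as being about Audrey.

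Re-reading the statement, the intended reading must be: if Audrey has a dominion strategy from every state of $\Def{k}{\Gc}$, then Audrey has a dominion strategy on \emph{all of} $\Reg{k}{\Gc}$ (i.e.\ the whole state space of $\Reg{k}{\Gc}$ is an Audrey dominion). So first I would invoke positional determinacy for the combined parity/reachability winning conditions (already granted in the excerpt) to get, for each state $s$ of $\Def{k}{\Gc}$, a positional Audrey dominion strategy $\sigma_s$; since Audrey dominions are closed under union, there is a single positional Audrey strategy $\sigma$ that is a dominion strategy from every state of $\Def{k}{\Gc}$ simultaneously — equivalently, Audrey wins the defensive game from every state. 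Next I would observe that $\Def{k}{\Gc}$ and $\Reg{k}{\Gc}$ have literally the same arena (states and moves); only the winning condition differs, and Audrey's winning condition in $\Def{k}{\Gc}$ (parity on ranks, \emph{or} reach rank $2k+1$) is weaker than, hence implied by, nothing automatically — so this direction needs care.

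The crux is to show that winning the defensive game everywhere implies winning the ordinary register game everywhere. Here I would use the recursive/self-similar structure of $\Reg{k}{\Gc}$: a play of $\Reg{k}{\Gc}$ that is losing for Audrey (i.e.\ the highest rank seen infinitely often is even, say $2j$ with $j \le k$) visits states of rank $> 2j$ only finitely often, so after some point it stays forever in the subarena of states of rank at most $2j$; but on that subarena the parity condition on ranks in $\le 2j$ together with Audrey's defensive win (rank $2(j{+}1)+1$ never the issue) coincides with the defensive condition in a copy of $\Def{j}{\cdot}$ — so Audrey's everywhere-winning defensive strategy, restricted/relocated appropriately, would have had to prevent this even-$2j$ outcome. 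More cleanly: I would argue the contrapositive via Steven, showing that if Steven had a dominion strategy from some state in $\Reg{k}{\Gc}$, then Steven would have a \emph{defensive} dominion strategy from some state in $\Def{k}{\Gc}$ (by a "lowering $k$" argument — if Steven's winning plays see highest rank $2j < 2k+1$, view them in $\Def{j}{\Gc}$-like fashion, or if $j=k$ note rank $2k+1$ is never hit so the defensive condition is met outright), contradicting the hypothesis via determinacy.

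\textbf{The main obstacle} I expect is formalizing this "relocation to a lower register index" correspondence: matching a tail of a $\Reg{k}{\Gc}$-play whose high ranks have stopped appearing with a play of a defensive game $\Def{j}{\Gc}$, being careful about how register values $r_k,\dots,r_{j+1}$ behave once registers above index $j$ are never reset again (they are monotone non-decreasing and eventually constant, so the ``effective'' game from then on genuinely behaves like a $j$-register game with those top registers frozen). Establishing that this frozen-top-registers game is isomorphic to (a sub-game of) $\Def{j}{\Gc'}$ for an appropriate modified $\Gc'$, and that Audrey's global defensive strategy yields a winning strategy there, is the technical heart; the rest — invoking determinacy and closure of dominions under union — is routine.
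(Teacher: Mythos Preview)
Your contrapositive idea is valid in principle but you have massively overcomplicated it, and in doing so you have missed the paper's very short direct argument.

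The paper's proof is a one-line ``restart'' construction on the Audrey side, with no contrapositive and no determinacy. Audrey plays in $\Reg{k}{\Gc}$ as follows: she follows some $\tau_{s_0}$ (a winning strategy in $\Def{k}{\Gc}$ from the start state $s_0$); whenever a state of rank $2k+1$ is visited, she discards history and restarts with $\tau_{s}$ for the next state $s$ of rank $<2k+1$. Either rank $2k+1$ is seen infinitely often, in which case Audrey wins the parity condition in $\Reg{k}{\Gc}$ outright since $2k+1$ is the top (odd) rank; or after the last such visit she follows some fixed $\tau_s$ forever without ever reaching rank $2k+1$ again, and since $\tau_s$ wins $\Def{k}{\Gc}$ this tail must then satisfy the parity condition. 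That is the whole proof.

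Your route via Steven and the contrapositive also works, but the ``lowering $k$'' and ``frozen top registers'' machinery you flag as the main obstacle is entirely unnecessary. If Steven has a positional dominion strategy $\sigma$ on a set $D$ in $\Reg{k}{\Gc}$, then every $\sigma$-cycle in $D$ is even; since $2k+1$ is the maximal rank, no $\sigma$-cycle contains a rank-$2k+1$ state. Hence the set $D' \subseteq D$ of states from which no rank-$2k+1$ state is $\sigma$-reachable is non-empty (any play eventually enters and stays in it), $\sigma$ traps Audrey in $D'$, and $\sigma$ on $D'$ is a defensive dominion strategy in $\Def{k}{\Gc}$ --- contradicting the hypothesis via determinacy. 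No case split on $j$, no relocation to $\Def{j}{\Gc}$, no analysis of register values. Note also that your $j=k$ case as written is slightly off: Steven winning in $\Reg{k}{\Gc}$ does not mean rank $2k+1$ is \emph{never} visited, only not infinitely often; passing to the sub-dominion $D'$ as above is what actually gives you ``never''.
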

 
 \begin{proof}
   For every state $s$ of $\Def{k}{\Gc}$, Audrey has a winning strategy $\tau_s$ on
  $\Def{k}{\Gc}$ starting in~$s$. 
  We construct a dominion strategy for her on
  $\Reg{k}{\Gc}$: 
  after every visit to a state of rank $2k+1$, Audrey
  follows $\tau_s$, where $s$ is the first state that follows on the
  path and whose rank is smaller than $2k+1$. 
  This defines a dominion strategy on
  $\Reg{k}{\Gc}$.  
 \end{proof}

\begin{proposition}
   \label{prop:dominioni}
   If 
   $\Hc = \seq{A, (S_1, \Hc_1, A_1), \dots, 
     (S_{\ell}, \Hc_{\ell}, A_{\ell})}$ 
   is an offensively optimal Steven $d$-attractor decomposition, then for
   every $i = 1, 2, \dots, \ell$, we have that Audrey has a dominion
   strategy on $\Reg{\Strah{\Hc_i}-1}{\Gc_i}$ (and also a dominion
   strategy on $\Def{\Strah{\Hc_i}-1}{\Gc_i}$). 
 \end{proposition}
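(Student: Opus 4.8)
The plan is to combine the first bullet of offensive optimality (Definition~\ref{def:Opt}) with Proposition~\ref{prop:offensivetodominion}, and then lift an Audrey dominion strategy from the reachability-restricted subgame $\Gc_i'$ back up to the full subgame $\Gc_i$ by prepending reachability play to the priority-$(d-1)$ vertices. Fix $i$ and write $k = \Strah{\Hc_i}$ for brevity. By offensive optimality, Audrey has a dominion strategy on $\Def{k-1}{\Gc_i'}$; I would first argue that she in fact has a winning strategy from \emph{every} state of $\Def{k-1}{\Gc_i'}$, using positional determinacy of these combined parity/safety games (invoked just before Lemma~\ref{lem:Strahler-bounds-Lehtinen}) together with the clarification of Lehtinen and Boker that a dominion strategy from one state whose first component is a vertex $v$ gives one from every state with first component $v$. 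With that in hand, Proposition~\ref{prop:offensivetodominion} immediately yields an Audrey dominion strategy on $\Reg{k-1}{\Gc_i'}$.

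Next I would transfer this from $\Gc_i'$ to $\Gc_i$. Recall $\Gc_i' = \Gc_i \setminus A_i'$ where $A_i'$ is the Audrey attractor in $\Gc_i$ of the set of priority-$(d-1)$ vertices. The key observations are: (i) in $\Gc_i$ the highest priority is at most $d-1$, since $S_i$ and everything reachable in $\Gc_i$ has priorities $\leq d-1$ by construction of the decomposition at level $d$ (the priority-$d$ attractor $A$ was already removed); and (ii) a visit to a priority-$(d-1)$ vertex in $\Gc_i$ triggers a register reset that, with $k-1$ registers, produces a state of rank $2(k-1)+1 = 2k-1$, which is the highest rank in $\Reg{k-1}{\cdot}$ and is odd. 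So Audrey's strategy on $\Reg{k-1}{\Gc_i}$ is: while outside the states over $A_i'$, play her dominion strategy inherited from $\Reg{k-1}{\Gc_i'}$; from states over $A_i'$, follow the Audrey reachability strategy to the priority-$(d-1)$ vertices without resetting registers, and upon reaching such a vertex force rank $2k-1$. Any infinite play either eventually stays forever in the $\Gc_i'$ part, where it is even for Audrey by the inherited dominion strategy being winning — wait, I mean it is \emph{won by Audrey} (the induced cycle in $\Gc_i'$ is odd), or it visits $A_i'$ infinitely often and hence visits a priority-$(d-1)$ vertex infinitely often, so rank $2k-1$ occurs infinitely often and dominates, again winning for Audrey. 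This gives a dominion strategy on $\Reg{k-1}{\Gc_i}$.

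Finally, the parenthetical claim about $\Def{k-1}{\Gc_i}$ follows from the same strategy: in $\Def{k-1}{\Gc_i}$ Audrey wins both by the parity criterion \emph{and} by reaching rank $2k-1 = 2(k-1)+1$, the safety-violating rank; since the strategy above already guarantees that either Audrey's parity objective is met inside $\Gc_i'$ or rank $2k-1$ is reached, it is a fortiori a dominion strategy for Audrey on $\Def{k-1}{\Gc_i}$. The main obstacle I anticipate is the careful bookkeeping in step two: verifying that no register reset in the inherited $\Gc_i'$-play can ever manufacture rank $2k-1$ on its own (it cannot, because inside $\Gc_i'$ all priorities are $\leq d-2$, so register $k-1$ — which is pinned at the even value $d$ needed to dominate — is never reset there, exactly as in the analogous bookkeeping in the proof of Lemma~\ref{lem:Strahler-bounds-Lehtinen}), and that the reachability detours through $A_i'$ do not disturb the register values that the inherited strategy relies on. Matching conventions with Definition~\ref{def:Opt} and the defensive register game (in particular which register carries the value $d$, and the off-by-one between $\Reg{k-1}{\cdot}$ and $\Def{k-1}{\cdot}$) is where I would be most careful.
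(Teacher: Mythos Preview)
Your overall plan is the same as the paper's: combine the Audrey reachability strategy in $A'_i$ with the dominion strategy on $\Def{k-1}{\Gc'_i}$ supplied by the first bullet of offensive optimality, and use Proposition~\ref{prop:offensivetodominion} to pass between $\Def{k-1}{\cdot}$ and $\Reg{k-1}{\cdot}$. The paper does this in the order $\Def{k-1}{\Gc'_i} \rightsquigarrow \Def{k-1}{\Gc_i} \rightsquigarrow \Reg{k-1}{\Gc_i}$, which is slightly cleaner than your detour through $\Reg{k-1}{\Gc'_i}$, but that difference is cosmetic.

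There is, however, a real error in your analysis of the case where $A'_i$ is visited infinitely often. You write that Audrey plays ``without resetting registers'' and that ``upon reaching such a vertex [she can] force rank $2k-1$'', and you conclude that ``rank $2k-1$ occurs infinitely often and dominates''. In the Steven register game $\Reg{k-1}{\Gc_i}$, \emph{Steven} owns every state of the form $(v,\bar r)$ and alone decides which registers to reset; Audrey has no control over resets and cannot force any particular rank. Visiting a priority-$(d-1)$ vertex does not by itself produce rank $2k-1$: Steven may simply pick $i=0$ and produce rank~$1$. So your justification for this case does not stand. The correct reason this case is winning for Audrey is the observation the paper records just after defining $\Reg{k}{\Gc}$: if a cycle in the register game is even, then the induced cycle in~$\Gc$ is even. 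Contrapositively, since visiting $A'_i$ infinitely often forces the underlying play in $\Gc_i$ to see its maximal (odd) priority $d-1$ infinitely often, no play in $\Reg{k-1}{\Gc_i}$ (or in $\Def{k-1}{\Gc_i}$) consistent with Audrey's combined strategy can be even, regardless of Steven's reset choices. With this correction your argument goes through and coincides with the paper's.
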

 
 \begin{proof}
 Let $i$ in $\{1, 2, \dots, \ell\}$.
Consider the following strategy in $\Def{\Strah{\Hc_i}-1}{\Gc_i}$:
\begin{itemize}
\item 
  On the set of states 
  whose vertex components are in~$A'_i$, 
  Audrey follows a strategy induced by the 
  reachability strategy in $A'_i$ to a vertex of priority~$d-1$
  (picking any move if $v$ is of priority~$d-1$);  
\item 
  In states whose vertex component is in~$\Gc'_i$, 
  Audrey plays a $(k-1)$-register dominion strategy
  on~$\Def{\Strah{\Hc_i}-1}{\Gc'_i}$.  
  Such a strategy exists by the definition of offensive optimality.
\end{itemize}
This strategy is indeed an Audrey dominion strategy
on~$\Def{\Strah{\Hc_i}-1}{\Gc_i}$, because any play either visits a state whose
first component is a vertex in $A_i'$ infinitely often, or it
eventually remains in $\Def{\Strah{\Hc_i}-1}{\Gc'_i}$. 
In the former case, the play visits a state
whose first component is a vertex of priority $d-1$ infinitely often.  
In the latter case, the strategy is a dominion strategy
  on~$\Def{\Strah{\Hc_i}-1}{\Gc'_i}$. 

Finally, we use Proposition~\ref{prop:offensivetodominion} to turn this Audrey dominion strategy on~$\Def{\Strah{\Hc_i}-1}{\Gc_i}$ into an Audrey dominion strategy on~$\Reg{\Strah{\Hc_i}-1}{\Gc_i}$. 
 \end{proof}

\begin{lemma}
  \label{lemma:AudreyStrategyTight}
  Every offensively optimal Steven attractor decomposition is tight.
\end{lemma}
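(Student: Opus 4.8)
The plan is to prove that if $\Hc = \seq{A, (S_1, \Hc_1, A_1), \dots, (S_\ell, \Hc_\ell, A_\ell)}$ is an offensively optimal Steven $d$-attractor decomposition of $\Gc$, then Audrey has a winning strategy in $\Def{\Strah{\Hc}-1}{\Gc}$ from a state whose register $\Strah{\Hc}-1$ holds the value~$d$. Write $k = \Strah{\Hc}$. The argument should proceed by induction on the height of $\Tc_{\Hc}$, and I would case-split exactly as in the definition of the Strahler number: either there is a unique index $m$ with $\Strah{\Hc_m} = k$, or there are at least two such indices and then some $\Strah{\Hc_i}$ is only $k-1$ but two of them tie at $k-1$, forcing $\Strah{\Hc_i} \leq k-1$ for all $i$ with the bump coming from the tie. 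Actually the cleaner dichotomy for the proof is: (Case A) every $\Strah{\Hc_i} \leq k-1$, and (Case B) there is a unique $m$ with $\Strah{\Hc_m} = k$.

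In Case A, the key point is that $\Strah{\Hc} = k$ only because at least two of the $\Hc_i$ have Strahler number $k-1$; pick two such indices, say $i < j$. Audrey's plan is to force the play, using the reachability strategy hidden in the Steven attractor structure, towards the region $\Gc_i$, then inside $\Gc_i$ use the dominion strategy on $\Def{k-1}{\Gc_i}$ given by Proposition~\ref{prop:dominioni} (since $\Strah{\Hc_i}-1 = k-1$... but wait, we need register index $k-1$, and $\Def{\Strah{\Hc_i}-1}{\Gc_i} = \Def{k-2}{\Gc_i}$; so the indices must be lifted). The right way: Audrey first drives to a vertex of priority $d$ — no, $d$ is the top priority which is \emph{good} for Steven. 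Let me reconsider: the decomposition starts by removing the $d$-attractor $A$; inside $\Gc_1 = \Gc \setminus A$ all priorities are $\leq d-2$, except that we are working in the $(k-1)$-register defensive game where the top register must reach value $d$. The real mechanism must be: Audrey uses offensive optimality of $\Hc_i$ to dominate on $\Def{\Strah{\Hc_i}-1}{\Gc'_i}$ and on $\Def{\Strah{\Hc_i}}{\Gc'_i \setminus S_i}$, combining these so that in the $(k-1)$-register game over $\Gc$ she can keep register $k-1$ pinned at~$d$ (because the top priority~$d$ is encountered only inside~$A$, which Audrey avoids after the initial reachability phase), and force either a cycle of odd parity or a visit to rank~$2k-1$ — the latter being a win in $\Def{k-1}{\Gc}$... no, $2k-1$ is the losing rank for Audrey in $\Def{k-1}{}$ only if $k-1$ is the register count, so losing rank is $2(k-1)+1 = 2k-1$; Audrey must \emph{avoid} it.

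Given the genuine subtlety here, the cleanest route is: reduce to $\Reg{k-1}{\Gc}$ using Proposition~\ref{prop:offensivetodominion}-style reasoning, build Audrey's dominion strategy on $\Reg{k-1}{\Gc}$ from the sub-strategies on the $\Reg{\Strah{\Hc_i}-1}{\Gc_i}$ supplied by Proposition~\ref{prop:dominioni}, glued with the reachability strategies to the $S_i$'s and the $d$-attractor, and then argue that the resulting strategy, started from a state with register $k-1$ equal to~$d$, never lets register $k-1$ drop (since $d$ is only seen inside $A$, and every reset of register $k-1$ happens at a vertex of priority $d-1$ after the $d$-attractor, re-setting it to $\max(\cdot, \pi(v)) = \cdot$), hence no rank $2k-1$ arises and it is in fact a \emph{defensive} strategy. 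In Case B, Audrey plays the defensive strategy on $\Def{\Strah{\Hc_m}}{\Gc'_m \setminus S_m} = \Def{k}{\Gc'_m \setminus S_m}$ from offensive optimality, but we want a $(k-1)$-register strategy — so instead she drives to $\Gc_m$ and recursively applies the induction hypothesis to $\Hc_m$ (whose tree has smaller height and whose Strahler number is~$k$), using that $\Gc_m \cap S_m$ carries the tight decomposition $\Hc_m$. The main obstacle, which I would spend the most care on, is the bookkeeping of register values through the attractor phases: ensuring that traversing a Steven attractor to a target set (where Steven controls some moves) cannot force Audrey to reset a high register to a value below~$d$, and that the induced cycle parity in $\Gc$ matches the rank parity in the register game — i.e. showing the glued strategy is genuinely winning and defensive, which is where the interplay of Proposition~\ref{prop:dominioni}, offensive optimality, and the structure of attractors all has to be orchestrated carefully.
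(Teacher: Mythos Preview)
Your case split matches the paper's, but there are two genuine gaps.

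In your Case~B (a unique $m$ with $\Strah{\Hc_m}=k$) you propose to recurse on~$\Hc_m$, but nothing in the hypothesis says $\Hc_m$ is itself offensively optimal, so the induction hypothesis does not apply. In fact no induction is needed: Proposition~\ref{prop:dominioni} applied at index~$m$ already yields an Audrey dominion strategy on $\Def{\Strah{\Hc_m}-1}{\Gc_m}=\Def{k-1}{\Gc_m}$, and since $\Gc_m$ is a trap for Steven in~$\Gc$, tightness follows immediately. The paper's proof is not inductive.

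In your Case~A you talk yourself out of the correct reading of the defensive game: in $\Def{k-1}{\Gc}$, reaching rank $2(k-1)+1=2k-1$ is a \emph{win for Audrey}, not something she must avoid. Your plan to ``never let register $k-1$ drop'' is therefore aimed in the wrong direction, and it cannot work anyway since Steven controls all resets. The paper instead picks two indices $i<j$ with $\Strah{\Hc_i}=\Strah{\Hc_j}=k-1$ and runs a three-phase strategy. First Audrey sits in the Steven-trap~$\Gc_j$ playing the $\Reg{k-2}{\Gc_j}$ dominion strategy from Proposition~\ref{prop:dominioni}, ignoring register~$k-1$; if Steven never resets that register she wins outright, and if he does its value drops below~$d$. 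Second, using $\Gc_j\subseteq A'_i\cup(\Gc'_i\setminus S_i)$, she switches to the Audrey dominion strategy on $\Def{\Strah{\Hc_i}}{\Gc'_i\setminus S_i}=\Def{k-1}{\Gc'_i\setminus S_i}$ supplied by the \emph{second} bullet of offensive optimality, together with the Audrey reachability strategy in~$A'_i$ to a priority-$(d{-}1)$ vertex, which loads register~$k-1$ with the odd value~$d-1$. Third, she plays the $\Reg{k-2}{\Gc'_i}$ dominion strategy; now any further reset of register~$k-1$ by Steven produces rank~$2k-1$, which is Audrey's defensive win. Your sketch never invokes the second bullet of Definition~\ref{def:Opt} at the right index (with $\Strah{\Hc_i}=k-1$), and without it the argument does not close.
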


\begin{proof}
  Let  
  $\Hc = \seq{A, (S_1, \Hc_1, A_1), \ldots, 
    (S_\ell, \Hc_\ell, A_\ell)}$ 
  be an offensively optimal $d$-attractor decomposition of a parity
  game and let $k=\Strah{\Hc}$. 
  We construct a strategy for Audrey in~$\Def{k-1}{\Gc}$ that is
  winning for her from at least one state in which the value of
  register $k-1$ is $d$.   
  We define $\Gc_i'$ and $A_i'$ as in Definition~\ref{def:Opt}.

\textit{Case 1:} $\Strah{\Hc_i}=k$ for some unique $i$ in
$\{1,\ldots, \ell\}$. In this case, we show that Audrey has a dominion
strategy on $\Def{k-1}{\Gc_i}$. 
Since $\Gc_i$ is a trap for Steven in~$\Gc$, this gives the desired
result. This directly follows from Proposition~\ref{prop:dominioni}.

\textit{Case 2:} There are $1 \leq i<j \leq \ell$ such that
$\Strah{\Hc_i} = \Strah{\Hc_j} = k-1$. 
We construct a strategy for Audrey in~$\Def{k-1}{\Gc}$ that is winning
for her from all states in~$\Gc_j$ whose register~$k-1$ has
value~$d$. 
Firstly, since $\Hc$ is offensively optimal, Audrey has a dominion 
strategy on $\Def{k-1}{\Gc_i'\setminus S_i}$, denoted by~$\tau_i$, and
a dominion strategy on~$\Reg{k-2}{\Gc'_i}$, denoted by~$\tau'_i$. 
Moreover, by Proposition~\ref{prop:dominioni}, we have that Audrey has a 
dominion strategy, denoted by $\tau_j$, on $\Reg{k-2}{\Gc_j}$
(note that $\Gc_j$ is a trap for Steven in $\Gc$). 
Consider the following strategy for Audrey in $\Def{k-1}{\Gc}$,
starting from a state whose vertex component is in~$\Gc_j$ and
register~$k-1$ has value~$d$:
\begin{itemize}
\item 
  As long as the value of register~$k-1$ is larger than~$d-1$, Audrey
  follows the strategy induced by~$\tau_j$, while ignoring the value
  of register~$k-1$, as long as this value is larger than~$d-1$.
\item 
  If the value in register $k-1$ is at most $d-1$:
  \begin{itemize}
  \item 
    In states whose vertex component is in $A'_i$, Audrey follows a
    strategy induced by the reachability strategy from~$A'_i$ to a
    vertex of priority~$d-1$
    (picking any move if the vertex has priority $d-1$); 
  \item 
    In states whose vertex component is in $\Gc'_i \setminus S_i$ and
    whose register~$k-2$ has value at most~$d-2$, Audrey
    follows~$\tau_i$; 
  \item 
    In states whose vertex component is in~$\Gc'_i$ and whose register
    $k-1$ has value~$d-1$, Audrey follows the strategy induced
    by~$\tau'_i$, while ignoring the value of regiser~$k-1$. 
  \end{itemize}
\end{itemize}
Audrey plays any move if none of the above applies.

  We argue that this strategy is winning for Audrey
  in~$\Def{k-1}{\Gc}$ from states whose vertex component is
  in~$\Gc_j$ and register~$k-1$ has value~$d$. 
  Consider an infinite path that starts in such a state. 
  As long as register $k-1$ has value~$d$, Audrey follows~$\tau_j$.  
  If Steven never resets register $k-1$ then Audrey wins. 
  Otherwise, once register $k-1$ has been reset, its value is at
  most~$d-1$.  
  Note that $\Gc_j$ is included in $A'_i \cup (\Gc'_i \setminus S_i)$.  
  If register $k-1$ has a value smaller than $d-1$, and the play never
  visits a state whose vertex component is in~$A'_i$, then
  Audrey has followed $\tau_i$ along the play 
  (she has never left $\Gc'_i \setminus S_i$ as the only way for Steven
  to go out $\Gc'_i \setminus S_i$ is to go to $A'_i$) and wins.  
  Otherwise, the play visits a state whose vertex component is
  in~$A'_i$, and so it visits a state whose vertex component has
  priority~$d-1$, leading to a state in which register $k-1$ has 
  value~$d-1$.  
  Finally, if a state whose vertex component is in $A'_i$ is visited 
  infinitely many times then Audrey wins. 
  Otherwise, Audrey eventually plays according to~$\tau'_i$. 
  If Steven never resets register $k-1$ then Audrey wins. 
  Otherwise, if Steven resets register~$k-1$, which at this point has
  value~$d-1$, a state of rank $2k-1$ is visited and Audrey wins.  
\end{proof}

\begin{lemma}
  \label{lemma:TightStrahler}
  Every Steven dominion in a parity game has an offensively optimal
  Steven attractor decomposition. 
\end{lemma}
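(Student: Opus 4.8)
The natural approach is by structural induction on the tree of an attractor decomposition, following the recursive definition of attractor decomposition and of the Strahler number. Given a Steven dominion $D$ in $\Gc$, I would take \emph{any} Steven $d$-attractor decomposition $\Hc = \seq{A, (S_1, \Hc_1, A_1), \ldots, (S_\ell, \Hc_\ell, A_\ell)}$ of $\Gc \cap D$ as the starting skeleton, and then surgically reshuffle it so as to make the two offensive-optimality conditions of Definition~\ref{def:Opt} hold, without increasing the Strahler number. The key structural observation is that the Strahler number $\Strah{\Hc}$ is governed by how many of the subtrees $T_{\Hc_i}$ realise the maximum: if at most one child $\Hc_i$ has $\Strah{\Hc_i} = \Strah{\Hc}$, the surgery has slack to play with in the other children; if two or more children tie for the maximum Strahler value, then each of them has strictly smaller Strahler number than $\Hc$ itself, so the inductive hypothesis gives strong control over each $\Gc \cap S_i$ separately.

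\textbf{Main steps.} First I would set up the induction on $d$ (equivalently on the height of $T_\Hc$): the base case $d = 0$ (or the leaf case $\Hc = \seq{A, \emptyset}$) is immediate since then $\Gc \cap D$ consists entirely of vertices of priority $d$ and their attractor, and there are no children to constrain. For the inductive step, I would process the game $\Gc \cap D$ by the usual McNaughton--Zielonka-style peeling: remove the Steven attractor $A$ to the top-priority-$d$ vertices, obtaining $\Gc_1$; then, rather than letting $S_1$ be an arbitrary maximal Audrey-trap with priorities $\le d-2$, I would choose it \emph{offensively optimally}. Concretely, in $\Gc_1$ I first peel off the Audrey attractor $A_1'$ to the priority-$(d-1)$ vertices, leaving $\Gc_1'$; inside $\Gc_1'$ I pick $S_1$ to be a Steven dominion whose attractor decomposition $\Hc_1$ (obtained by the inductive hypothesis, applied at level $d-2$) has Strahler number small enough, \emph{and} such that $\Gc_1' \setminus S_1$ is entirely an Audrey dominion in the appropriate register sense. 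Propositions~\ref{prop:offensivetodominion} and~\ref{prop:dominioni} (or rather their defensive-to-offensive bridging) are what let me translate ``Audrey wins on the remainder'' into the defensive-register-game statements in Definition~\ref{def:Opt}. Then $A_1$ is the Steven attractor to $S_1$ in $\Gc_1$, $\Gc_2 = \Gc_1 \setminus A_1$, and I recurse on $\Gc_2$ to produce $(S_2, \Hc_2, A_2), \ldots$; this terminates because the $S_i$ are non-empty and pairwise disjoint.

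\textbf{The Strahler bookkeeping.} The heart of the argument is verifying that the decomposition built this way has Strahler number at most the Lehtinen number $k$ of $D$. Here I would use the hypothesis that $D$ has Lehtinen number $k$, i.e.\ Steven has a $k$-defensive dominion strategy $\rho$ on states covering $D$ with top register even and $\ge$ all priorities in $D$. The plan is to read off, from $\rho$ restricted to each peeled-off piece, a bound on the register usage inside $S_i$ and thereby on $\Strah{\Hc_i}$: intuitively, whenever play in $\Reg{k}{\Gc}$ descends into the sub-dominion $S_i$ without Steven being allowed to use register $k$ (because resetting it within $S_i$ would be ``wasteful'' given priorities there are $\le d-2$), the strategy $\rho$ witnesses a $(k-1)$-defensive dominion strategy on a corresponding $\Reg{k-1}{\Gc \cap S_i}$, so by induction $\Strah{\Hc_i} \le k-1$; and crucially \emph{at most one} $S_i$ can force Steven to spend register $k$ (Lehtinen-style argument: a second one would let Audrey force rank $2k+1$), so at most one $\Hc_i$ has Strahler number $k$ and all others have Strahler number $\le k-1$ — exactly the recursive condition for $\Strah{\Hc} \le k$.

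\textbf{Expected obstacle.} The delicate point is the simultaneous choice of $S_i$ that is offensively optimal \emph{at its own level} while being compatible with the global Lehtinen bound — in particular, ensuring that the ``remainder'' games $\Gc_i' \setminus S_i$ really are Audrey dominions in the defensive register game $\Def{\Strah{\Hc_i}}{\Gc_i' \setminus S_i}$, not merely in the parity-game sense. This is where I expect to lean hardest on Proposition~\ref{prop:offensivetodominion} to pass between $\Def{}{}$ and $\Reg{}{}$, and on the positional-determinacy-with-safety remark from Section~\ref{section:tuning} to carve the remainder cleanly. A secondary subtlety is making the induction on $d$ mesh with the induction implicit in ``offensively optimal'' (which refers to $\Strah{\Hc_i}$ of the children), so I would state the inductive hypothesis carefully as: \emph{every Steven dominion in any parity game with priorities $\le d$ has an offensively optimal Steven $d'$-attractor decomposition for the least even $d' \ge d$}, and verify the two bullet points of Definition~\ref{def:Opt} hold at the current level using only the inductive guarantees about the strictly smaller games $\Gc \cap S_i$.
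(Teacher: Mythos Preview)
Your inductive skeleton is right---peel off the Steven attractor~$A$ to priority~$d$, then inside the remaining game peel off the Audrey attractor~$A'$ to priority~$d-1$, choose the first piece~$S_1$ carefully inside~$\Gc_1'$, apply the inductive hypothesis at level~$d-2$ to get~$\Hc_1$, and recurse on~$\Gc_1 \setminus A_1$. That much matches the paper. But two things are off.

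First, the lemma does \emph{not} ask you to bound the Strahler number of the resulting decomposition by the Lehtinen number~$k$; it only asks for offensive optimality. The bound $\Strah{\Hc} \leq k$ is derived \emph{afterwards}, as a consequence of Lemma~\ref{lemma:AudreyStrategyTight} (offensively optimal $\Rightarrow$ tight) together with this lemma. So your entire ``Strahler bookkeeping'' paragraph is aimed at the wrong target, and the Lehtinen-style argument about ``at most one $S_i$ can force register~$k$'' is not part of this proof.

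Second, and more seriously, your choice of~$S_1$ is circular. You say you will pick $S_1$ so that the inductively-obtained $\Hc_1$ has ``Strahler number small enough'' and so that $\Gc_1' \setminus S_1$ is an Audrey dominion ``in the appropriate register sense''. But the second bullet of Definition~\ref{def:Opt} demands an Audrey dominion strategy on $\Def{\Strah{\Hc_1}}{\Gc_1' \setminus S_1}$, and you cannot know $\Strah{\Hc_1}$ until after you have chosen~$S_1$ and built~$\Hc_1$. The paper breaks this circularity by choosing~$S_1$ via a \emph{purely register-game} criterion that does not mention Strahler numbers at all: let $m$ be the smallest integer such that $\Gc_1'$ contains a non-empty Steven dominion~$L^m$ in $\Def{m}{\Gc_1'}$, and set $S_1 = L^m$. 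Minimality of~$m$ immediately gives Audrey a dominion strategy on $\Def{m-1}{\Gc_1'}$, and maximality of~$L^m$ gives Audrey a dominion strategy on $\Def{m}{\Gc_1' \setminus S_1}$. The missing bridge is then to apply the inductive hypothesis to~$S_1$ to get an offensively optimal~$\Hc_1$, invoke Lemma~\ref{lemma:AudreyStrategyTight} on~$\Hc_1$ to conclude that Audrey wins from some state in $\Def{\Strah{\Hc_1}-1}{S_1}$, and hence that $m \geq \Strah{\Hc_1}$. This inequality is what converts the two register-game facts about~$m$ into the two bullets of Definition~\ref{def:Opt} stated in terms of $\Strah{\Hc_1}$. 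You have not identified this step, and without it the verification of offensive optimality does not go through.
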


\begin{proof}
  Consider a parity game $\Gc$ which is a Steven dominion. Let $k$ be
  the Lehtinen number of $\Gc$ and let $d$ be the largest even value
  such that $\pi^{-1}(\{d,d-1\})\neq \emptyset$. 
  We construct an offensively optimal Steven attractor decomposition
  by induction.  

  If $d=0$, it is enough to consider $\seq{A, \emptyset}$, where $A$ is
  the set of all vertices in~$\Gc$. 

  If $d>1$, let $A$ be the Steven attractor of the set of vertices of
  priority $d$ in $\Gc$. Let $\Gc_0 = \Gc \setminus A$. If $\Gc_0 =
  \emptyset$ then $\seq{A,\emptyset}$ is an offensively optimal Steven
  attractor decomposition for~$\Gc$. Otherwise, $\Gc_0$ is a non-empty
  trap for Steven in~$\Gc$ and therefore $\Gc_0$ has a Lehtinen number
  at most $k$. 
  Let $A'$ be the Audrey attractor of all the vertices of priority
  $d-1$ in the sub-game $\Gc_0$ and let $\Gc'_0 = \Gc_0\setminus A'$. 

  Given a positive integer $b$, let $L^{b}$ be the largest dominion in
  $\Gc'_0$ such that Steven has a dominion strategy on
  $\Def{b}{\Gc'_0}$. We define $m$ to be the smallest number such that
  $L^{m}\neq \emptyset$ and let $S_0 = L^{m}$.  
  We show that $m \leq k$. 
  To prove this, we construct an Audrey dominion strategy on 
  $\Def{b}{\Gc_0}$ for all $b$ such that $L^b = \emptyset$. 
  Since the Lehtinen number of $\Gc_0$ is at most $k$, this implies
  that $m \leq k$.   
  The Audrey dominion strategy on $\Def{b}{\Gc_0}$, assuming 
  $L^{b} = \emptyset$, is as follows: 
  \begin{itemize}
  \item 
    If the vertex component of a state is in~$A'$ then Audrey uses the
    strategy in~$A'$ induced by the reachability strategy to vertices
    of priority~$d-1$;

  \item 
    If the vertex component of a state is in~$\Gc'_0$ then Audrey uses
    her dominion strategy on~$\Def{b}{\Gc'_0}$, which exists because the
    Steven dominion $L^{b}$ in $\Def{b}{\Gc'_0}$ is empty. 
  \end{itemize}  
  Any play following the above strategy and visiting infinitely often
  a state of $\Def{b}{\Gc_0 \cap A'}$ is winning for Audrey. 
  A play following the above strategy and remaining eventually in 
  $\Def{b}{\Gc'_0}$ is also winning for Audrey.  

  Let $\Hc_0$ be the $(d-2)$-attractor decomposition of $S_0$ obtained
  by induction. 
  In particular, $\Hc_0$ is offensively optimal. 

  Let $A_0$ be the Steven attractor to $S_0$ in $\Gc_0$ and 
  let~$\Gc_1 = \Gc_0 \setminus A_0$. 
  Subgame $\Gc_1$ is a trap for Steven and therefore it is a Steven
  dominion.  
  Let 
  $\Hc' = \seq{\emptyset, (S_1, \Hc_1, A_1), \ldots, 
    (S_\ell, \Hc_\ell, A_\ell)}$ 
  be an offensively optimal Steven $d$-attractor decomposition of
  $\Gc_1$ obtained by induction. 

  We claim that 
  $\Hc = \seq{A, (S_0, \Hc_0, A_0), (S_1, \Hc_1, A_1), \ldots, 
    (S_\ell, \Hc_\ell, A_\ell)}$ 
  is an offensively optimal Steven $d$-attractor decomposition
  of~$\Gc$.   
  Since $\Hc'$ is offensively optimal, it is enough to show that: 
  \begin{itemize}
  \item 
    Audrey has a dominion strategy on 
    $\Def{\Strah{\Hc_0}-1}{\Gc'_0}$, 
  \item 
    Audrey has a dominion strategy on
    $\Def{\Strah{\Hc_0}}{\Gc'_0 \setminus S_0}$. 
\end{itemize}

  Since $\Hc_0$ is offensively optimal, Audrey has a winning  
  strategy from at least one state in $\Def{\Strah{\Hc_0}-1}{S_0}$, by
  Lemma~$\ref{lemma:AudreyStrategyTight}$, and hence 
  $m \geq \Strah{\Hc_0}$. 
  
  So, by choice of $m$, Steven does not have a defensive dominion strategy
  on $\Def{\Strah{\Hc_0}-1}{\Gc'_0}$ from any state. This means that Audrey has a dominion strategy on $\Def{\Strah{\Hc_0}-1}{\Gc'_0}$.
  
  Moreover, by construction of $S_0$, Audrey has a dominion strategy
  on $\Def{m}{\Gc'_0 \setminus S_0}$. 
  This implies that Audrey has a dominion strategy on
  $\Def{\Strah{\Hc_0}}{\Gc'_0 \setminus S_0}$. 
\end{proof}

\section{Strahler-Universal Trees}

Our attention now shifts to tackling
Question~\ref{question:Strahler-algorithmic}.  
The approach is to develop constructions of small ordered trees into
which trees of attractor decompositions or of progress measures can
be embedded. 
Such trees can be seen as natural search spaces for dominion
strategies, and existing meta-algorithms such as the universal
attractor decomposition algorithm~\cite{JM20} and progress measure
lifting algorithm~\cite{Jur00,JL17} can use them to guide their
search, performed in time proportional to the size of the trees in the 
worst case. 

An ordered tree is \emph{universal} for a class of trees if all trees
from the class can be embedded into it.  
The innovation offered in this work is to develop optimized
constructions of trees that are universal for classes of trees whose
complex structural parameter, such as the Strahler number, is
bounded. 
This is in contrast to less restrictive universal trees introduced by
Czerwi\'nski et al.~\cite{CDFJLP19} and implicitly constructed by
Jurdzi\'nski and Lazi\'c~\cite{JL17}, whose sizes therefore grow
faster with size parameters, leading to slower algorithms.

Firstly, we give an inductive construction of Strahler-universal trees
and an upper bound on their numbers of leaves. 
Then we introduce labelled ordered trees, provide a succinct
bit-string labelling of the Strahler-universal trees, and give an
alternative and more explicit characterization of the
succinctly-labelled Strahler-universal trees. 
Finally, we argue how the succinct bit-string labelling of
Strahler-universal trees facilitates efficient computation of the
so-called ``level-$p$ successors'' in them, which is the key
computational primitive that allows using ordered trees to solve 
parity games.  
The constructions and techniques we develop here are inspired by and 
significantly refine those introduced by Jurdzi\'nski and
Lazi\'c~\cite{JL17}.

\subparagraph*{Strahler-Universal Trees and Their Sizes}
\label{subsec:Strahler-universal}

Intuitively, an ordered tree \emph{can be embedded in} another if the
former can be obtained from the latter by pruning some subtrees.  
More formally, the trivial tree~$\seq{}$ can be embedded in every
ordered tree, and $\seq{T_1, T_2, \dots, T_k}$ can be embedded
in $\seq{T'_1, T'_2, \dots, T'_{\ell}}$ if there are indices 
$i_1, i_2, \dots, i_k$ such that 
$1 \leq i_1 < i_2 < \cdots < i_k \leq \ell$
and for every $j = 1, 2, \dots, k$, we have that $T_j$ can be
embedded in~$T'_{i_j}$. 

An ordered tree is \emph{$(n, h)$-universal}~\cite{CDFJLP19} 
if every $(n, h)$-small ordered tree can be embedded in it. 
We define an ordered tree to be 
\emph{$k$-Strahler $(n, h)$-universal} if every $(n, h)$-small ordered
tree whose Strahler number is at most~$k$ can be embedded in it, and  
we give a construction of small Strahler-universal trees.

\begin{definition}[Trees $U_{t, h}^k$ and~$V_{t, h}^k$]
\label{def:U-and-V}
  For all $t \geq 0$, we define trees 
  $U_{t, h}^k$ (for all $h$ and~$k$ such that $h \geq k \geq 1$)
  and 
  $V_{t, h}^k$ (for all $h$ and~$k$ such that $h \geq k \geq 2$)
  by mutual induction:
  \begin{enumerate}
  \item
    if $h = k = 1$ then $U_{t, h}^k = \seq{}$; 

  \item
    if $h>1$ and $k=1$ then 
    $U_{t, h}^k = \seq{U_{t, h-1}^k}$; 

  \item
    \label{item:U-and-V--n-eq-1}
    if $h \geq k \geq 2$ and $t=0$ then 
    $U_{t, h}^k = V_{t, h}^k = \seq{U_{t, h-1}^{k-1}}$; 

  \item
    if $h \geq k \geq 2$ and $t \geq 1$ then
    $V_{t, h}^k = 
    V_{t-1, h}^k \cdot \seq{U_{t, h-1}^{k-1}} \cdot V_{t-1, h}^k$; 

  \item
    \label{item:U-and-V--h-eq-k}
    if $h = k \geq 2$ and $n \geq 2$ then $U_{t, h}^k = V_{t, h}^k$;
    
  \item
    \label{item:U-and-V--h-g-k}
    if $h > k \geq 2$ and $n \geq 2$ then
    $U_{t, h}^k = V_{t, h}^k \cdot \seq{U_{t, h-1}^k} \cdot V_{t, h}^k$. 
  \end{enumerate}
\end{definition}

For $g \geq 0$, let $I_g$ be the trivial tree, that is the tree with
exactly one leaf, of height~$g$.  
For example, $I_1 = \seq{}$ and 
$I_3 = \seq{\seq{\seq{}}} = \seq{\seq{\circ}}$.  
It is routine to verify that if $h \geq k=1$ or $t=0$ then 
$U_{t, h}^k = I_h$, 
and if $h \geq k \geq 2$ and $t=0$ then $V_{t, h}^k = I_h$.

\begin{lemma}
  \label{lem:U-n-h-k-Strahler-universal}
  For all $n \geq 1$ and $h \geq k \geq 1$, the ordered tree 
  $U_{\floor{\lg n}, h}^k$ is $k$-Strahler $(n, h)$-universal. 
\end{lemma}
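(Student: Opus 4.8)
The plan is to prove Lemma~\ref{lem:U-n-h-k-Strahler-universal} by structural induction following the mutual recursion in Definition~\ref{def:U-and-V}, proving simultaneously the companion statement that $V_{\floor{\lg n},h}^k$ is $k$-Strahler $(n,h)$-universal for $h\geq k\geq 2$. The right induction parameter is the pair $(h,t)$ with $t=\floor{\lg n}$, ordered so that the recursive calls --- which either decrease $h$, or decrease $t$, or decrease $k$ (hence also $h$, since $k\leq h$) --- are all strictly smaller; the base cases are $h=k=1$ (only the single-leaf tree of height $1$ needs embedding, handled by $U_{t,1}^1=\seq{}$) and $t=0$, i.e.\ $n=1$ (only the trivial tree with one leaf needs to be embedded, and $U_{0,h}^k=I_h$ accommodates it since a one-leaf tree of height $\leq h$ embeds into the one-leaf tree of height exactly $h$).

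First I would dispose of the $k=1$ case separately: a tree of Strahler number $1$ is exactly a path (it has no branching node with two subtrees of equal maximal Strahler number, so at most one leaf), and $U_{t,h}^1=I_h$ is a path of height $h$, into which any $(n,h)$-small tree of Strahler number $1$ embeds. For the main case $h\geq k\geq 2$ and $n\geq 2$, take an arbitrary $(n,h)$-small tree $T=\seq{T_1,\dots,T_m}$ with $\Strah{T}\leq k$. By the inductive definition of the Strahler number, at most one child subtree $T_j$ has $\Strah{T_j}=k$, and every $T_i$ has $\Strah{T_i}\leq k$; moreover each $T_i$ has height $\leq h-1$. Split the children at the (at most one) heavy child: write $T = \seq{T_1,\dots,T_{j-1}}\cdot\seq{T_j}\cdot\seq{T_{j+1},\dots,T_m}$ (if no child has Strahler number $k$, pick $j$ arbitrarily and note $\Strah{T_j}\leq k-1$ too). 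The key counting observation is that the ``light'' blocks on either side of the heavy child consist of trees of Strahler number at most $k-1$; crucially, a tree of Strahler number $\leq k-1$ has at most $2^{k-1}$ leaves on any path of branchings, but more to the point we need that the \emph{total} number of leaves on each side, together with the heavy child, is bounded so that each side has at most $2^{t-1}$ leaves --- this is the standard Strahler-decomposition argument: if $T$ has at most $2^t$ leaves, then after removing one heavy child (or splitting at the median branching) each remaining side has at most $2^{t-1}$ leaves. I would then embed the left block into the left copy of $V_{t-1,h}^k$ using the inductive hypothesis for $V$ with parameter $t-1$, the right block into the right copy of $V_{t-1,h}^k$ likewise, and the heavy child $T_j$ (which has $\Strah{T_j}\leq k-1$ and height $\leq h-1$, with at most $n$ leaves) into $U_{t,h-1}^{k-1}$ by the inductive hypothesis for $U$ with parameters $(h-1,k-1)$; concatenation of these embeddings gives the embedding into $V_{t,h}^k = V_{t-1,h}^k\cdot\seq{U_{t,h-1}^{k-1}}\cdot V_{t-1,h}^k$. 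The case $U_{t,h}^k$ with $h>k$ is analogous but uses the extra middle slot $\seq{U_{t,h-1}^k}$ to absorb a possible heavy child of full Strahler number $k$: split $T$'s children around the (at most one) child of Strahler number $k$, embed that child into $U_{t,h-1}^k$ (height drops, $k$ unchanged, same $t$), and embed the two flanking blocks --- now of Strahler number $\leq k-1$ --- into the two copies of $V_{t,h}^k$ via the $V$-hypothesis at the same $t$ (this is legitimate because those flanking blocks may still have up to $2^t$ leaves; here one must be careful and instead recurse into $V_{t,h}^k$ which by its own unrolling further subdivides).

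The main obstacle I anticipate is getting the leaf-count bookkeeping to line up exactly with the halving of $t$ in the $V$-recursion $V_{t,h}^k = V_{t-1,h}^k\cdot\seq{U_{t,h-1}^{k-1}}\cdot V_{t-1,h}^k$ while simultaneously tracking the Strahler budget: one must argue that after pulling out a single heavy subtree the two remaining forests each have at most $2^{t-1}$ leaves \emph{and} Strahler number at most $k-1$ as forests (i.e.\ each constituent tree has Strahler $\leq k-1$, but a sequence of such trees regarded under the embedding ordering is what $V_{t-1,h}^k$ must absorb, so I should phrase the inductive statement for $V$ carefully, perhaps as ``every sequence of $(n,h)$-small trees each of Strahler number $\leq k-1$, with at most $2^t$ leaves in total, embeds into $V_{t,h}^k$''). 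Getting this ``forest'' formulation of the $V$-statement stated cleanly, and checking that it is preserved under the split, is the delicate bit; everything else is a routine induction mirroring the six clauses of Definition~\ref{def:U-and-V}. I would also double-check the $t=0$ clause (item~\ref{item:U-and-V--n-eq-1}), where $U_{0,h}^k=V_{0,h}^k=\seq{U_{0,h-1}^{k-1}}=I_h$, against the base case to make sure the recursion bottoms out consistently with $\floor{\lg 1}=0$.
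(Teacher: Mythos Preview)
Your overall approach---mutual induction on $U$ and $V$ following the clauses of Definition~\ref{def:U-and-V}---is the same as the paper's, and your base cases and the $k=1$ case are fine. The companion statement you eventually arrive at for $V$ in your final paragraph (every $(n,h)$-small tree whose \emph{children} all have Strahler number at most $k-1$ embeds in $V_{\floor{\lg n},h}^k$) is exactly the paper's notion of ``weakly $k$-Strahler $(n,h)$-universal''; your initial companion statement that $V_{\floor{\lg n},h}^k$ is fully $k$-Strahler $(n,h)$-universal is false as written, since the middle slot $U_{t,h-1}^{k-1}$ of $V_{t,h}^k$ cannot absorb a child of Strahler number~$k$.

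The genuine gap is that you conflate two different pivots. In the $U$-case with a child $T_q$ of Strahler number exactly~$k$, you correctly split at~$T_q$: the flanking blocks then have all children of Strahler number at most $k-1$ and embed into the two copies of $V_{t,h}^k$ at the \emph{same}~$t$ (no leaf-halving is claimed or needed here), while $T_q$ goes into $U_{t,h-1}^k$. But in the $V$-case, where all children already have Strahler number at most $k-1$, the pivot is \emph{not} chosen by Strahler number---there is no distinguished heavy child---but by leaf count: pick $p$ so that both $\seq{T_1,\dots,T_{p-1}}$ and $\seq{T_{p+1},\dots,T_m}$ have at most $\floor{n/2}$ leaves, send each into $V_{t-1,h}^k$, and send $T_p$ (which, like every child, has Strahler $\leq k-1$) into $U_{t,h-1}^{k-1}$. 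Your sentence ``after removing one heavy child \dots\ each remaining side has at most $2^{t-1}$ leaves'' is simply false when ``heavy'' means the Strahler-$k$ child: a Strahler-heavy subtree need not sit anywhere near the leaf-median. Once you separate these two pivoting strategies cleanly---Strahler-pivot for the $U$-recursion, leaf-median pivot for the $V$-recursion---the argument goes through exactly as in the paper.
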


\begin{proof}
  We say that a tree has \emph{weak Strahler number} at most $k$ if 
  every subtree rooted in a child of the root has Strahler number at
  most~$k-1$. 
  A tree is then \emph{weakly $k$-Strahler $(n, h)$-universal}
  if every $(n, h)$-small ordered tree whose weak Strahler number is
  at most~$k$ can be embedded in it.
  We proceed by induction on the number of leaves in an ordered tree
  and its height, 
  using the following strengthened inductive hypothesis:
  \begin{itemize}
  \item
    for all $n \geq 1$ and $h \geq k \geq 1$, 
    ordered tree $U_{\floor{\lg n}, h}^k$ is $k$-Strahler 
    $(n, h)$-universal; 
  \item
    for all $n \geq 1$ and $h \geq k \geq 2$, 
    ordered tree $V_{\floor{\lg n}, h}^k$ is weakly $k$-Strahler 
    $(n, h)$-universal. 
  \end{itemize}

  Let $T$ be an $(n, h)$-small ordered tree of Strahler number at
  most~$k$. 
  If $n=1$, $h=1$, or $k=1$, 
  then $T$ is the trivial tree (with just one leaf) of height at
  most~$h$, and hence it can be embedded 
  in~$U_{\floor{\lg n}, h}^k = I_h$, the trivial tree of height~$h$.  
  Likewise, if $h \geq k \geq 2$ and $n=1$, 
  then $T$ is the trivial tree of height at most~$h$, and hence it can
  be embedded in~$V_{\floor{\lg n}, h}^k = I_h$, the trivial tree of
  height~$h$.     

  Otherwise, we have that $T = \seq{T_1, \dots, T_j}$ for 
  some~$j \geq 1$. 
  We consider two cases:
  either $\Strah{T_i} \leq k-1$ for all $i=1, \dots, j$, or there
  is~$q$ such that $\Strah{T_q} = k$. 
  Note that by Proposition~\ref{prop:Strahler-small}, the latter case
  can only occur if $h > k$.  

  If $\Strah{T_i} \leq k-1$ for all $i = 1, \dots, j$, 
  then we argue that $T$ can be embedded in~$V_{\floor{\lg n}, h}^k$,  
  and hence also in~$U_{\floor{\lg n}, h}^k$, because 
  $V_{\floor{\lg n}, h}^k$ can be embedded 
  in $U_{\floor{\lg n}, h}^k$ by definition
  (see items~\ref{item:U-and-V--n-eq-1}., \ref{item:U-and-V--h-eq-k}.,
  and~\ref{item:U-and-V--h-g-k}.\ of Definition~\ref{def:U-and-V}).  
  Let~$p$ (a pivot) be an integer such that 
  both trees
  $T' = \seq{T_1, \dots, T_{p-1}}$ and 
  $T'' = \seq{T_{p+1}, \dots, T_j}$ 
  are $(\floor{n/2}, h)$-small. 
  Then by the strengthened inductive hypothesis, 
  each of the two trees $T'$ and~$T''$ can be embedded in
  tree~$V_{\floor{\lg \floor{n/2}}, h}^k = V_{\floor{\lg n}-1, h}^k$
  and tree $T_p$ can be embedded in~$U_{\floor{\lg n}, h-1}^{k-1}$. 
  It then follows that tree 
  $T = T' \cdot \seq{T_p} \cdot T''$ can be embedded in 
  $V_{\floor{\lg n}, h}^k = V_{\floor{\lg n}-1, h}^k \cdot 
  \seq{U_{\floor{\lg n}, h-1}^{k-1}} \cdot
  V_{\floor{\lg n}-1, h}^k$. 

  If $\Strah{T_q} = k$ for some~$q$ (the pivot), then we argue that
  $T$ can be embedded in~$U_{\floor{\lg n}, h}^k$. 
  Note that each of the two trees
  $T' = \seq{T_1, \dots, T_{q-1}}$ 
  and $T'' = \seq{T_{q+1}, \dots, T_j}$ 
  is $(n, h)$-small and all trees $T_1, \dots, T_{q-1}$ and
  $T_{q+1}, \dots, T_j$ 
  have Strahler numbers at most~$k-1$. 
  By the previous paragraph, 
  it follows that each of the two trees~$T'$ and~$T''$ can be embedded  
  in~$V_{\floor{\lg n}, h}^k$.  
  Moreover, tree $T_q$ is $(n, h-1)$-small and hence, by the
  inductive hypothesis, it can be embedded 
  in~$U_{\floor{\lg n}, h-1}^k$. 
  It follows that tree $T = T' \cdot \seq{T_q} \cdot T''$ can be
  embedded in 
  $U_{\floor{\lg n}, h}^k = 
  V_{\floor{\lg n}, h}^k \cdot \seq{U_{\floor{\lg n}, h-1}^k} \cdot
  V_{\floor{\lg n}, h}^k$.  
\end{proof}

\begin{lemma}
\label{lemma:size-of-U}
  For all $t \geq 0$, we have:
  \begin{itemize}
  \item
    if $h \geq k = 1$ then $\leaves{U_{t, h}^k} = 1$;
  \item
    if $h \geq k \geq 2$ then 
    $\leaves{U_{t, h}^k} 
    \: \leq \:
    2^{t + k} 
    {{t + k - 2} \choose {k-2}}
    {{h-1} \choose {k-1}}$.
  \end{itemize}
\end{lemma}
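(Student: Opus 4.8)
The first inequality is immediate: by the first two clauses of Definition~\ref{def:U-and-V}, for $h \geq k = 1$ the tree $U_{t,h}^k$ is the one‑leaf path $I_h$, so $\leaves{U_{t,h}^k} = 1$. For the second inequality I would prove, simultaneously and by induction on~$k$, both the stated bound on $\leaves{U_{t,h}^k}$ and a \emph{companion bound} on~$V$: for all $t \geq 0$ and $h \geq k \geq 2$,
\[
  \leaves{V_{t,h}^k} \ \leq\ 2^{t+k-1}\,{t+k-2 \choose k-2}{h-2 \choose k-2}.
\]
Calibrating this auxiliary invariant correctly---one factor of~$2$ fewer in front of it and one step down in each of the two binomial coefficients, reflecting that the middle child of $V_{t,h}^k$ is $U_{t,h-1}^{k-1}$---is exactly what makes the two inductive steps below close, and perturbing it in either direction breaks one of them.

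Writing $u(t,h,k) = \leaves{U_{t,h}^k}$ and $v(t,h,k) = \leaves{V_{t,h}^k}$, I would first read off the recurrences implied by Definition~\ref{def:U-and-V}. Clause~\ref{item:U-and-V--n-eq-1}.\ and clause~(4) give $v(0,h,k) = 1$ and $v(t,h,k) = 2\,v(t-1,h,k) + u(t,h-1,k-1)$ for $t \geq 1$; unrolling clauses~\ref{item:U-and-V--h-eq-k}.\ and~\ref{item:U-and-V--h-g-k}.\ downward in~$h$ from $h = k$ gives, for $t \geq 1$, the closed form
\[
  u(t,h,k) \ =\ v(t,k,k) \ +\ 2\sum_{g=k+1}^{h} v(t,g,k),
\]
while $u(0,h,k) = 1$ for every $h \geq k \geq 2$.

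The base case $k = 2$ is a short direct computation: there $u(t,h-1,k-1) = u(t,h-1,1) = 1$, so the recurrence for $v$ reads $v(t,h,2) = 2\,v(t-1,h,2) + 1$ with $v(0,h,2) = 1$, hence $v(t,h,2) = 2^{t+1}-1 \leq 2^{t+1}$, which is the companion bound; feeding this into the closed form (and treating $t=0$ separately) yields $u(t,h,2) = (2^{t+1}-1)(2h-3) \leq 2^{t+2}(h-1)$, the claimed bound. For the inductive step $k \geq 3$, I would first establish the companion bound at~$k$ by an inner induction on~$t$: the case $t = 0$ is $1 \leq 2^{k-1}{h-2 \choose k-2}$ (valid since $h \geq k$), and the step combines the companion bound at $t-1$ (inner hypothesis) with the $U$‑bound at $k-1$ (outer hypothesis, applicable because $k-1 \geq 2$), the two contributions merging through Pascal's rule ${t+k-3 \choose k-2} + {t+k-3 \choose k-3} = {t+k-2 \choose k-2}$. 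Finally I would deduce the $U$‑bound at~$k$ from the companion bound at~$k$ via the closed form together with the hockey‑stick identity $\sum_{g=k+1}^{h}{g-2 \choose k-2} = {h-1 \choose k-1} - 1$, giving $u(t,h,k) \leq 2^{t+k-1}{t+k-2 \choose k-2}\bigl(2{h-1 \choose k-1} - 1\bigr) \leq 2^{t+k}{t+k-2 \choose k-2}{h-1 \choose k-1}$ for $t \geq 1$, and $1 \leq 2^{k}{h-1 \choose k-1}$ for $t = 0$. There is no deep obstacle here: the whole argument is binomial bookkeeping once the companion invariant is chosen, and the only point that needs care is that $k = 2$ must be split off as a base case, since the general step invokes the $U$‑bound at $k-1$, which fails for $k-1 = 1$.
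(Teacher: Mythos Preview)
Your proposal is correct and follows essentially the same approach as the paper: the same companion bound $\leaves{V_{t,h}^k} \leq 2^{t+k-1}{t+k-2 \choose k-2}{h-2 \choose k-2}$, the same split-off of $k=2$ (the paper uses the very same strengthened bound $2^{t+1}-1$ there), and the same Pascal-rule merge for the $V$-recurrence when $k \geq 3$. The only cosmetic difference is in the $U$-step: the paper inducts directly on~$h$, applying Pascal's identity once to combine ${h-2 \choose k-1} + {h-2 \choose k-2}$ into ${h-1 \choose k-1}$, whereas you unroll the $h$-recurrence to a closed form and use the hockey-stick identity---equivalent bookkeeping.
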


\begin{proof}
  The proof is by structural induction, 
  where the inductive hypothesis contains both the statement that for
  all $t \geq 0$ and $h \geq k \geq 2$, we have:
  \begin{equation}
  \label{eq:size-of-U}
    \leaves{U_{t, h}^k} 
    \: \leq \:
    2^{t + k} 
    {{t + k - 2} \choose {k-2}}
    {{h-1} \choose {k-1}}\,,
  \end{equation}
  and that for all $t \geq 0$ and $h \geq k \geq 2$, we have the
  following analogous bound on the number of leaves of
  trees~$V_{t, h}^k$:  
  \begin{equation}
  \label{eq:size-of-V}
    \leaves{V_{t, h}^k} 
    \: \leq \:
    2^{t + k-1} 
    {{t + k - 2} \choose {k-2}}
    {{h-2} \choose {k-2}}\,. 
  \end{equation}
  The following cases correspond to the six items in
  Definition~\ref{def:U-and-V}. 
  \begin{enumerate}
  \item
    \label{item:h-eq-0}
    If $h = k = 1$ then $\leaves{U_{t, h}^k} = \leaves{\seq{}} = 1$. 

  \item
    \label{item:h-g-1--k-eq-1}
    If $h > 1$ and $k = 1$ then a straightforward induction on~$h$ can
    be used to show that $\leaves{U_{t, h}^k} = 1$. 

  \item
    \label{item:h-geq-k-geq-2--t-e-0}
    If $h \geq k \geq 2$ and $t = 0$ then, again, a straightforward
    induction on~$h$ yields that 
    $\leaves{V_{t, h}^k} = 1 < 
    2^{t+k-1} {{t + k - 2} \choose {k-2}} {{h-2} \choose {k-2}}$
    and
    $\leaves{U_{t, h}^k} = 1 < 
    2^{t+k} {{t + k - 2} \choose {k-2}} {{h-1} \choose {k-1}}$. 

  \item
    Suppose that $h \geq k \geq 2$ and $t \geq 1$.

    Firstly, for $h \geq k = 2$ and $t \geq 0$, we slightly strengthen
    the inductive hypothesis~(\ref{eq:size-of-V}) to:
    \begin{equation}
      \label{eq:size-of-V-n-h-1}
      \leaves{V_{t, h}^2} \: \leq \: 2^{t + 1} - 1\,, 
    \end{equation}
    which we prove by induction on~$t$.
    Indeed, for $t = 0$ it follows from
    item~\ref{item:h-geq-k-geq-2--t-e-0}.\ above, and for $t \geq 1$,
    we have:      
    \begin{multline*}
      \leaves{V_{t, h}^2}
      \: = \: 
      \leaves{U_{t, h-1}^1} + 2 \cdot \leaves{V_{t-1, h}^2}
      \\
      \: \leq \:
      1 + 2\left(2^{(t-1)+1}-1\right)
      \: = \:
      2^{t+1}-1
      \: < \:
      2^{t+1} {{t} \choose {0}} {{h-2} \choose {0}}\,,
    \end{multline*}
    where the first inequality follows from
    items~\ref{item:h-eq-0}.\ or~\ref{item:h-g-1--k-eq-1}.\ above, and 
    from the strengthened inductive
    hypothesis~(\ref{eq:size-of-V-n-h-1}).  

    Secondly, for $h \geq k \geq 3$ and $t \geq 1$ we have: 
    \begin{multline*}
      \leaves{V_{t, h}^k}
      \: = \: 
      \leaves{U_{t, h-1}^{k-1}} + 2 \cdot \leaves{V_{t-1, h}^k}
      \\
      \: \leq \:
      2^{t+k-1} 
      {{t + k - 3} \choose {k-3}}
      {{h-2} \choose {k-2}}
      + 2 \cdot
      2^{t+k-2} 
      {{t + k - 3} \choose {k-2}}
      {{h-2} \choose {k-2}}
      \\
      \: = \:
      2^{t+k-1}
      \left[
        {{t + k - 3} \choose {k-3}}
        +
        {{t + k - 3} \choose {k-2}}
        \right]
           {{h-2} \choose {k-2}}
           \: = \:
           2^{t+k-1}
           {{t + k - 2} \choose {k-2}}
           {{h-2} \choose {k-2}}\,,
    \end{multline*}
    where the first inequality follows from the inductive hypothesis and
    the last equality follows from Pascal's identity.

  \item
    Suppose that $h = k \geq 2$ and $t \geq 1$. 
    Then we have:
    \begin{multline*}
    \leaves{U_{t, h}^k}
    \: = \:
    \leaves{V_{t, h}^k}
    \: \leq \:
    2^{t+k-1} {{t+k-2} \choose {k-2}} {{h-2} \choose {k-2}}
    \\
    \: < \:
    2^{t+k} {{t+k-2} \choose {k-2}} {{h-1} \choose {k-1}}\,,
    \end{multline*}
    where the first inequality follows by the inductive hypothesis
    and the other one from~$h = k$. 

  \item
    Suppose $h > k \geq 2$ and $t \geq 1$. 
    Then we have:
    \begin{multline*}
      \leaves{U_{t, h}^k}
      \: = \: 
      \leaves{U_{t, h-1}^k} + 2 \cdot \leaves{V_{t, h}^k}
      \\
      \: \leq \:
      2^{t+k} 
      {{t + k - 2} \choose {k-2}}
      {{h-2} \choose {k-1}}
      + 2 \cdot
      2^{t+k-1} 
      {{t + k - 2} \choose {k-2}}
      {{h-2} \choose {k-2}}
      \\
      \: = \:
      2^{t+k}
      {{t + k - 2} \choose {k-2}}
      \left[
        {{h-2} \choose {k-1}}
        +
        {{h-2} \choose {k-2}}
        \right]
      \: = \:
      2^{t+k}
      {{t + k - 2} \choose {k-2}}
      {{h-1} \choose {k-1}}\,,
    \end{multline*}
    where the first inequality follows from the inductive hypothesis and
    the last equality follows from Pascal's identity.
    \qedhere
  \end{enumerate}
\end{proof}

\begin{theorem}
  \label{thm:size-of-U-n-h-k}
  For $k \leq \lg n$, the number of leaves of the
  $k$-Strahler $(n, h)$-universal ordered trees 
  $U_{\floor{\lg n}, h}^k$ is  
  $n^{O(1)} \cdot \left({h}/{k}\right)^k
  = n^{{k \lg (h/k)}/{\lg n} + O(1)}$, 
  which is polynomial in~$n$ if
  $k \cdot \lg\left({h}/{k}\right) \: = \: O(\log n)$. 
  In more detail, the number is at most 
  $n^{c(n)} \cdot (h/k)^k$, where
  $c(n) = 5.45$ if $k \leq \lg n$, 
  $c(n) = 3 + o(1)$ if $k = o(\log n)$, 
  and $c(n) = 1 + o(1)$ if $k = O(1)$. 
\end{theorem}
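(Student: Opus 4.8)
The plan is to start from the bound of Lemma~\ref{lemma:size-of-U}, namely that for $h \geq k \geq 2$ we have
\[
  \leaves{U_{\floor{\lg n}, h}^k}
  \: \leq \:
  2^{\floor{\lg n} + k}
  {{\floor{\lg n} + k - 2} \choose {k-2}}
  {{h-1} \choose {k-1}}\,,
\]
(and the count is~$1$ when $k = 1$, which is trivially below all the claimed bounds), and then to estimate each of the three factors separately under the standing assumption $k \leq \lg n$. The first factor $2^{\floor{\lg n}+k}$ is at most $2^{2\lg n} = n^2$ since $k \leq \lg n$. For the last factor I would use the standard estimate ${h-1 \choose k-1} \leq \left(e(h-1)/(k-1)\right)^{k-1} \leq (eh/k)^{k} \cdot (k/(eh))$, so that up to an $(h/k)^k$ term this contributes a factor $e^{k} \leq e^{\lg n} = n^{\lg e} = n^{1.4426\ldots}$ (and one should keep track of the spare $k/(eh) \leq 1$ factor to tighten constants, though it is not needed for the crude $n^{O(1)}$ claim). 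The middle factor ${\floor{\lg n}+k-2 \choose k-2}$ is the one requiring the most care: bounding it by $2^{\floor{\lg n}+k-2} \leq n \cdot 2^{k} \leq n \cdot n = n^2$ already suffices for $c(n) = 5.45$ after combining $n^2 \cdot n^{1.4427} \cdot n^2 < n^{5.45}$.

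For the refined constants I would keep the three regimes separate. When $k = o(\log n)$, the factor $2^{k}$ appearing in both $2^{\floor{\lg n}+k}$ and in the crude bound on ${\floor{\lg n}+k-2 \choose k-2}$ becomes $n^{o(1)}$, and moreover ${\floor{\lg n}+k-2 \choose k-2} \leq (\lg n + k)^{k-2}/(k-2)! \leq (2\lg n)^{k}$, which one checks is $n^{o(1)} \cdot (\text{something absorbed})$ — more precisely $(2\lg n)^k = 2^{k \lg(2\lg n)} = 2^{o(\log n) \cdot \lg\lg n}$, and since $\lg\lg n = o(\log n / k \cdot \text{anything growing})$... here I need to be slightly careful: $(2 \lg n)^k$ is $n^{o(1)}$ precisely when $k \lg\lg n = o(\log n)$, which does hold when $k = o(\log n / \log\log n)$ but not obviously for all $k = o(\log n)$. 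The cleaner route is to note ${\lg n + k - 2 \choose k-2} \leq {\lg n + k \choose k}$ and bound this by $\left(e(\lg n + k)/k\right)^k \leq \left(2e\lg n / k\right)^k$; then $\left(2e\lg n/k\right)^k = 2^{k\lg(2e\lg n/k)}$, and since this exponent is $O(k \log\log n)$ which is $o((\log n)^2)$... actually the robust statement is simply that this whole middle factor times $2^k$ is at most $n^{1+o(1)}$ when $k = o(\log n)$, which follows because $(c \lg n / k)^k \cdot 2^k$ is maximized (over the relevant range) at a value that is $2^{\lg n (1 + o(1))}$ — I would verify this by taking logarithms, $\lg\left((c\lg n/k)^k 2^k\right) = k \lg(2c\lg n/k) = \lg n \cdot \frac{k}{\lg n}\lg(2c\lg n/k)$, and noting $x \lg(C/x) \leq \lg e / e + o(1)$ for $x = k/\lg n \to 0$... no, $x\log(C/x) \to 0$ as $x \to 0$ for fixed $C$, but $C = 2c\lg n$ grows; still, $x\lg(2c/x) + x\lg\lg n \to 0$ requires $x \lg\lg n \to 0$. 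So the honest dependency is $k = o(\log n/\log\log n)$ for the $c(n) = 3+o(1)$ claim via this estimate. I expect the authors intend a slightly more careful bookkeeping (perhaps using the spare $k/(eh)$ factor or a sharper binomial estimate), so \textbf{the main obstacle} is pinning down exactly which crude binomial bounds suffice to get $3 + o(1)$ rather than $3 + \lg e + o(1) \approx 4.44$, and whether the $o(1)$ in the hypothesis $k = o(\log n)$ is enough or whether one tacitly uses $k = o(\log n/\log\log n)$.

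When $k = O(1)$, all the exponential-in-$k$ factors ($2^k$, $e^k$, and ${\lg n + O(1) \choose O(1)} = O((\lg n)^{k-2}) = n^{o(1)}$) collapse to $n^{o(1)}$: $2^{\floor{\lg n}+k} = n \cdot 2^{O(1)} = n^{1+o(1)}$, $e^k = O(1)$, and the middle binomial is polylogarithmic in $n$, hence $n^{o(1)}$; multiplying gives $n^{1+o(1)} \cdot (h/k)^k$, which is the $c(n) = 1 + o(1)$ claim. Finally, for the closing remarks I would record that $n^{c(n)} \cdot (h/k)^k = n^{c(n) + k\lg(h/k)/\lg n}$ by writing $(h/k)^k = 2^{k\lg(h/k)} = n^{k\lg(h/k)/\lg n}$, so the whole quantity is $n^{O(1)}$ exactly when $k\lg(h/k) = O(\log n)$, and polynomial-time solvability follows from plugging $U_{\floor{\lg n}, h}^k$ with $h = \ceil{d/2}+1$ into the universal attractor decomposition or progress measure lifting meta-algorithm, whose running time is polynomial in $n$ times the number of leaves of the tree. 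I would present the three regimes as three short displayed computations after the single master inequality above, keeping the routine binomial estimates (Stirling / ${m \choose r} \leq (em/r)^r$ / ${m \choose r} \leq 2^m$) inline and not belaboured.
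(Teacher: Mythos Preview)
Your approach is essentially the paper's: start from Lemma~\ref{lemma:size-of-U} and bound the three factors $2^{\floor{\lg n}+k}$, ${\floor{\lg n}+k-2 \choose k-2}$, and ${h-1 \choose k-1}$ separately, using ${m \choose r} \leq 2^m$ for the middle one and ${m \choose r} \leq (em/r)^r$ for the last. The $c(n) = 5.45$ and $c(n) = 1+o(1)$ cases you handle exactly as the paper does.

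The ``main obstacle'' you flag in the $k = o(\log n)$ case is self-inflicted. You already observed that ${\floor{\lg n}+k-2 \choose k-2} \leq 2^{\floor{\lg n}+k-2} \leq n \cdot 2^k$, and when $k = o(\log n)$ this is $n^{1+o(1)}$; combined with $2^{\floor{\lg n}+k} = n^{1+o(1)}$ and $e^k = n^{o(1)}$ your own bookkeeping would already give $c(n) = 2+o(1)$, stronger than what is claimed. The paper is even lazier: it keeps the crude bound ${\floor{\lg n}+k-2 \choose k-2} < 2^{2\lg n} = n^2$ unchanged across both the $k \leq \lg n$ and $k = o(\log n)$ regimes, and the improvement from $5.45$ to $3+o(1)$ comes solely from the terms $k/\lg n$ and $(k\lg e)/\lg n$ dropping from at most $1$ and at most $\lg e$ to $o(1)$. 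There is no need for the $(2\lg n)^k$ or $(2e\lg n/k)^k$ detours, and no hidden $k = o(\log n/\log\log n)$ requirement; you can delete that entire paragraph of hedging and simply record the three exponents $p_1 = 1 + k/\lg n$, $p_2 \leq 2$, $p_3 = k\lg(h/k)/\lg n + (k\lg e)/\lg n$ as the paper does.
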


\begin{remark}
  By Proposition~\ref{prop:Strahler-small} and
  Lemma~\ref{lem:U-n-h-k-Strahler-universal}, for all positive
  integers $n$ and~$h$, the tree $U_{\floor{\lg n}, h}^{\floor{\lg n}+1}$ is 
  $(n, h)$-universal. 
  Theorem~\ref{thm:size-of-U-n-h-k} implies that the number of leaves
  of $U_{\floor{\lg n}, h}^{\floor{\lg n}+1}$ is 
  $n^{\lg(h/{\lg n})+O(1)}$, 
  which matches the asymptotic number of leaves of $(n, h)$-universal
  trees 
  of Jurdzi\'nski and Lazi\'c~\cite[Lemma~6]{JL17}. 
  In particular, if $h = O(\log n)$ then 
  $\lg({h}/{\lg n}) = O(1)$, and hence the number of leaves of 
  $U_{\floor{\lg n}, h}^{\floor{\lg n}+1}$ is polynomial in~$n$. 
\end{remark}

\begin{proof}[Proof of Theorem~\ref{thm:size-of-U-n-h-k}]
  By Lemma~\ref{lem:U-n-h-k-Strahler-universal}, ordered 
  tree~$U_{\floor{\lg n}, h}^k$ is $k$-Strahler $(n, h)$-universal.  
  By Lemma~\ref{lemma:size-of-U}, its number of leaves is at most 
  $2^{\floor{\lg n}+k} {{\floor{\lg n}+k-2} \choose {k-2}} 
  {{h-1} \choose {k-1}}$. 

  We analyze in turn the three terms $2^{\floor{\lg n}+k}$, 
  ${\floor{\lg n}+k-2} \choose {k-2}$, and ${h-1} \choose {k-1}$. 
  Firstly, we note that 
  $2^{\floor{\lg n}+k}$ is $O\left(n^{p_1(n, k)}\right)$, where 
  $p_1(n, k) = 1+k/{\lg n}$, because $2^k = n^{k/{\lg n}}$. 
  Secondly, $k \leq \lg n$ implies that 
  $\floor{\lg n} + k - 2 < 2 \lg n$, 
  therefore we have 
  ${{\floor{\lg n}+k-2} \choose {k-2}} <
  2^{2 \lg n} \: = \: n^2$, 
  and hence 
  ${{\floor{\lg n}+k-2} \choose {k-2}}$ is $O(n^{p_2(n, k)})$,
  where $p_2(n, k) \leq 2$.
  Thirdly, applying the inequality 
  ${i \choose j} \leq \left(ei/j\right)^j$ 
  to the binomial coefficient 
  ${{h} \choose {k}}$, we obtain
  ${{h-1} \choose {k-1}} 
  \: < \:
  {{h} \choose {k}}
  \: \leq \:
  \left({eh}/{k}\right)^k
  \: = \:
  2^{k \lg(eh/k)}$,
  and hence 
  ${{h-1} \choose {k-1}}$ is $O(n^{p_3(n, h, k)})$, 
  where 
  $p_3(n, h, k) \: = \: {k \lg(eh/k)}/{\lg n}
  \: = \: {k \lg(h/k)}/{\lg n} + {k \lg e}/{\lg n}$. 

  Note that if we let 
  $p(n, h, k) = p_1(n, k) + p_2(n, k) + p_3(n, h, k)$ then the number of
  leaves in trees 
  $U_{\floor{\lg n}, h}^k$ is $O\!\left(n^{p(n, h, k)}\right)$.  
  Since $k \leq \lg n$ implies ${k}/{\lg n} \leq 1$
  and ${k \lg e}/{\lg n} \leq \lg e$, we obtain 
  $p(n, h, k) 
  \: \leq \: {k \lg(h/k)}/{\lg n} + 4 + \lg e
  \: < \: {k \lg(h/k)}/{\lg n} + 5.45$, 
  and hence the number of leaves in trees $U_{\floor{\lg n}, h}^k$ is  
  $n^{{k \lg(h/k)}/{\lg n} + O(1)}$. 

  If we further assume that $k = o(\log n)$ then the constant~$5.45$
  can be straightfowardly reduced to~$3+o(1)$ because then 
  ${k}/{\lg n}$ and ${k \lg e}/{\lg n}$ are~$o(1)$.  
  Moreover, the estimate 
  ${{\floor{\lg n}+k-2} \choose {k-2}} = O(n^2)$
  can be improved with further assumptions about~$k$ as a function
  of~$n$; 
  for example, if $k = O(1)$ then 
  ${{\floor{\lg n}+k-2} \choose {k-2}}$ 
  is only polylogarithmic in~$n$ and hence 
  ${{\floor{\lg n}+k-2} \choose {k-2}}$ is $n^{o(1)}$, 
  bringing $3+o(1)$ down to~$1+o(1)$. 
\end{proof}

\subparagraph*{Labelled Strahler-Universal Trees}

\emph{Labelled ordered tree} are similar to ordered trees: 
the trivial tree~$\seq{}$ is an \emph{$A$-labelled ordered tree} and 
so is a sequence  
$\seq{(a_1, \Lc_1), (a_2, \Lc_2), \dots, (a_k, \Lc_k)}$, where
$\Lc_1$, $\Lc_2$, \dots, $\Lc_k$ are $A$-labelled ordered trees, and 
$a_1$, $a_2$, \dots, $a_k$ are distinct elements of a linearly ordered  
set~$(A, \leq)$ and $a_1 < a_2 < \cdots < a_k$ in that linear
order.  
We define the \emph{unlabelling} of a labelled ordered tree 
$\seq{(a_1, \Lc_1), (a_2, \Lc_2), \dots, (a_k, \Lc_k)}$, 
by straightforward induction, to be the ordered tree
$\seq{T_1, T_2, \dots, T_k}$, where $T_i$ is the unlabelling
of~$\Lc_i$ for every $i = 1, 2, \dots, k$. 
An \emph{$A$-labelling} of an ordered tree~$T$ is an $A$-labelled 
tree~$\Lc$ whose unlabelling is~$T$. 
We define the \emph{natural labelling} of an ordered 
tree~$T = \seq{T_1, \dots, T_k}$, again by a straightfoward
induction, to be the $\Nats$-labelled tree 
$\seq{(1, \Lc_1), \dots, (k, \Lc_k)}$, where $\Lc_1$, \dots, $\Lc_k$
are the natural labellings of trees $T_1$, \dots, $T_k$.

For an $A$-labelled tree $\seq{(a_1, \Lc_1), \dots, (a_k, \Lc_k)}$,
its set of \emph{nodes} is defined inductively to consist of the
root~$\seq{}$ and all the sequences in~$A^*$ of the form $\seq{a_i}
\cdot v$, where $v \in A^*$ is a node in $\Lc_i$ for some 
$i = 1, \dots, k$, and where the symbol $\cdot$ denotes concatenation
of sequences.   
For example, the natural labelling of tree 
$\seq{\seq{\circ^3}, \circ^4, \seq{\seq{\circ}}^2}$ has the set of
nodes that consists of the following set of leaves 
$\seq{1, 1}$, $\seq{1, 2}$, $\seq{1, 3}$, $\seq{2}$, $\seq{3}$,
$\seq{4}$, $\seq{5}$,  $\seq{6, 1, 1}$, $\seq{7, 1, 1}$, and all of
their prefixes.  
Indeed, the set of nodes of a labelled ordered tree is always
prefix-closed.
Moreover, if $L \subseteq A^*$ then its closure under prefixes
uniquely identifies a labelled ordered tree that we call the
labelled ordered tree \emph{generated} by~$L$, and its unlabelling is
the ordered tree generated by~$L$. 
For example, the set 
$\eset{\seq{1}, \seq{3, 1}, \seq{3, 4, 1}, \seq{6, 1}}$
generates ordered tree 
$\seq{\circ, \seq{\circ, \seq{\circ}}, \seq{\circ}}$.

Consider the following linear order on the set $\eset{0, 1}^*$ of
bit strings:  
for each bit $b \in \eset{0, 1}$, and for all bit strings 
$\beta, \beta' \in \eset{0, 1}^*$, if $\varepsilon$ is the empty
string, then  we have $0\beta < \varepsilon$, $\varepsilon < 1\beta$,
and $b\beta < b\beta'$ iff $\beta < \beta'$. 

For a bit string $\beta \in \eset{0, 1}^*$, we write
$\left|\beta\right|$ for the number of bits used in the string.
For example, we have $\left|\varepsilon\right| = 0$ and
$\left|010\right| = 3$, and $\left|11\right| = 2$. 
Suppose that $\seq{\beta_i, \beta_{i-1}, \dots, \beta_1}$ is a
node in a $\eset{0, 1}^*$-labelled ordered tree.
Then if $\beta_j = b \beta$ for some $j = 1, 2, \dots, i$, 
$b \in \eset{0, 1}$, and $\beta \in \eset{0, 1}^*$, then we refer to
the first bit~$b$ as the \emph{leading bit} in~$\beta_j$, and we refer 
to all the following bits in~$\beta$ as \emph{non-leading bits}
in~$\beta_j$.  
For example, node 
$\seq{\varepsilon, 010, \varepsilon, \varepsilon, 11}$
has two non-empty strings and hence two leading bits, and it uses
three non-leading bits overall, because 
$\left|010\right| + \left|11\right| - 2 = 3$.

For a bit $b \in \eset{0, 1}$ and a 
$\eset{0, 1}^*$-labelled ordered tree 
$\Lc = \seq{\left(\beta_1, \Lc_1\right), \dots, 
  \left(\beta_\ell, \Lc_\ell\right)}$,
we define the 
$\eset{0, 1}^*$-labelled
ordered tree $\left[\Lc\right]^b$ to be equal to
$\Lc = \seq{\left(b \beta_1, \Lc_1\right), \dots, 
  \left(b \beta_\ell, \Lc_\ell\right)}$. 
In other words, $\left[\Lc\right]^b$ is the labelled ordered tree that
is obtained from~$\Lc$ by adding an extra copy of bit~$b$ as the
leading bit in the labels of all children of the root of~$\Lc$.

The inductive structure of the next definition is identical to that of
Definition~\ref{def:U-and-V}, and hence labelled ordered trees
$\Uc_{t, h}^k$ and~$\Vc_{t, h}^k$ defined here are labellings of the
ordered trees $U_{t, h}^k$ and~$V_{t, h}^k$, respectively. 

\begin{definition}[Trees $\Uc_{t, h}^k$ and $\Vc_{t, h}^k$]
\label{def:Uc-and-Vc}
  For all $t \geq 0$, we define $\eset{0, 1}^*$-labelled ordered
  trees~$\Uc_{t, h}^k$  
  (for all $h$ and~$k$ such that $h \geq k \geq 1$)
  and $\Vc_{t, h}^k$ (for all $h$ and~$k$ such that $h \geq k \geq 2$) 
  by mutual induction:
  \begin{enumerate}
  \item
    if $h = k = 1$ then 
    $\Uc_{t, h}^k = \seq{}$; 
  
  \item
    \label{item:h0-k0}
    if $h>1$ and $k=1$ then 
    $\Uc_{t, h}^k = 
    \seq{\tpl{\varepsilon, \Uc_{t, h-1}^k}}$; 
  
  \item
    \label{item:t0}
    if $h \geq k \geq 2$ and $t=0$ then 
    $\Vc_{t, h}^k = \seq{\tpl{\varepsilon, \Uc_{t, h-1}^{k-1}}}$  
    and $\Uc_{t, h}^k = \left[\Vc_{t, h}^k\right]^0 = 
    \seq{\tpl{0, \Uc_{t, h-1}^{k-1}}}$;
  
  \item
    if $h \geq k \geq 2$ and $t \geq 1$ then
    $\Vc_{t, h}^k = 
    \left[\Vc_{t-1, h}^k\right]^0 \cdot 
    \seq{\tpl{\varepsilon, \Uc_{t, h-1}^{k-1}}} \cdot 
    \left[\Vc_{t-1, h}^k\right]^1$; 
  
  \item
    \label{item:h-eq-k}
    if $h = k \geq 2$ and $t \geq 1$ then 
    $\Uc_{t, h}^k = \left[\Vc_{t, h}^k\right]^0$;
  
  \item
    \label{item:h-g-k}
    if $h > k \geq 2$ and $t \geq 1$ then
    $\Uc_{t, h}^k = 
    \left[\Vc_{t, h}^k\right]^0 \cdot 
    \seq{\left(\varepsilon, \Uc_{t, h-1}^k\right)} \cdot 
    \left[\Vc_{t, h}^k\right]^1$. 
  \end{enumerate}
\end{definition}

The inductive definition of labelled ordered trees $\Uc_{t, h}^k$
and~$\Vc_{t, h}^k$ makes it straightforward to argue that their
unlabellings are equal to trees $U_{t, h}^k$ and~$V_{t, h}^k$,
respectively, and hence to transfer to them Strahler-universality
established in Lemma~\ref{lem:U-n-h-k-Strahler-universal} and upper
bounds on the numbers of leaves established in
Lemma~\ref{lemma:size-of-U} and Theorem~\ref{thm:size-of-U-n-h-k}.  
We now give an alternative and more explicit characterization of those
trees, which will be more suitable for algorithmic purposes.  
To that end, we define $\eset{0, 1}^*$-labelled trees $\Bc_{t, h}^k$
and~$\Cc_{t, h}^k$ and then we argue that they are equal to
trees~$\Uc_{t, h}^k$ and~$\Vc_{t, h}^k$, respectively, by showing that
they satisfy all the recurrences in Definition~\ref{def:Uc-and-Vc}.

\begin{definition}[Trees $\Bc_{t, h}^k$ and $\Cc_{t, h}^k$]
  For all $t \geq 0$ and $h \geq k \geq 1$, we define 
  $\eset{0, 1}^*$-labelled ordered trees $\Bc_{t, h}^k$ as the tree
  generated by sequences $\seq{\beta_{h-1}, \dots, \beta_1}$ such
  that:  
  \begin{enumerate}
  \item
    the number of non-empty bit strings among $\beta_{h-1}$, \dots,
    $\beta_1$ is $k-1$;

  \item
    the number of bits used in bit strings $\beta_{h-1}$,
    \dots, $\beta_1$ overall is at most $(k-1)+t$;
  \end{enumerate}
  and for every $i = 1, \dots, h-1$, we have the following:
  \begin{enumerate}
  \setcounter{enumi}{2}
  \item
    if there are less than $k-1$ non-empty bit strings among
    $\beta_{h-1}$, \dots, $\beta_{i+1}$, but there are $t$ non-leading
    bits used in them, then $\beta_i = 0$;

  \item
    if all bit strings $\beta_i$, \dots, $\beta_1$ are non-empty, then  
    each of them has $0$ as its leading bit.
  \end{enumerate}

  For all $t \geq 0$ and $h \geq k \geq 2$, we define 
  $\eset{0, 1}^*$-labelled ordered trees $\Cc_{t, h}^k$ as the tree
  generated by sequences $\seq{\beta_{h-1}, \dots, \beta_1}$ such
  that:  
  \begin{enumerate}
  \item
    the number of non-empty bit strings among $\beta_{h-2}$, \dots,
    $\beta_1$ is $k-2$;

  \item
    the number of bits used in bit strings $\beta_{h-1}$,
    \dots, $\beta_1$ overall is at most $(k-2)+t$;
  \end{enumerate}
  and for every $i = 1, \dots, h-1$, we have the following:
  \begin{enumerate}
  \setcounter{enumi}{2}
  \item
    if there are less than $k-2$ non-empty bit strings among
    $\beta_{h-2}$, \dots, $\beta_{i+1}$, but there are
    $t-\left|\beta_{h-1}\right|$ non-leading bits used in them, then
    $\beta_i = 0$; 

  \item
    if all bit strings $\beta_i$, \dots, $\beta_1$ are non-empty, then  
    each of them has $0$ as its leading bit.
  \end{enumerate}
\end{definition}

\begin{lemma}
\label{lemma:Uc-eq-Bc}
  For all $t \geq 0$ and $h \geq k \geq 1$, we have 
  $\Uc_{t, h}^k = \Bc_{t, h}^k$. 
\end{lemma}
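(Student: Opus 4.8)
The plan is to prove this identity simultaneously with its companion $\Vc_{t,h}^k = \Cc_{t,h}^k$ (for all $t \geq 0$ and $h \geq k \geq 2$), by showing that the explicitly-described families $\Bc$ and $\Cc$ satisfy every one of the six clauses of Definition~\ref{def:Uc-and-Vc}. The logical backbone is that those clauses express each $\Uc_{t,h}^k$ and $\Vc_{t,h}^k$ in terms of trees with strictly smaller value of $t+h$ (clause~5 being the only one at the same indices, and it merely passes from $\Uc$ to $\Vc$, whose own clauses do reduce $t+h$), so the system is well-founded and has a \emph{unique} solution; hence matching all clauses forces $\Uc = \Bc$ and $\Vc = \Cc$. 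The induction is therefore along this well-founded order, and within each step I would verify the one relevant clause.

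The single tool used throughout is a \emph{root decomposition} of $\Bc_{t,h}^k$ and $\Cc_{t,h}^k$ read directly off their generating sequences $\seq{\beta_{h-1},\dots,\beta_1}$: the labels on the edges leaving the root are exactly the admissible values of the first component $\beta_{h-1}$, listed in the bit-string order, and the labelled subtree below the child reached by a given value of $\beta_{h-1}$ is the tree generated by the corresponding residual sequences $\seq{\beta_{h-2},\dots,\beta_1}$. First I would dispatch the base-type clauses~1--3. When $h=k=1$, or $k=1$, or $t=0$, conditions~1 and~2 leave essentially no freedom: every non-empty $\beta_i$ must be a single bit, and conditions~3--4 force these bits to sit as high as possible and to read $0$, so $\Bc_{t,h}^k$ (respectively $\Cc_{t,h}^k$) is precisely the one-leaf tree $I_h$ carrying the labels prescribed in clauses~1--3 of Definition~\ref{def:Uc-and-Vc}; the required recurrences $\Bc_{t,h}^1=\seq{\tpl{\varepsilon,\Bc_{t,h-1}^1}}$, $\Cc_{0,h}^k=\seq{\tpl{\varepsilon,\Bc_{0,h-1}^{k-1}}}$, and $\Bc_{0,h}^k=[\Cc_{0,h}^k]^0=\seq{\tpl{0,\Bc_{0,h-1}^{k-1}}}$ then fall out and close under the inductive hypothesis.

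For the inductive clauses~4--6 ($t\geq1$) I would split the root decomposition according to the first symbol of $\beta_{h-1}$. If $\beta_{h-1}=\varepsilon$, the residual conditions on $\seq{\beta_{h-2},\dots,\beta_1}$ are, after re-indexing, literally the defining conditions of $\Bc_{t,h-1}^{k-1}$ in the $\Cc$-case and of $\Bc_{t,h-1}^{k}$ in the $\Bc$-case of clause~6 (here condition~2 is what certifies that leaving $\beta_{h-1}$ empty does not consume the $t$-budget, so the level parameter is unchanged), hence that subtree equals $\Uc_{t,h-1}^{k-1}$ (respectively $\Uc_{t,h-1}^{k}$) by induction and reproduces the middle factor $\seq{\tpl{\varepsilon,\Uc_{t,h-1}^{k-1}}}$ (respectively $\seq{\tpl{\varepsilon,\Uc_{t,h-1}^{k}}}$). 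If instead $\beta_{h-1}=b\gamma$ with $b\in\eset{0,1}$, deleting the leading bit $b$ is designed to decrease both the ``number of non-leading bits still to spend below'' and the parameter $t$ by exactly the same amount, so the subtree below that child is the subtree below the $\gamma$-child of $\Cc_{t-1,h}^{k}=\Vc_{t-1,h}^{k}$; since the bit-string order places all $0$-prefixed labels before $\varepsilon$ and all $1$-prefixed labels after, the children assemble in exactly the order $[\Vc_{t-1,h}^{k}]^0\cdot\seq{\tpl{\varepsilon,\Uc_{t,h-1}^{k-1}}}\cdot[\Vc_{t-1,h}^{k}]^1$, matching clause~4. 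Clause~6 is identical with the middle factor replaced by $\seq{\tpl{\varepsilon,\Uc_{t,h-1}^{k}}}$, and clause~5 ($h=k$) is its degenerate form in which condition~1 forces $\beta_{h-1}$ to be non-empty and then condition~4 forces its leading bit to be $0$, killing both the $\varepsilon$-child and the $1$-prefixed part and leaving just $[\Vc_{t,k}^{k}]^0$.

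The delicate point, and the one I expect to be the main obstacle, is verifying that conditions~3 and~4 transfer correctly across the root decomposition. One must check that the threshold ``$t-\left|\beta_{h-1}\right|$ non-leading bits'' written into the definition of $\Cc$ is precisely what makes condition~3 for a residual sequence coincide with condition~3 of $\Bc_{t-1-|\gamma|,\,h-1}^{\,k-1}$ once a leading bit is stripped, and, dually, that in the $\Bc$-to-$\Cc$ direction prepending $b$ to all root labels via $[\,\cdot\,]^b$ neither creates nor destroys any constraint (in particular, that it is only condition~4 and condition~1, not condition~3, that can bar $\beta_{h-1}=\varepsilon$ or $\beta_{h-1}=1\gamma$, so that the $\varepsilon$-child and the $1$-prefixed children survive exactly when the corresponding factors are present in Definition~\ref{def:Uc-and-Vc}). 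Once these two accounting identities—one for the non-leading-bit budget, one for the count of non-empty strings still required—are pinned down, every one of the six clauses follows by inspection from the inductive hypothesis.
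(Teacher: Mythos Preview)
Your proposal is correct and follows essentially the same approach as the paper: you show that the explicitly-defined families $\Bc$ and $\Cc$ satisfy all six clauses of Definition~\ref{def:Uc-and-Vc}, and then conclude by the well-foundedness of those recurrences that $\Bc=\Uc$ and $\Cc=\Vc$. The only cosmetic difference is that the paper packages your ``root decomposition'' observations---that stripping or prepending a leading bit to $\beta_{h-1}$ mediates between $\Bc_{t,h}^k$ and $\Cc_{t,h}^k$---into a separate auxiliary statement (Proposition~\ref{prop:Bc-vs-Cc}), whereas you inline the same accounting directly into the case analysis.
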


The following corollary follows from Lemma~\ref{lemma:Uc-eq-Bc}, and
from the identical inductive structures of
Definitions~\ref{def:U-and-V} and~\ref{def:Uc-and-Vc}. 

\begin{corollary}
\label{cor:Bc-eq-U}
  For all $t \geq 0$ and $h \geq k \geq 1$, 
  the unlabelling of 
  $\Bc_{t, h}^k$ is equal to 
  $U_{t, h}^k$. 
\end{corollary}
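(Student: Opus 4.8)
The plan is to derive this corollary essentially for free from two facts that are already in place. First, Lemma~\ref{lemma:Uc-eq-Bc} establishes that $\Uc_{t, h}^k = \Bc_{t, h}^k$ as $\eset{0,1}^*$-labelled ordered trees. Second, the remark immediately preceding Definition~\ref{def:Uc-and-Vc} records that the labelled trees $\Uc_{t, h}^k$ and $\Vc_{t, h}^k$ are labellings of the ordered trees $U_{t, h}^k$ and $V_{t, h}^k$, respectively---that is, the unlabelling of $\Uc_{t, h}^k$ is $U_{t, h}^k$. Combining these, the unlabelling of $\Bc_{t, h}^k$ is the unlabelling of $\Uc_{t, h}^k$, which is $U_{t, h}^k$, and we are done.

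To make the argument self-contained rather than relying on the informal remark, I would spell out the middle step by a short structural induction on the pair $(t, h)$ (ordered so that the recurrences in Definitions~\ref{def:U-and-V} and~\ref{def:Uc-and-Vc} are well-founded), showing simultaneously that the unlabelling of $\Uc_{t, h}^k$ is $U_{t, h}^k$ and the unlabelling of $\Vc_{t, h}^k$ is $V_{t, h}^k$. The two definitions are presented with identical case structure---six matching clauses---so each inductive case reduces to the observation that unlabelling commutes with the two tree-building operations used: unlabelling $\seq{\tpl{\varepsilon, \Lc}}$ gives $\seq{T}$ when $\Lc$ unlabels to $T$; unlabelling $[\Lc]^b$ gives the same ordered tree as unlabelling $\Lc$, since $[\cdot]^b$ only prepends a bit to each child-label and leaves the tree shape untouched; and unlabelling respects the concatenation $\cdot$ of labelled sequences, matching the concatenation of ordered trees. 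Applying these three facts clause-by-clause against the base cases $h = k = 1$ and $t = 0$ closes the induction.

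With that lemma in hand the corollary is a one-line consequence: by Lemma~\ref{lemma:Uc-eq-Bc} we have $\Bc_{t, h}^k = \Uc_{t, h}^k$, hence they have the same unlabelling, and by the induction just sketched that unlabelling is $U_{t, h}^k$.

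There is essentially no obstacle here---the only mild care needed is to confirm that the well-founded order used for the induction is the same one under which both definitions are stated (decreasing $h$, and within fixed $h$ decreasing $t$, with the $k \geq 2$ / $k = 1$ split handled by the base clauses), so that every recursive reference on the right-hand sides of Definition~\ref{def:Uc-and-Vc} is to a strictly smaller instance for which the inductive hypothesis already applies. This is exactly the same recursion skeleton as in Definition~\ref{def:U-and-V}, so the correspondence is immediate and the proof is genuinely routine.
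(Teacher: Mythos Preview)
Your proposal is correct and mirrors the paper's own justification: the paper states that the corollary follows from Lemma~\ref{lemma:Uc-eq-Bc} together with the identical inductive structures of Definitions~\ref{def:U-and-V} and~\ref{def:Uc-and-Vc}, which is precisely the two-step argument you give. Your additional remarks spelling out why unlabelling commutes with $\seq{\tpl{\cdot,\cdot}}$, $[\cdot]^b$, and concatenation make explicit what the paper leaves implicit, but the approach is the same.
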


The next proposition formalizes the following non-rigorous
interpretation of the difference between trees $\Bc_{t, h}^k$
and~$\Cc_{t, h}^k$: 
\begin{itemize}
\item
  If a sequence $\seq{\beta_{h-1}, \dots, \beta_1}$ is a node
  in~$\Bc_{t, h}^k$ then the bit string $\beta_{h-1}$ can be either
  empty or non-empty, and if it is non-empty then its first bit is the
  leading bit.
\item
  On the other hand, if a sequence $\seq{\beta_{h-1}, \dots, \beta_1}$
  is a node in~$\Cc_{t, h}^k$ then the bit string $\beta_{h-1}$ is
  always to be understood as non-empty.
  It can be thought of as obtained by removal of its ``original''
  leading bit in the corresponding leaf in tree~$\Bc_{t, h}^k$, and
  hence it consists only of (possibly zero) non-leading bits.
\end{itemize}

\begin{proposition}
\label{prop:Bc-vs-Cc}
  For all $t \geq 1$ and $h \geq k \geq 2$, we have:
  \begin{enumerate}
  \item
    \label{item:Bc-vs-Cc--h-eq-k}
    if $h = k$ then
    $\seq{\beta_{h-1}, \dots, \beta_1}$ is a leaf in~$\Cc_{t, h}^k$ 
    if and only if
    $\seq{0 \beta_{h-1}, \beta_{h-2}, \dots, \beta_1}$ is a leaf
    in~$\Bc_{t, h}^k$; 

  \item
    \label{item:Bc-vs-Cc--h-g-k}
    if $h > k$ then for both $b \in \eset{0, 1}$, we have that 
    $\seq{\beta_{h-1}, \dots, \beta_1}$ is a leaf in~$\Cc_{t, h}^k$ 
    if and only if
    $\seq{b \beta_{h-1}, \beta_{h-2}, \dots, \beta_1}$ is a leaf
    in~$\Bc_{t, h}^k$;

  \item
    \label{item:Bc-vs-Cc--shorten}
    $\seq{\varepsilon, \beta_{h-2}, \dots, \beta_1}$ is a
    leaf in $\Cc_{t, h}^k$ 
    if and only if 
    $\seq{\beta_{h-2}, \dots, \beta_1}$ is a leaf 
    in~$\Bc_{t, h-1}^{k-1}$. 
  \end{enumerate}
\end{proposition}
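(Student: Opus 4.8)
The proof is a direct unfolding and matching of the conditions defining the trees $\Bc$ and $\Cc$. The first observation is that in each of $\Bc_{t,h}^k$, $\Cc_{t,h}^k$ and $\Bc_{t,h-1}^{k-1}$ every generating sequence has a fixed length ($h-1$, $h-1$ and $h-2$ respectively), so no generating sequence is a proper prefix of another, and hence the leaves of these trees are exactly their generating sequences. Each of the three claimed equivalences therefore reduces to showing that a tuple of bit strings satisfies the four conditions defining one tree if and only if the correspondingly modified tuple satisfies the four conditions defining the other.

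For the first two items the modification prepends a fresh leading bit $b$ to the top string $\beta_{h-1}$. The arithmetic core is that this operation (i) makes the top string non-empty, hence raises the number of non-empty strings by exactly one, matching the shift from ``$k-2$ non-empty among $\beta_{h-2},\dots,\beta_1$'' for $\Cc_{t,h}^k$ to ``$k-1$ non-empty among $b\beta_{h-1},\beta_{h-2},\dots,\beta_1$'' for $\Bc_{t,h}^k$; (ii) adds exactly one bit, matching the budgets $(k-2)+t$ and $(k-1)+t$; and (iii) turns every bit of $\beta_{h-1}$ into a non-leading bit of $b\beta_{h-1}$ while $b$ itself becomes the leading bit, so that over any top-prefix $\beta_{h-2},\dots,\beta_{i+1}$ the number of non-leading bits seen from the $\Bc$-side exceeds that seen from the $\Cc$-side by exactly $|\beta_{h-1}|$, matching the replacement of the threshold $t$ in condition~3 of $\Bc$ by $t-|\beta_{h-1}|$ in condition~3 of $\Cc$. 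Condition~4 transfers verbatim at indices $i\le h-2$; the index $i=h-1$ is the only place where the value of $b$ matters, and there one uses that $\beta_{h-2},\dots,\beta_1$ consists of $h-2$ strings exactly $k-2$ of which are non-empty, so that they are all non-empty precisely when $h=k$. Thus when $h>k$ condition~4 of $\Bc$ at $i=h-1$ is vacuous (some coordinate is empty), which is exactly why both $b\in\eset{0,1}$ occur in the second item, whereas when $h=k$ it forces the leading bit of $b\beta_{h-1}$ to be $0$, pinning $b=0$ as in the first item. Here one also exploits the informal interpretation stated just before the proposition: in $\Cc_{t,h}^k$ the top string $\beta_{h-1}$ carries no leading bit — it is the all-non-leading tail recovered on the $\Bc$-side by restoring a leading bit — so the conclusions ``$\beta_{h-1}=0$'' and ``$\beta_{h-1}$ has leading bit $0$'' of conditions~3 and~4 impose nothing on $\beta_{h-1}$ itself, matching the fact that $b\beta_{h-1}$ in $\Bc$ is unconstrained apart from its leading bit.

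For the third item the modification merely deletes the top coordinate, which is $\varepsilon$. Since $|\varepsilon|=0$ and the empty string contributes nothing to any of the three counts (non-empty strings, bits, non-leading bits), conditions 1--4 of $\Cc_{t,h}^k$ on $\seq{\varepsilon,\beta_{h-2},\dots,\beta_1}$ become conditions 1--4 of $\Bc_{t,h-1}^{k-1}$ on $\seq{\beta_{h-2},\dots,\beta_1}$ essentially verbatim, using $k-2=(k-1)-1$ and the contraction of the index range from $1,\dots,h-1$ to $1,\dots,h-2$. The only index that disappears is $i=h-1$, and there both conditions of $\Cc$ are vacuous: condition~3 because it would require the empty suffix to use $t-|\varepsilon|=t\ge 1$ non-leading bits, and condition~4 because the coordinate $\beta_{h-1}=\varepsilon$ is empty.

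I expect the main obstacle to be the bookkeeping around the boundary index $i=h-1$ together with the delicate reading of conditions~3 and~4 there — in particular keeping straight that the top string of a $\Cc$-tuple is treated as all-non-leading and is therefore unconstrained by those conditions, which is precisely what makes the prepended bit free when $h>k$ but forced to $0$ when $h=k$. Once this boundary behaviour is pinned down, each of the three equivalences follows by matching the four conditions one at a time.
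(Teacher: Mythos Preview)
The paper does not give an explicit proof of this proposition; it is stated immediately after the informal interpretation and used without further argument in the proof of Lemma~\ref{lemma:Uc-eq-Bc}. Your approach---direct verification by matching the four defining conditions of $\Bc_{t,h}^k$ and $\Cc_{t,h}^k$ under the given transformations of the top component---is exactly the natural one and is what the authors evidently had in mind.

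Your bookkeeping is correct, and you are right to flag the boundary index $i=h-1$ as the one delicate point. Your observation is in fact sharper than you may realise: read literally, conditions~3 and~4 of the definition of $\Cc_{t,h}^k$ at $i=h-1$ would impose unwanted constraints on $\beta_{h-1}$ (for instance, forcing its ``leading bit'' to be~$0$ when all of $\beta_{h-1},\dots,\beta_1$ are non-empty), which would contradict both the proposition and the equality $\Cc_{t,h}^k=\Vc_{t,h}^k$ that the paper establishes next. The informal interpretation immediately preceding the proposition---that $\beta_{h-1}$ in a $\Cc$-tuple is to be understood as consisting entirely of non-leading bits, with no leading bit of its own---is precisely what resolves this, and you invoke it correctly. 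In effect, conditions~3 and~4 of $\Cc$ should be read over $i=1,\dots,h-2$ only (or equivalently, as vacuous at $i=h-1$), and once you adopt that reading the three equivalences follow by the straightforward condition-by-condition matching you describe.
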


\begin{proof}[{Proof of Lemma~\ref{lemma:Uc-eq-Bc}}]
  We argue that trees~$\Bc_{t, h}^k$ and~$\Cc_{t, h}^k$ satisfy all
  the recurrences in Definition~\ref{def:Uc-and-Vc} that involve trees
  $\Uc_{t, h}^k$ and~$\Vc_{t, h}^k$, respectively.

  \begin{enumerate}
  \item
    If $h = k = 1$ then
    tree $\Bc_{t, h}^k$ is the trivial tree~$\seq{}$. 

  \item
    If $h > k = 1$ then 
    $\Bc_{t, h}^k$ has only one leaf  
    $\seq{\varepsilon^{h-1}}$, 
    and hence we have 
    $\Bc_{t, h}^k = \seq{\tpl{\varepsilon, \Bc_{t, h-1}^k}}$. 

  \item
    Suppose that $h \geq k \geq 2$ and $t=0$.
    Then $\Bc_{t, h}^k$ has exactly one leaf, which is of
    the form $\seq{0^{k-1}, \varepsilon^{h-k}}$,
    and $\Cc_{t, h}^k$ has exactly one leaf, which is of the form 
    $\seq{\varepsilon, 0^{k-2}, \varepsilon^{h-k}}$. 
    It follows that 
    $\Cc_{t, h}^k = \seq{\tpl{\varepsilon, \Bc_{t, h-1}^{k-1}}}$
    and 
    $\Bc_{t, h}^k = \left[\Cc_{t, h}^k\right]^0 = 
    \seq{\tpl{0, \Bc_{t, h-1}^{k-1}}}$.  

  \item
    Suppose that $h \geq k \geq 2$ and $t \geq 1$.
    We argue that the following recurrence holds:
    \[
      \Cc_{t, h}^k 
      \: = \:
      \left[\Cc_{t-1, h}^k\right]^0 \cdot
      \seq{\tpl{\varepsilon, \Bc_{t, h-1}^{k-1}}} \cdot
      \left[\Cc_{t-1, h}^k\right]^1\,. 
    \]

    First, we show that every leaf in~$\Cc_{t, h}^k$ is
    also a leaf in tree 
    $\seq{\tpl{\varepsilon, \Bc_{t, h-1}^{k-1}}}$
    or in tree~$\left[\Cc_{t-1, h}^k\right]^b$ for 
    some~$b \in \eset{0, 1}$. 
    Suppose that 
    $\ell = \seq{\beta_{h-1}, \dots, \beta_1}$ 
    is  a leaf in~$\Cc_{t, h}^k$.
    \begin{itemize}
    \item
      If $\beta_{h-1} = \varepsilon$ then 
      $\seq{\beta_{h-2}, \dots, \beta_1}$ is a leaf 
      in~$\Bc_{t, h-1}^{k-1}$, and hence 
      $\ell = \seq{\varepsilon, \beta_{h-2}, \dots, \beta_1}$ 
      is a leaf in tree~$\seq{\tpl{\varepsilon, \Bc_{t, h-1}^{k-1}}}$. 

    \item
      If $\beta_{h-1} = b \beta$ for some $b \in \eset{0, 1}$ then
      $\seq{\beta, \beta_{h-2}, \dots, \beta_1}$ is a leaf
      in~$\Cc_{t-1, h}^k$, and hence 
      $\ell = \seq{b \beta, \beta_{h-2}, \dots, \beta_1}$ 
      is a leaf in~$\left[\Cc_{t-1, h}^k\right]^b$.  
    \end{itemize}

    Conversely, we now argue that if 
    $\ell = 
    \seq{\beta_{h-1}, \dots, \beta_1}$
    is a leaf in labelled ordered 
    tree $\seq{\tpl{\varepsilon, \Bc_{t, h-1}^{k-1}}}$, 
    then it is also a leaf in~$\Cc_{t, h}^k$.
    Note that the premise implies that $\beta_{h-1} = \varepsilon$ and 
    $\seq{\beta_{h-2}, \dots, \beta_1}$ 
    is a leaf in~$\Bc_{t, h-1}^{k-1}$, and hence, by
    item~\ref{item:Bc-vs-Cc--shorten}.\ in
    Proposition~\ref{prop:Bc-vs-Cc}, we have that 
    $\ell = \seq{\varepsilon, \beta_{h-2}, \dots, \beta_1}$ 
    is indeed a leaf in~$\Cc_{t, h}^k$.  

    Finally, we argue that if 
    $\ell = 
    \seq{\beta_{h-1}, \dots, \beta_1}$
    is a leaf in 
    a tree $\left[\Cc_{t-1, h}^k\right]^b$ for 
    $b \in \eset{0, 1}$, 
    then it is also a leaf in~$\Cc_{t, h}^k$.
    Indeed, the premise implies that $\beta_h = b \beta$ and 
    $\seq{\beta, \beta_{h-2}, \dots, \beta_1}$ 
    is a leaf in~$\Cc_{t-1, h}^k$, 
    and hence 
    $\ell = \seq{b \beta, \beta_{h-2}, \dots, \beta_1}$ is indeed a
    leaf in~$\Cc_{t, h}^k$.

  \item
    Suppose that $h = k \geq 2$ and $t \geq 1$.
    We argue that then we have 
    $\Bc_{t, h}^k = \left[\Cc_{t, h}^k\right]^0$. 

    First, let 
    $\ell = 
    \seq{\beta_{h-1}, \dots, \beta_1}$ be a leaf in 
    tree~$\Bc_{t, h}^k$.  
    Since $h = k$, all bit strings $\beta_{h-1}$, \dots, $\beta_1$ are
    non-empty, and hence $\beta_{h-1} = 0 \beta$ for some
    $\beta \in \eset{0, 1}^*$. 
    By item~\ref{item:Bc-vs-Cc--h-eq-k}.\ of
    Proposition~\ref{prop:Bc-vs-Cc}, it follows that the sequence 
    $\seq{\beta, \beta_{h-2}, \dots, \beta_1}$
    is a leaf in~$\Cc_{t, h}^k$, and hence 
    $\ell = \seq{0 \beta, \beta_{h-2}, \dots, \beta_1}$ 
    is indeed a leaf in~$\left[\Cc_{t, h}^k\right]^0$. 

    Conversely, let 
    $\ell = 
    \seq{\beta_{h-1}, \dots, \beta_1}$
    be a leaf in tree~$\left[\Cc_{t, h}^k\right]^0$. 
    Then $\beta_{h-1} = 0 \beta$ for some $\beta \in \eset{0, 1}^*$ 
    and sequence
    $\seq{\beta, \beta_{h-2}, \dots, \beta_1}$
    is a leaf in~$\Cc_{t, h}^k$.
    By item~\ref{item:Bc-vs-Cc--h-eq-k}.\ of
    Proposition~\ref{prop:Bc-vs-Cc}, it follows that 
    $\ell = \seq{0 \beta, \beta_{h-2}, \dots, \beta_1}$ is indeed a
    leaf in~$\Bc_{t, h}^k$.

  \item
    Suppose that $h > k \geq 2$ and $t \geq 1$.
    We argue that then the following recurrence holds:
    \[
      \Bc_{t, h}^k \: = \: 
      \left[\Cc_{t, h}^k\right]^0 \cdot 
      \seq{\left(\varepsilon, \Bc_{t, h-1}^k\right)} \cdot 
      \left[\Cc_{t, h}^k\right]^1\,.
    \]

    First, we show that every leaf in $\Bc_{t, h}^k$ is also a leaf in
    tree 
    $\seq{\left(\varepsilon, \Bc_{t, h-1}^k\right)}$
    or in tree $\left[\Cc_{t, h}^k\right]^b$ for some 
    $b \in \eset{0, 1}$. 
    Suppose that 
    $\ell = 
    \seq{\beta_{h-1}, \dots, \beta_1}$ is a leaf in~$\Bc_{t, h}^k$. 
    \begin{itemize}
    \item
      If $\beta_{h-1} = \varepsilon$ then 
      $\seq{\beta_{h-2}, \dots, \beta_1}$
      is a leaf in~$\Bc_{t, h-1}^k$, and hence 
      $\ell = \seq{\varepsilon, \beta_{h-2}, \dots, \beta_1}$ is a
      leaf in~$\seq{\left(\varepsilon, \Bc_{t, h-1}^k\right)}$.

    \item
      If $\beta_{h-1} = b \beta$ for some $b \in \eset{0, 1}$
      then, by item~\ref{item:Bc-vs-Cc--h-g-k}.\ of
      Proposition~\ref{prop:Bc-vs-Cc}, 
      $\seq{\beta, \beta_{h-2}, \dots, \beta_1}$ 
      is a leaf in~$\Cc_{t, h}^k$, and hence 
      $\ell = \seq{b \beta, \beta_{h-2}, \dots, \beta_1}$ 
      is a leaf in~$\left[\Cc_{t, h}^k\right]^b$. 
    \end{itemize}

    Conversely, we now argue that if 
    $\ell = 
    \seq{\beta_{h-1}, \dots, \beta_1}$
    is a leaf in labelled ordered 
    tree~$\seq{\left(\varepsilon, \Bc_{t, h-1}^k\right)}$,  
    then it is also a leaf in~$\Bc_{t, h}^k$. 
    Note that the premise implies that $\beta_{h-1} = \varepsilon$ and 
    $\seq{\beta_{h-2}, \dots, \beta_1}$
    is a leaf in~$\Bc_{t, h-1}^k$.
    It follows that 
    $\ell = \seq{\varepsilon, \beta_{h-2}, \dots, \beta_1}$  
    is indeed a leaf in~$\Bc_{t, h}^k$. 

    Finally, we argue that if 
    $\ell = 
    \seq{\beta_{h-1}, \dots, \beta_1}$
    is a leaf 
    in~$\left[\Cc_{t, h}^k\right]^b$ for some $b \in \eset{0, 1}$, 
    then it is also a leaf in~$\Bc_{t, h}^k$. 
    The premise implies that $\beta_{h-1} = b \beta$ for some
    $\beta \in \eset{0, 1}^*$ and that 
    $\seq{\beta, \dots, \beta_1}$ is a leaf in~$\Cc_{t, h}^k$.
    By item~\ref{item:Bc-vs-Cc--h-g-k}.\ of
    Proposition~\ref{prop:Bc-vs-Cc}, 
    it follows that 
    $\ell = \seq{b \beta, \beta_{h-2}, \dots, \beta_1}$ 
    is indeed a leaf in~$\Bc_{t, h}^k$. 
  \end{enumerate}

  Straightforward structural induction 
  (on the structure of labelled ordered trees~$\Uc_{t, h}^k$
  and~$\Vc_{t, h}^k$)
  yields that $\Bc_{t, h}^k = \Uc_{t, h}^k$ 
  and~$\Cc_{t, h}^k = \Vc_{t, h}^k$. 
\end{proof}

\subparagraph*{Efficiently Navigating Labelled Strahler-Universal Trees.}

The computation of the \emph{level-$p$ successor} of a leaf in a
labelled ordered tree of height~$h$ is the following problem:
  given a leaf 
  $\seq{\beta_h, \beta_{h-1}, \dots, \beta_1}$ in the tree 
  and given a number~$p$, 
  such that $1 \leq p \leq h$, compute the
  $<_{\mathrm{lex}}$-smallest leaf 
  $\seq{\beta'_h, \beta'_{h-1}, \dots, \beta'_1}$ in
  the tree, such that 
  $\seq{\beta_h, \dots, \beta_p}
  <_{\mathrm{lex}} \seq{\beta'_h, \dots, \beta'_p}$. 
As (implicitly) explained by Jurdzi\'nski and 
Lazi\'c~\cite[Proof of Theorem~7]{JL17}, the level-$p$ successor
computation is the key primitive used extensively in an implementation
of a progress measure lifting algorithm.

\begin{lemma}
\label{lemma:leaf-successor-poly-log}
  Every leaf in tree $\Bc_{t, h}^k$ can be
  represented using $O\left((k+t) \log h\right)$ bits and for every  
  $p = 1, 2, \dots, h$, the level-$p$ successor of a leaf in
  tree~$\Bc_{t, h}^k$ can be computed in time 
  $O\left((k+t) \log h\right)$.   
\end{lemma}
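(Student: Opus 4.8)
The representation claim follows directly from the explicit description of $\Bc_{t, h}^k$: a leaf is a sequence $\seq{\beta_{h-1}, \dots, \beta_1}$ of $h-1$ bit strings (and a full height-$h$ leaf $\seq{\beta_h, \dots, \beta_1}$ simply prepends $\beta_h = \varepsilon$, which can be ignored), of which exactly $k-1$ are non-empty and which use at most $(k-1)+t$ bits overall. So the whole leaf is determined by recording, for each of the $k-1$ non-empty positions, its index in $\{1, \dots, h-1\}$ (taking $O(\log h)$ bits) together with its bit-string content; the total content length is at most $(k-1)+t$, and the indices contribute $O(k \log h)$ bits, for a grand total of $O((k+t)\log h)$ bits. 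I would store the leaf precisely in this sparse form: the ordered list of (position, bit-string) pairs for the non-empty coordinates.

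For the level-$p$ successor, the plan is to exploit the recursive structure of $\Bc_{t, h}^k = \Uc_{t, h}^k$ given by Definition~\ref{def:Uc-and-Vc}, reading the recurrences in items~\ref{item:h0-k0}.--\ref{item:h-g-k}.\ as a description of how the children of the root are linearly ordered by their leading-bit labels. Concretely, I would first observe that finding the level-$p$ successor reduces to two subproblems: (i) within the subtree hanging below the prefix $\seq{\beta_h, \dots, \beta_{p+1}}$, try to find the $<_{\mathrm{lex}}$-smallest leaf strictly above $\seq{\beta_p}$ in coordinate $p$ alone (keeping coordinates $h, \dots, p+1$ fixed); (ii) if no such leaf exists, ``carry'' upward — decrement $p$ by moving to coordinate $p+1$ and ask for the next available value there, then fill coordinates $p, p-1, \dots, 1$ with the $<_{\mathrm{lex}}$-smallest admissible completion. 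The key structural facts that make each step cheap are that the label at a given coordinate is either $\varepsilon$ or a $0/1$-string whose successor-within-the-ordering is computed by the standard binary-string ordering $0\beta < \varepsilon < 1\beta$, and that the ``budget'' constraints (the count $k-1$ of non-empty coordinates and the total bit-budget $(k-1)+t$, together with the two side conditions forcing $\beta_i = 0$ or forcing a leading $0$) can be maintained incrementally. Both the smallest admissible completion and the local successor step touch only $O(k)$ coordinates, each carrying a string of total length $O(k+t)$, and the arithmetic on counters is over numbers of magnitude $O(h)$, giving the claimed $O((k+t)\log h)$ running time.

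The main obstacle — and where the argument needs care rather than hand-waving — is item~(ii), the carry: one must show that when the current coordinate admits no larger value, the correct behaviour is to recurse on the next coordinate up and then \emph{reset} all lower coordinates to the lexicographically least admissible tuple, and that this reset is always well-defined given the remaining budgets. This is exactly the place where the two side conditions in the definition of $\Bc_{t, h}^k$ (a coordinate forced to $0$ when the upper coordinates have already spent all $t$ non-leading bits but not yet placed all $k-1$ non-empty strings; and an all-non-empty suffix forced to have leading bit $0$) interact, and I would handle it by maintaining, as I descend from coordinate $h-1$ to $1$, two running quantities — the number of non-empty strings still to be placed and the number of non-leading bits still available — and showing by induction on the coordinate index that the greedy choice ``make this coordinate $\varepsilon$ if permitted, else the shortest/leftmost admissible non-empty string'' yields precisely the $<_{\mathrm{lex}}$-minimum leaf of the relevant subtree. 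Since this greedy descent visits at most $h-1$ coordinates but writes non-trivial content in only $k-1$ of them, and at each the work is $O(\log h)$ for counter arithmetic plus $O(k+t)$ total over the whole descent for string content, the overall bound $O((k+t)\log h)$ follows; I expect the only genuinely delicate verification to be that the greedy completion never violates the total bit-budget, which reduces to the arithmetic identity that the minimum bit cost of placing $j$ non-empty strings is $j$ (one leading bit each), matching the $k-1$ baseline already subtracted in the budget $(k-1)+t$.
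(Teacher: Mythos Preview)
Your approach is essentially the paper's: a sparse encoding of the at most $(k-1)+t$ bits together with their coordinate indices, and a two-phase level-$p$ successor computation that first carries upward to the lowest ancestor admitting a next sibling and then fills downward with the lexicographically least admissible completion. The paper's representation differs only cosmetically (it stores a position per \emph{bit} rather than per non-empty coordinate), and its successor computation is the same carry-then-descend scheme, with an explicit case analysis of when a node has no next sibling that matches your budget-counter reasoning.

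One sentence to tighten: you write that the greedy descent ``visits at most $h-1$ coordinates'' with $O(\log h)$ work ``at each,'' which read literally gives $O(h\log h)$, not $O((k+t)\log h)$. For the bound to go through you must make explicit that in the sparse representation the $\varepsilon$-coordinates are never iterated over individually---the positions of the $k-1$ non-empty coordinates in the minimum completion are computed by a single arithmetic step (they sit contiguously just below $\beta'_q$, as in the paper's closed form $\seq{\dots,\beta'_q,00^i,0,\dots,0,\varepsilon,\dots,\varepsilon}$)---and likewise that the upward carry can skip runs of $\varepsilon$-coordinates because whether an $\varepsilon$-coordinate has a next sibling depends only on the budgets above it. The paper, for what it is worth, gives the same case analysis and then declares the detailed running-time verification ``tedious and hence we eschew it,'' so your level of rigour already matches it.
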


\begin{proof}% [Proof of Lemma~\ref{lemma:leaf-successor-poly-log}]
  Consider the following representation of a leaf 
  $\seq{\beta_{h-1}, \dots, \beta_1}$ in~$\Bc_{t, h}^k$: 
  for each of the at most $k+t$ bits used in the bit
  strings $\beta_{h-1}, \dots, \beta_1$ overall, store the value of
  the bit itself and the number, written in binary, of the component
  in the $h$-tuple that this bit belongs to.  
  Altogether, the number of bits needed is
  $O((k+t) \cdot (1 + \lg h)) \, = \, O((k+t) \log h)$. 
  
  We now consider computing the level-$p$ successor of a leaf 
  $\ell = \seq{\beta_{h-1}, \dots, \beta_1}$ in tree~$\Bc_{t, h}^k$. 
  We split the task of computing the level-$p$ successor 
  $\ell'$ of leaf $\ell$ into the following two steps:
  \begin{itemize}
  \item
    find the lowest ancestor $\seq{\beta_{h-1}, \dots, \beta_q}$ of 
    $\seq{\beta_{h-1}, \dots, \beta_p}$ 
    (that is, smallest $q$ satisfying $q \geq p$)
    that has the next sibling 
    $\seq{\beta_{h-1}, \dots, \beta_{q+1}, \beta'_q}$ in~$\Bc_{t, h}^k$; 
  \item
    find the smallest leaf 
    $\ell' = \seq{\beta_{h-1}, \dots, \beta_{q+1}, \beta'_q, \beta'_{q-1},
      \dots, \beta'_1}$
    that is a descendant of node 
    $\seq{\beta_{h-1}, \dots, \beta_{q+1}, \beta'_q}$ in~$\Bc_{t, h}^k$. 
  \end{itemize}

  For node $\ell_r = \seq{\beta_{h-1}, \dots, \beta_r}$, where 
  $q \leq r \leq h-1$, we can determine whether it has the next sibling 
  $\ell'_r = \seq{\beta_{h-1}, \dots, \beta_{r+1}, \beta'_r}$ 
  in~$\Bc_{t, h}^k$ and find it, by considering the following cases. 
  Firstly, we identify the cases in which $\ell_r$ does not have the  
  next sibling:
  \begin{itemize}
  \item
    the number of non-empty strings among $\beta_{h-1}$, \dots,
    $\beta_{r+1}$ is $k-1$;

  \item
    the number of non-leading bits used in strings $\beta_{h-1}$,
    \dots, $\beta_{r+1}$ is $t$;

  \item
    $\beta_r = 0 1^j$ for some $j \geq 0$, the number of non-leading
    bits used in strings $\beta_{h-1}$, \dots, $\beta_r$ is~$t$, and
    all bit strings $\beta_r$, \dots, $\beta_1$ are non-empty;

  \item
    $\beta_r = 1^j$ for some $j \geq 1$, and the number of non-leading
    bits used in strings $\beta_{h-1}$, \dots, $\beta_r$ is~$t$.
  \end{itemize}
  Define $k_{r+1}$ to be equal to $k-1$ minus the number of non-empty
  bit strings among $\beta_{h-1}$, \dots, $\beta_{r+1}$, and define 
  $t_{r+1}$ to be equal to $t$ minus the number of non-leading bits
  used in strings $\beta_{h-1}$, \dots, $\beta_{r+1}$. 
  We note that the subtree of~$\Bc_{t, h}^k$ that is rooted at node
  $\ell_{r+1}$ is a copy of tree~$\Bc_{t_{r+1}, r+1}^{k_{r+1}}$.
  Recall that trees $\Bc_{t, h}^k$ satisfy the same recurrences as
  trees~$\Uc_{t, h}^k$.
  Observe that the four cases above capture $\ell_r$ being the largest
  child of the root of the copy of~$\Bc_{t_{r+1}, r+1}^{k_{r+1}}$ rooted
  in node~$\ell_{r+1}$ in~$\Bc_{t, h}^k$, that correspond to
  items~\ref{item:h0-k0}., \ref{item:t0}., \ref{item:h-eq-k}.,
  and~\ref{item:h-g-k}.\ of Definition~\ref{def:Uc-and-Vc},
  respectively.  

  Secondly, we consider the remaining two cases in which $\ell_r$ does
  have the next sibling and we show how to find it by setting the
  value of $\beta_r'$ accordingly.
  \begin{itemize}
  \item
    If less than $t$ non-leading bits are used in strings
    $\beta_{h-1}$, \dots, $\beta_r$ then set 
    $\beta'_r = \beta_r 1 0^j$ for some $j \geq 0$, so that exactly
    $t$ non-leading bits are used in strings $\beta_{h-1}$, \dots,
    $\beta_{r+1}$, $\beta'_r$.

  \item
    If exactly $t$ non-leading bits are used in strings $\beta_{h-1}$,
    \dots, $\beta_r$, and $\beta_r = \beta 0 1^j$ for some 
    $\beta \in \eset{0, 1}^*$ and $j \geq 0$, then set 
    $\beta'_r = \beta$. 
  \end{itemize}

  Finally, we set 
  $\ell' \: = \: 
  \seq{\beta_{h-1}, \dots, \beta_{q+1}, \beta'_q, 00^i, 0, \dots,
    0, \varepsilon, \dots, \varepsilon}$
  for some suitable $i \geq 0$, so as to make the number of non-empty
  bit strings in~$\ell'$ equal to~$k-1$, and the number of bits used
  in all the bit strings in~$\ell'$ equal to $(k-1)+t$. 

  To argue that the above case analyses can be implemented to work in
  time $O((k+t)\log h)$, while using the succinct representation
  described above, is tedious and hence we eschew it.
\end{proof}

\section{Progress-Measure Strahler Numbers}

Consider a parity game~$\Gc$ in which all vertex priorities are at
most an even number~$d$.
If $(A, \leq)$ is a well-founded linear order then we write sequences
in $A^{d/2}$ 
in the following form
$\seq{m_{d-1}, m_{d-3}, \dots, m_1}$, 
and for every priority~$p \in \eset{0, 1, \dots, d}$, we define the 
\emph{$p$-truncation} of $\seq{m_{d-1}, m_{d-3}, \dots, m_1}$, denoted
by ${\seq{m_{d-1}, m_{d-3}, \dots, m_1}}|_p$, 
to be the
sequence $\seq{m_{d-1}, \dots, m_{p+2}, m_p}$ if $p$ is odd and 
$\seq{m_{d-1}, \dots, m_{p+3}, m_{p+1}}$ if $p$ is even. 
We use the lexicographic order~$\leq_{\mathrm{lex}}$ to linearly order
the set~$A^* = \bigcup_{i=0}^\infty A^i$. 

A \emph{Steven progress measure}~\cite{EJ91,Jur00,JL17} on a parity 
game~$\Gc$ is a map $\mu : V \to A^{d/2}$ such that for every
vertex~$v \in V$: 
\begin{itemize}
\item
  if $v \in V_{\Even}$ then there is a $\mu$-progressive edge 
  $(v, u) \in E$; 
\item
  if $v \in V_{\Odd}$ then every edge $(v, u) \in E$ is
  $\mu$-progressive; 
\end{itemize}
where we say that an edge $(v, u) \in E$ is \emph{$\mu$-progressive}
if: 
\begin{itemize}
\item
  if $\pi(v)$ is even then 
  ${\mu(v)}|_{\pi(v)} \geq_{\mathrm{lex}} {\mu(u)}|_{\pi(v)}$; 
\item
  if $\pi(v)$ is odd then 
  ${\mu(v)}|_{\pi(v)} >_{\mathrm{lex}} {\mu(u)}|_{\pi(v)}$. 
\end{itemize}
We define \emph{the tree of a progress measure~$\mu$} to be the
ordered tree generated by the image of~$V$ under~$\mu$. 

\begin{theorem}[\cite{EJ91,Jur00,JL17}]
\label{thm:pm-win-str}
  There is a Steven progress measure on a parity game~$\Gc$ if and
  only if every vertex in~$\Gc$ is in its largest Steven dominion.  
  If game $\Gc$ is $(n, d)$-small then the tree of a progress measure 
  on~$\Gc$ is $(n, {d}/{2} + 1)$-small.  
\end{theorem}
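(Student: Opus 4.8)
The plan is to prove the biconditional in its two directions and then dispatch the size bound, following the classical development in~\cite{EJ91,Jur00,JL17}. For the easier direction, from the existence of a progress measure to winning, suppose $\mu\colon V\to A^{d/2}$ is a Steven progress measure and let $\sigma$ be the positional Steven strategy obtained by selecting, at every vertex $v\in V_{\Even}$, one $\mu$-progressive outgoing edge, which exists by the definition of a progress measure. I would then show that every cycle in $(V,\sigma)$ is even: assume not, take an odd cycle $v_0\to v_1\to\dots\to v_m=v_0$ whose largest priority $p$ is odd, and index it so that $\pi(v_0)=p$. The observation driving the argument is that for a priority $q\leq p$ the $p$-truncation of a sequence is a prefix of its $q$-truncation, and that if $x\geq_{\mathrm{lex}}y$ and $x',y'$ are the length-$j$ prefixes of $x,y$ then $x'\geq_{\mathrm{lex}}y'$ (and likewise from $x>_{\mathrm{lex}}y$, still only weakly). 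Applying the progressiveness inequality at each edge $(v_i,v_{i+1})$, whose priority $\pi(v_i)$ is at most $p$, then through the $p$-truncation yields ${\mu(v_i)}|_{p}\geq_{\mathrm{lex}}{\mu(v_{i+1})}|_{p}$ for every $i$, and at the edge leaving $v_0$, where $\pi(v_0)=p$ is odd, the comparison of the $p$-truncations is directly the strict progressiveness inequality; composing around the cycle gives ${\mu(v_0)}|_{p}>_{\mathrm{lex}}{\mu(v_0)}|_{p}$, a contradiction. Thus $(V,\sigma)$ has only even cycles, and as $\sigma$ trivially traps Audrey in $V$, it is a Steven dominion strategy on $V$, so $V$ is a Steven dominion and every vertex lies in the largest one.

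For the converse, I would assume every vertex is in the largest Steven dominion, so that $V$ itself is a Steven dominion (the largest Steven dominion is a Steven dominion, being a union of such), and fix a Steven dominion strategy $\sigma$ on $V$, so every cycle in $(V,\sigma)$ is even. For each $v\in V$ and each odd $p$ with $1\leq p\leq d-1$, define $\mu(v)_p$ to be the largest number of priority-$p$ vertices occurring on any finite path in $(V,\sigma)$ that starts at $v$ and visits only vertices of priority at most $p$, and set $\mu(v)=\seq{\mu(v)_{d-1},\mu(v)_{d-3},\dots,\mu(v)_1}$. This is well defined because in the subgraph of $(V,\sigma)$ induced by the vertices of priority at most $p$ no priority-$p$ vertex lies on a cycle — such a cycle would be an odd cycle of $(V,\sigma)$ — so every such path meets each priority-$p$ vertex at most once and $\mu(v)_p\leq n$. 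To see that $\mu$ is a progress measure it then suffices to check that every edge $(v,u)$ of $(V,\sigma)$ is $\mu$-progressive, since $\sigma$ chooses an outgoing edge at every vertex of $V_{\Even}$ and retains all outgoing edges at every vertex of $V_{\Odd}$: prepending $v$ to an optimal path witnessing $\mu(u)_p$ shows $\mu(v)_p\geq\mu(u)_p$ for every odd $p>\pi(v)$, and moreover $\mu(v)_{\pi(v)}\geq\mu(u)_{\pi(v)}+1$ when $\pi(v)$ is odd, and reading these coordinatewise inequalities through the $\pi(v)$-truncation delivers the required comparison ${\mu(v)}|_{\pi(v)}\geq_{\mathrm{lex}}{\mu(u)}|_{\pi(v)}$, which is strict precisely when $\pi(v)$ is odd.

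The size bound is then immediate: any Steven progress measure on $\Gc$ takes values in $A^{d/2}$, so the sequences in its image all have length exactly $d/2$ and are pairwise incomparable under the prefix order, hence the ordered tree they generate has at most $|V|\leq n$ leaves and height at most $d/2+1$, i.e.\ it is $(n,d/2+1)$-small. I expect the converse direction to be the only genuinely delicate part — making the counting measure $\mu$ precise, verifying its well-foundedness, and discharging all cases of the progressiveness check; an alternative route is an induction on $d$ that reads $\mu$ off a Steven attractor decomposition of $\Gc$ supplied by Theorem~\ref{thm:attractor-decompositions-of-largest-dominia}.
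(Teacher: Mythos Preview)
The paper does not prove this theorem: it is stated with citations to~\cite{EJ91,Jur00,JL17} and used as a background result, with no proof supplied. Your proposal is a correct reconstruction of the classical argument, essentially that of~\cite{Jur00}: the forward direction via the lexicographic-descent contradiction around a hypothetical odd cycle, the converse via the counting measure $\mu(v)_p$ recording the maximal number of priority-$p$ vertices seen along $\sigma$-paths confined to priorities at most~$p$, and the size bound read off directly from the fact that the image of~$\mu$ consists of at most~$n$ length-$d/2$ sequences.

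Two small points worth tightening, neither of which is a genuine gap. First, in the converse you should make explicit that ``finite path'' means finite walk (repetitions allowed): your well-definedness argument (``no priority-$p$ vertex lies on a cycle, so each is visited at most once'') is exactly what bounds the count on walks, and the prepending step needs walks rather than simple paths to go through cleanly. Second, when $\pi(v)>p$ your definition leaves $\mu(v)_p$ as a supremum over an empty set; the convention $\mu(v)_p=0$ in that case is harmless and keeps the coordinatewise inequalities valid. Your closing remark that one could alternatively read~$\mu$ off an attractor decomposition is also apt---that is precisely the construction the paper carries out later, in the proof of Lemma~\ref{lemma:dir1} within Theorem~\ref{thm:ad-Strahler-eq-pm-Strahler}.
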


We define the \emph{Steven progress-measure Strahler number} of a
parity game~$\Gc$ to be the smallest Strahler number of a tree of a
progress measure on~$\Gc$. 
The following theorem refines and strengthens
Theorems~\ref{thm:attractor-decompositions-of-largest-dominia}
and~\ref{thm:pm-win-str} by establishing that the
Steven Strahler number and the Steven progress-measure Strahler number
of a parity game nearly coincide.

\begin{theorem}
\label{thm:ad-Strahler-eq-pm-Strahler}
  The Steven 
  Strahler number and the Steven progress-measure Strahler number of
  a parity game differ by at most~$1$. 
\end{theorem}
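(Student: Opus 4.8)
The plan is to establish the two inequalities separately, each of which says that one kind of Strahler-optimal object can be converted into the other kind with only a bounded loss of Strahler number. Since the statement is symmetric in the two players, I will only discuss the Steven side.

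\textbf{From an attractor decomposition to a progress measure.} First I would take a Steven $d$-attractor decomposition $\Hc$ of the largest Steven dominion of $\Gc$ whose Strahler number equals the Steven Strahler number $k$ of $\Gc$; its tree $T_\Hc$ is $(n,\lceil d/2\rceil+1)$-small with $\Strah{T_\Hc}=k$ by Proposition~\ref{prop:tree-of-decomposition-is-small}. The idea is to use $T_\Hc$ (or a natural labelling of it) as the well-founded linearly ordered set $(A,\leq)$ from which progress-measure values are drawn, and to define $\mu(v)$ to be the branch of $T_\Hc$ corresponding to the ``position'' of $v$ in the nested attractor structure — essentially the sequence of indices $i_1,i_2,\dots$ recording, at each priority level $d, d-2, \dots$, which block $S_{i_j}$ (or which attractor) the vertex $v$ belongs to when the decomposition is unfolded. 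This is precisely the standard correspondence between attractor decompositions and progress measures that underlies the universal-tree framework (cf.\ \cite{JL17,JM20}): one checks by induction on $d$ that this $\mu$ is a Steven progress measure, the $\mu$-progressiveness of edges following from the reachability strategies implicit in the attractors and from the fact that within a block $S_{i}$ all priorities are at most $d-2$. The tree of this progress measure is (a relabelling of) a subtree of $T_\Hc$, so its Strahler number is at most $\Strah{T_\Hc}=k$; in fact it should equal $k$. This gives that the progress-measure Strahler number is at most the Strahler number — no loss at all in this direction, or at worst $1$ if the bookkeeping of even/odd priority levels forces a slight mismatch in tree height.

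\textbf{From a progress measure to an attractor decomposition.} Conversely, take a Steven progress measure $\mu$ on $\Gc$ whose tree $T_\mu$ has Strahler number $k'$ equal to the progress-measure Strahler number; $T_\mu$ is $(n,d/2+1)$-small by Theorem~\ref{thm:pm-win-str}. The plan is to run the universal attractor decomposition algorithm of Jurdzi\'nski and Morvan~\cite{JM20} guided by the tree $T_\mu$ (or by the $k'$-Strahler $(n,d/2+1)$-universal tree from Lemma~\ref{lem:U-n-h-k-Strahler-universal}, into which $T_\mu$ embeds): since $\mu$ witnesses that the whole game is a Steven dominion and its tree embeds into the guiding tree, the algorithm returns a Steven $d$-attractor decomposition $\Hc$ whose tree $T_\Hc$ embeds into the guiding tree. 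As the Strahler number is monotone under tree embedding (a minor of a minor is a minor, so a perfect binary tree minor of $T_\Hc$ is a minor of the guiding tree), we get $\Strah{T_\Hc}\leq \Strah{T_\mu}=k'$, hence the Strahler number of $\Gc$ is at most its progress-measure Strahler number. Here the ``$+1$'' slack of the theorem comes in if one prefers a self-contained argument that directly reads off the decomposition from $\mu$: slicing $\mu$ by the top priority $d$ and recursing on the subgames induced by the fibers of the first coordinate of $\mu$ naturally produces an attractor decomposition whose tree is obtained from $T_\mu$ by at most prefixing with a unary level, which can raise the Strahler number by one.

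\textbf{Main obstacle.} The routine parts are the two inductions on $d$; the genuinely delicate point is controlling the Strahler number, not just the height and leaf count, across the translation. In the attractor-decomposition-to-progress-measure direction one must check that collapsing the decomposition tree to the image of $\mu$ does not accidentally \emph{increase} branching at some level (it cannot, since we only delete nodes, and deletion of nodes is a minor operation that cannot increase the Strahler number); in the other direction one must ensure the unary ``priority-$d$ attractor'' level that the decomposition prepends does not cascade into a larger perfect-binary-tree minor — it cannot, because a unary node has exactly one child and hence never contributes to the ``two-subtrees-of-equal-Strahler-number'' clause in the inductive definition of $\Strah{\cdot}$, which is exactly why the loss is at most one and not more. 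I would therefore phrase the whole argument around the monotonicity of the Strahler number under the minor/embedding relation, reducing both inequalities to the two already-available black boxes: the correspondence between attractor decompositions and progress measures, and the universal attractor decomposition algorithm of~\cite{JM20}.
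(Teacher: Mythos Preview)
Your two-direction plan matches the paper's, and your Direction~2 (progress measure to attractor decomposition) is essentially right: the paper gives exactly the ``self-contained'' slicing construction you describe (induction on~$n$, splitting by the children of the root of $T_\mu$), and shows that the resulting decomposition tree embeds in $T_\mu$, so there is \emph{no} Strahler loss in that direction---your remark about ``prefixing with a unary level'' is not needed.

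The genuine gap is in Direction~1. Your map $\mu$ sends a vertex to the branch of $T_\Hc$ recording the sequence of block indices $i_1,i_2,\dots$ it lies in, and you assert that ``the tree of this progress measure is (a relabelling of) a subtree of $T_\Hc$''. This $\mu$ is not a progress measure. Take a vertex $v$ of odd priority $d-1$ lying in $A_i\setminus S_i$ at the top level, with an edge (from Audrey, or along the reachability strategy) to some $u$ that is also in $A_i$. Both $v$ and $u$ receive the same first coordinate~$i$, so $\mu(v)|_{d-1}=\mu(u)|_{d-1}$, violating the strict inequality required at an odd priority. The tree $T_\Hc$ simply does not have enough leaves to separate vertices within the same attractor layer; the ``bookkeeping'' issue is not a height mismatch but a genuine shortage of nodes.

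The paper's fix is where the $+1$ actually lives. First it restricts $\Gc$ to the positional Steven dominion strategy induced by~$\Hc$, obtaining a game in which every vertex is effectively owned by Audrey; each set $C_i=A_i\setminus S_i$ (and $C=A\setminus\pi^{-1}(d)$) then carries a well-defined topological order coming from the acyclic reachability structure. The progress-measure tree $\Tc$ is built from $T_\Hc$ by inserting, next to each subtree corresponding to $S_i$, a fan of $|C_i|$ fresh leaves (and $|C|$ leaves at the top), and the attractor vertices are mapped to those leaves in topological order. The inductive invariant is that $T_\Hc$ embeds into $\Tc$ and every node of $\Tc$ outside the image of the embedding is a \emph{leaf}; since adjoining only leaves to a tree can raise its Strahler number by at most~$1$, this yields $\Strah{\Tc}\le\Strah{T_\Hc}+1$. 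So the $+1$ arises in the attractor-decomposition-to-progress-measure direction, for this specific reason, and not in the other direction as you suggest.
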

The translations between progress measures and attractor
decompositions are as given by Daviaud, Jurdzi\'nski, and
Lazi\'c~\cite{DJL18};  
here we point out that they do not increase the Strahler number of the 
underlying trees by more than~$1$. 
This coincidence of the two complexity measures, one based on
attractor decompositions and the other based on progress measures, 
allows us in Section~\ref{sec:coda} to use a progress measure
lifting algorithm to solve games with bounded Strahler number.

\begin{proof}[Proof of Theorem~\ref{thm:ad-Strahler-eq-pm-Strahler}]
Let $\Gc$ be a $(n,d)$-small parity game. To prove
Theorem~\ref{thm:ad-Strahler-eq-pm-Strahler} we will prove the
following two lemmas.  
\begin{lemma}
\label{lemma:dir1}
If $\Gc$ is a parity game where all the vertices belong to Audrey and $\Gc$ has a Steven attractor decomposition of Strahler number $k$, then it has a Steven progress measure of Strahler number at most $k +1$.
\end{lemma}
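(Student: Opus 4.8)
I would prove Lemma~\ref{lemma:dir1} by induction on the height of the tree $T_{\Hc}$ of the Steven attractor decomposition $\Hc$, directly constructing a Steven progress measure $\mu$ whose tree embeds into a slightly taller Strahler-universal tree, and hence has Strahler number at most $k+1$. The natural target is to build $\mu$ so that for each vertex $v$, the value $\mu(v)$ encodes the ``address'' of $v$ in $\Hc$: which of the sets $S_1, \dots, S_\ell$ (or attractor layers $A_i$) at the top level $v$ lands in, and then recursively its address inside the sub-decomposition $\Hc_i$. Since all vertices belong to Audrey here, the progressiveness condition must hold on \emph{every} outgoing edge, which is exactly the strong constraint that an attractor decomposition of a game won entirely by Steven is set up to satisfy: from $A_i \setminus S_i$ all edges lead ``no further right'' within $\Gc_i$, from the top attractor $A$ to priority-$d$ vertices likewise, and priority-$d$ vertices strictly decrease the leading coordinate. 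This is precisely the content of the translation in Daviaud, Jurdzi\'nski, and Lazi\'c~\cite{DJL18}, so the bulk of the correctness argument can be cited; my added obligation is the bookkeeping on the Strahler number.

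**Key steps.** First I would recall the \cite{DJL18} construction of a progress measure from an attractor decomposition: order the triples $(S_i, \Hc_i, A_i)$ from right to left, assign to the block $A_i \setminus S_i$ a ``counter'' value in the $(d-1)$-coordinate reflecting its index, and recursively assign inside $\Gc \cap S_i$ the progress measure obtained from $\Hc_i$ (a $(d-2)$-attractor decomposition), shifted into coordinates $m_{d-3}, \dots, m_1$. Vertices in the top attractor $A$ get the maximal $(d-1)$-value. Second, I would observe that the tree of the resulting progress measure is, up to the embedding, obtained from $T_{\Hc}$ by inserting at each internal node an extra ``spine'' to accommodate the attractor-index counters---this adds at most one to the height, matching Theorem~\ref{thm:pm-win-str}'s $(n, d/2+1)$-smallness versus Proposition~\ref{prop:tree-of-decomposition-is-small}'s $(n, \lceil d/2\rceil + 1)$-smallness. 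Third, and crucially, I would track the Strahler number through this insertion: because each inserted counter spine is essentially a \emph{path} (each index value is a distinct child, but the subtree hanging below a given index-$i$ value is just $T_{\Hc_i}$, a single subtree), appending such a linear gadget above a node raises the Strahler number by at most~$1$, by the structural-induction characterization of $\Strah{\cdot}$ (a node all of whose children-subtrees except possibly one have strictly smaller Strahler number than the node does not increase it). Hence $\Strah{\text{tree of }\mu} \le \Strah{T_{\Hc}} + 1 = k+1$.

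**Main obstacle.** The delicate point is not the existence of the progress measure---that is \cite{DJL18}---but pinning down exactly \emph{where} the $+1$ in the Strahler bound comes from and confirming it is tight to~$1$, not more. The subtlety is that the tree of $\mu$ is \emph{not} literally $T_{\Hc}$ with a path glued on: the $(d-1)$-coordinate ranges over roughly $\ell$ distinct counter values, so the inserted gadget is a tree whose root has $\ell$ children. What saves us is that below counter value~$i$ sits exactly one subtree, namely (the tree of the restricted progress measure on) $\Gc \cap S_i$, so the gadget, viewed as an ordered tree, is a ``caterpillar'' of Strahler number at most $1 + \max_i \Strah{T_{\Hc_i}}$; combined across levels this telescopes to $\Strah{T_{\Hc}} + 1$ rather than blowing up. I would make this precise with a small auxiliary claim: if $T = \seq{T_1, \dots, T_\ell}$ and some $T_j$ has $\Strah{T_j} = \Strah{T}$, then the tree obtained by replacing the children of the root of $T$ by a right-to-left spine whose $i$-th rib carries $T_i$ has Strahler number at most $\Strah{T} + 1$; iterating gives the lemma. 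Everything else---the $(n, d/2+1)$ size bound, the progressiveness checks on Audrey-owned edges---is routine and can be quoted.
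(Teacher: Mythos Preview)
Your high-level plan---induction on the attractor decomposition, building a progress measure whose tree shadows $T_{\Hc}$, then bounding the Strahler increase---matches the paper's. The gap is in the Strahler bookkeeping, and it stems from a wrong picture of the resulting tree. You describe the progress-measure tree as obtained by ``inserting at each internal node an extra spine'' and then try to control Strahler with a per-level caterpillar bound that you hope will ``telescope.'' But a per-level estimate of the form $\Strah{\text{new}} \le 1 + \max_i \Strah{\text{subtrees}}$ does not telescope to a global $+1$: iterated over the height it gives $+h$. Your auxiliary claim only treats the case where some child already achieves $\Strah{T_j} = \Strah{T}$; in the complementary case (two children attain the maximum and $\Strah{T}$ jumps) your recursion would accumulate, and you give no argument there.

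The paper sidesteps this entirely by noting that the progress-measure tree is obtained from $T_{\Hc}$ not by inserting spines but by inserting extra \emph{leaf children}: each vertex of $A_i \setminus S_i$ (and of $A$ below priority~$d$) becomes its own leaf sibling of the subtree $\Tc_i$, ordered by a topological sort of the attractor, while $\Tc_i$ recursively embeds $T_{\Hc_i}$ with only leaves added. The global invariant carried through the induction is therefore: $T_{\Hc}$ embeds into $\Tc$ and every node of $\Tc$ outside the image is a leaf of~$\Tc$. From this single structural fact one gets $\Strah{\Tc} \le \Strah{T_{\Hc}} + 1$ in one shot---adding only leaf children anywhere in a tree raises the Strahler number by at most one, by a short bottom-up induction---with no per-level accumulation. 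In particular the height does not change, so your remark that the construction ``adds at most one to the height'' is also off. The missing idea is this invariant; once you have it, the proof is immediate.
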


\begin{proof}
Let $\Gc$ be a parity game where all the vertices belong to Audrey.
The proof is by induction on the height of the tree of a Steven attractor decomposition of $\Gc$.

\subparagraph*{Induction hypothesis:} Given a $d$-attractor decomposition $\Hc$ of $\Gc$ and its tree $\Tc_\Hc$ of height $h$, there is a progress measure tree $\Tc$ of height $h$ and an embedding $f$ from $\Tc_\Hc$ to $\Tc$ such that all the nodes of $\Tc$ which are not in the image of $f$ are leaves.

\subparagraph*{Base case:} If the height of $\Tc$ is at most $0$, then the $d$-attractor decomposition is $(A,\emptyset)$. Let $C$ be the set of vertices which do not have priority $d$. Consider the topological order: $u<v$ if there is a path from $v$ to $u$ in $A$. We consider the tree $\seq{\circ^{|C|}}$ and $\mu$ which maps the vertices of priority $d$ to its root and the vertices in $C$ to leaves, respecting the topological order, i.e. if $u<v$ then $u$ is mapped to a node on the right of the node $v$ is mapped to. This defines a progress measure of Strahler number at most $2$.

\subparagraph*{Induction Step.}

Consider a Steven-$d$-attractor decomposition: 
$$\Hc \: = \: \seq{A, (S_1, \Hc_1, A_1), \dots, (S_j, \Hc_j, A_j)}$$ 
and let $\Tc_{\Hc_i}$ be the tree of $\Hc_i$ and  $\Gc_i$ as defined in the definition of an attractor decomposition.

Inductively, for all $i$, there is a progress measure tree $\Tc_i$ (and an associated progress measure mapping $\mu_i$) of the same height as $\Tc_{\Hc_i}$ and an embedding $f_i$ from $\Tc_{\Hc_i}$ to $\Tc_i$ such that all the nodes of $\Tc_i$ which are not in the image of $f_i$ are leaves. 

Let us construct a progress measure tree for $\Gc$ as follows. Let $C_i = A_i\setminus S_i$ for each $i$ and $C$ be the set of nodes in $A$ that have priority at most $d-1$. Set:
$$\Tc \: =  \: \seq{ \circ^{|C|} , \Tc_1, \circ^{|C_1|},\ldots, \Tc_j, \circ^{|C_j|}}$$

Set $\mu$ to be a mapping from the set of vertices of $\Gc$ to the nodes of $\Tc$ which extends $\mu_i$ on vertices in $S_i$, maps the vertices of priority $d$ to the root of the tree, the vertices in $C$ to the first $|C|$ children of the root and the vertices in $C_i$ to the corresponding $|C_i|$ children of the root which respects the topological ordering in $\Gc$ as viewed as a graph, i.e, if for vertices $u$ and $v$ in $C$, resp. $C_i$, there is a path from $u$ to $v$ in $C$, resp. $C_i$, then $u$ is mapped to a node that appears on the right of the node $v$ is mapped to.

By construction and induction hypothesis, the tree $\Tc$ embeds $\Tc_\Hc$ and the only nodes that are not images of nodes in $\Tc_\Hc$ are leaves. Moreover, $\Tc$ is a progress measure tree with mapping $\mu$ by induction hypothesis, and the construction which is compatible with the Steven reachability strategy on $A$, and the $A_i$'s.

The lemma follows from the fact that the Strahler number of a tree increases by at most 1 when leaves are added to it.
\end{proof}

\begin{lemma}
\label{lemma:dir2}
If $\Gc$ has a Steven progress measure of Strahler number $k$, then it has a Steven attractor decomposition of Strahler number at most $k$.
\end{lemma}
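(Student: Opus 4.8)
The plan is to argue by induction on the depth of the tree $\Tc$ of a Steven progress measure $\mu$ on~$\Gc$, following the translation of Daviaud, Jurdzi\'nski, and Lazi\'c~\cite{DJL18} from progress measures to attractor decompositions, and keeping track of how the Strahler number of the relevant ordered trees evolves. First I would set $d$ to be the least even number bounding all priorities, and recall that $\mu : V \to A^{d/2}$ with tree $\Tc$ generated by the image of~$\mu$. The top-priority layer is isolated as usual: let $Z$ be the set of vertices of priority~$d$, let $A$ be the Steven attractor to~$Z$ in~$\Gc$, and set $\Gc_1 = \Gc \setminus A$. The key structural fact, which is exactly what makes the translation work, is that restricting $\mu$ to $\Gc_1$ and dropping the (constant-on-each-component) top coordinate $m_{d-1}$ yields a family of Steven progress measures on the traps cut out by the distinct values of $m_{d-1}$: for each value $a$ appearing as the first coordinate of $\mu$ on some vertex of $\Gc_1$, the set $S_a$ of vertices of $\Gc_1$ whose first coordinate is exactly~$a$ is a trap for Audrey in $\Gc_1$ in which all priorities are at most $d-2$, and the projection $\mu \!\restriction\! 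S_a$ with the first coordinate removed is a Steven $(d-2)$-progress measure on $\Gc \cap S_a$ whose tree is the subtree $\Tc_a$ of $\Tc$ hanging below the $a$-labelled child of the root. Ordering these values $a_1 > a_2 > \cdots > a_\ell$ in decreasing order and letting $A_i$ be the Steven attractor to $S_{a_i}$ inside the appropriate residual game, one checks that $\seq{A, (S_{a_1}, \Hc_1, A_1), \dots, (S_{a_\ell}, \Hc_\ell, A_\ell)}$ is a Steven $d$-attractor decomposition, where each $\Hc_i$ is obtained by applying the induction hypothesis to the progress measure $\mu \!\restriction\! S_{a_i}$ on $\Gc \cap S_{a_i}$.

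The Strahler bookkeeping is then the heart of the argument. The tree $T_\Hc$ of the attractor decomposition just constructed is $\seq{T_{\Hc_1}, \dots, T_{\Hc_\ell}}$, and by the induction hypothesis each $T_{\Hc_i}$ has Strahler number at most $\Strah{\Tc_{a_i}}$, the Strahler number of the corresponding subtree of the progress-measure tree~$\Tc$. Since $\Tc$ itself has a root whose children include (copies of) each $\Tc_{a_i}$ — plus possibly extra children and extra leaf-children coming from vertices in $A \setminus Z$ that $\mu$ sends to shallow nodes, but these only decrease, never increase, the Strahler number we need — the inductive definition of the Strahler number gives $\Strah{T_\Hc} \le \Strah{\Tc}$: merging a forest whose trees have Strahler numbers $\le \Strah{T_{\Hc_i}} \le \Strah{\Tc_{a_i}}$ under a common root cannot exceed the Strahler number of the forest of the $\Tc_{a_i}$'s under the root of~$\Tc$, because the multiset of children Strahler numbers of the new root is dominated (in the sense controlling the max-and-tie-breaking rule) by that of the old root. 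Concretely, if two of the $T_{\Hc_i}$ attain the maximum $m$ then two of the $\Tc_{a_i}$ already have Strahler number at least $m$, so $\Strah{\Tc} \ge m+1 \ge \Strah{T_\Hc}$; if only one attains it, then $\Strah{T_\Hc} \le m \le \Strah{\Tc}$ directly.

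I expect the main obstacle to be the careful verification that the projected maps are genuine progress measures on the right subgames — in particular that $S_a$ is a trap for Audrey and that $\mu$-progressiveness of edges survives truncation-at-$(d-2)$ and restriction — and the matching bookkeeping that the passage from $\Tc$ to $\Tc_a$ corresponds exactly to deleting the top coordinate. This is precisely the content of the translation in~\cite{DJL18}; the only genuinely new point is the monotonicity of the Strahler number under this translation, which, as sketched above, follows from the recursive definition of $\Strah{\cdot}$ together with the observation that the children of the root of $\Tc$ carry at least as much ``Strahler mass'' as the children of the root of $T_\Hc$. One mild subtlety is the handling of the attractor sets $A$ and $A_i$: the vertices they add beyond $Z$ and the $S_a$'s are absorbed by $\mu$ into already-existing nodes of~$\Tc$ (this is why $\Strah{\Hc} \le \Strah{\Tc}$ here rather than $+1$, in contrast with Lemma~\ref{lemma:dir1}), so no new branching is created at any internal node. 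Combining Lemma~\ref{lemma:dir1} and Lemma~\ref{lemma:dir2} with Theorems~\ref{thm:attractor-decompositions-of-largest-dominia} and~\ref{thm:pm-win-str}, and noting both directions are applied to the largest Steven dominion, yields Theorem~\ref{thm:ad-Strahler-eq-pm-Strahler}.
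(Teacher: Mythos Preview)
Your proposal has a genuine gap. The central structural claim---that for \emph{every} value~$a$ of the top coordinate, the set $S_a$ of vertices of~$\Gc_1$ with first coordinate exactly~$a$ is a trap for Audrey in~$\Gc_1$ in which all priorities are at most~$d-2$---is false. Progressiveness of an edge $(v,u)$ with $\pi(v) \le d-2$ only forces the first coordinate of~$\mu(u)$ to be \emph{at most}~$a$, so the progressive edge from a Steven vertex in~$S_a$ may well land in some~$S_{a'}$ with $a' < a$; likewise a vertex of priority~$d-1$ can perfectly well sit in~$S_a$ whenever~$a$ is not minimal. Hence only the set corresponding to the \emph{smallest} first coordinate is a trap for Audrey in~$\Gc_1$ with priorities at most~$d-2$. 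Your decreasing ordering $a_1 > a_2 > \cdots > a_\ell$ makes this worse: $S_{a_1}$ is precisely the set that is \emph{not} a trap, so the very first step of your construction is not an attractor decomposition. Correlatedly, your induction on the depth of~$\Tc$ does not handle the ``horizontal'' recursion on $\Gc_1 \setminus A_1$: that residual game still has a progress-measure tree of the same depth as~$\Tc$.

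The paper repairs exactly this by strengthening the induction hypothesis and changing the induction parameter: it proves by induction on the number of vertices that there is a Steven attractor decomposition whose tree \emph{embeds} in~$\Tc$. One peels off only~$S_0$, the set of vertices mapped into the \emph{leftmost} child~$\Tc_0$ of the root (this one \emph{is} a trap for Audrey with priorities at most~$d-2$, by the argument above), takes its Steven attractor~$A_0$, and recurses separately on~$S_0$ (whose tree is~$\Tc_0$) and on $\Gc \setminus A_0$ (whose progress-measure tree is obtained from~$\Tc$ by deleting~$\Tc_0$); both have strictly fewer vertices. The Strahler bound is then immediate from the embedding, with no need for your multiset-domination bookkeeping.
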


\begin{proof}

We will prove the following by induction, which proves the lemma:

\subparagraph*{Induction Hypothesis on $n$:} Given an $(n,d)$-small parity game $\Gc$ where $d$ is even and a progress measure tree $\Tc$ on $\Gc$, there exist a Steven attractor decomposition whose tree embeds in $\Tc$.

\begin{remark}
\label{remark:embed}
Given a progress measure mapping $\mu$ on $\Gc$ and its corresponding progress measure tree $\Tc$, and given a trap $R$ for Audrey in $\Gc$, the restriction of $\mu$ to the vertices in $R$ is a progress measure with the tree induced by the nodes images of the vertices of $R$ by $\mu$. 
\end{remark}

\subparagraph*{Base Case.} For games with one vertex, any progress measure tree on $\Gc$ and any tree of a Steven attractor decomposition are $\seq{}$. Therefore the induction hypothesis is satisfied.

\subparagraph*{Induction step.} 
Let $\Gc$ be an $(n,d)$-small parity game where $d$ is the least even integer no smaller than any priority in $\Gc$ and let $\Tc$ be a progress measure tree on $\Gc$.

\medskip

\noindent \textit{Case 1: If the highest priority in $\Gc$ is even, i.e. equal to $d$.}
Let $A$ be the Steven attractor of the set of vertices of priority $d$. Let $\Gc' = \Gc\setminus A$. As $\Gc'$ is a trap for Audrey in $\Gc$, the tree $\Tc'$ induced by the nodes images of the vertices in $\Gc'$ in $\Tc$ is a progress measure tree of $\Gc'$. By induction hypotheses, there exist a Steven attractor decomposition $\Hc$ of $\Gc'$ whose tree $\Tc_\Hc$ embeds in $\Tc'$. By appending $A$ to $\Hc$, one gets a Steven attractor decomposition of $\Gc$ of same tree $\Tc_\Hc$, which then embeds in $\Tc$.

\medskip

\noindent \textit{Case 2: If the highest priority in $\Gc$ is odd, i.e. equal to $d-1$.}

No vertex is mapped to the root in the progress measure tree $\Tc$.
Let $\Tc_0, \Tc_1, \ldots, \Tc_j$ be the subtrees, children of the root of $\Tc$.
Let us note that vertices of priority $d-1$ cannot be mapped to nodes in $\Tc_0$ as they would not have progressive outgoing edges if that was the case. Let $S_0$ be the set of vertices mapped to nodes in $\Tc_0$ and let $A_0$ be the Steven attractor of $S_0$ in $\Gc$. We can assume that $S_0$ is non-empty (otherwise we remove $\Tc_0$ from $\Tc$ and start again).

Let $\Gc' = \Gc\setminus A_0$. As $\Gc'$ is a subgame, trap for Audrey, the tree $\Tc'$ with subtrees $\Tc_1, \ldots, \Tc_j$ is a progress measure tree on $\Gc'$. By induction, one gets a Steven attractor decomposition:
$$\Hc' \: = \: \seq{\emptyset, (S_1, \Hc_1, A_1), \dots, (S_j, \Hc_j, A_j)}$$
whose tree embeds in $\Tc'$.

Now, let us prove that $S_0$ is a trap for Audrey. Let $u$ be in $S_0$ and $v$ be one of its successor. For $(u,v)$ to be progressive, $v$ has to be mapped to a node in $\Tc_0$ and is then in $S_0$. Since there is always an outgoing progressive edge for Steven's vertices and all edges of Audrey's vertices are progressive, we can conclude that $S_0$ is a trap for Audrey, is a sub-game, and $\Tc_0$ is a progress measure tree on it. By induction, one gets a Steven attractor decomposition $\Hc_0$ of $S_0$, whose tree embeds in $\Tc_0$.

We have proved that:
$$\Hc \: = \: \seq{\emptyset,(S_0,\Hc_0,A_1), (S_1, \Hc_1, A_1), \dots, (S_j, \Hc_j, A_j)}$$
is a Steven attractor decomposition of $\Gc$ whose tree embeds in $\Tc$.
\end{proof}

Lemma~\ref{lemma:dir2} gives one direction of the theorem. For the reverse direction, consider $\Gc$ a parity game and $\Hc$ a Steven attractor decomposition of Strahler number $k$. This decomposition induces a winning strategy for Steven (with exactly one edge going out any vertex owned by Steven in $\Gc$). Consider the restriction of $\Gc$ to this Steven strategy. This is a game where all the vertices belong to Audrey, and which has $\Hc$ as a Steven attractor decomposition. We can apply Lemma~\ref{lemma:dir2} and obtain a Steven progress measure of Strahler number at most $k + 1$. The progress measure thus obtained is  also a progress measure of $\Gc$, which concludes the proof. 
\end{proof}

\section{Strahler-Universal Progress Measure Lifting Algorithm}
\label{sec:coda}

Jurdzi\'nski and Lazi\'c~\cite[Section~IV]{JL17} have implicitly
suggested that the progress-measure lifting algorithm~\cite{Jur00} can
be run on any ordered tree and they have established the correctness
of such an algorithm if their \emph{succinct multi-counters trees} were 
used.  
This has been further clarified by Czerwi\'nski et 
al.~\cite[Section~2.3]{CDFJLP19}, who have explicitly argued that any
$(n, d/2)$-universal ordered tree is sufficient to solve an 
$(n, d)$-small parity game in this way. 
We make explicit a more detailed observation that follows using the
same standard arguments 
(see, for example, Jurdzi\'nski and Lazi\'c~\cite[Theorem~5]{JL17}). 

\begin{proposition}
\label{prop:output-of-lifting-algo}
  Suppose the progress measure-lifting algorithm is run on a parity
  game~$\Gc$ and on an ordered tree~$T$. 
  Let $D$ be the largest Steven dominion in~$\Gc$ 
  on which there is a Steven progress measure whose tree can be
  embedded in~$T$.  
  Then the algorithm returns a Steven dominion strategy on~$D$. 
\end{proposition}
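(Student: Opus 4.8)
The plan is to recall the standard analysis of the progress-measure lifting algorithm and specialize it to an arbitrary input tree~$T$. First I would set up notation: the algorithm maintains a labelling $\mu : V \to \mathrm{leaves}(T) \cup \{\top\}$, where leaves of~$T$ are linearly ordered and $\top$ is a top element above all of them; it initializes $\mu(v)$ to the smallest leaf for every~$v$ and repeatedly performs \emph{lift} operations that raise $\mu(v)$ to the least value making the ``$\mu$-progressive'' condition hold at~$v$ (using the $p$-truncation and level-$p$ successor machinery discussed just before this section), until no lift applies. Since each lift strictly increases some $\mu(v)$ in a finite partial order (the leaves of~$T$ together with~$\top$), the algorithm terminates. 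Let $\mu^\ast$ be the labelling it outputs, and let $W = \{\, v \in V : \mu^\ast(v) \neq \top \,\}$.

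The first claim is that $W \subseteq D$, i.e.\ $W$ is a Steven dominion and the restriction of $\mu^\ast$ to~$W$ is a Steven progress measure on~$\Gc \cap W$ whose tree embeds in~$T$. Since the algorithm has halted, no lift applies at any vertex, which by the definition of a lift means that at every $v \in W$ the progressivity condition is satisfied: for $v \in V_{\Even}$ some outgoing edge is $\mu^\ast$-progressive, and for $v \in V_{\Odd}$ every outgoing edge is $\mu^\ast$-progressive (here one uses that $\top$ is an absorbing value above every leaf, so progressivity into a $\top$-vertex is automatic, but we are restricting to~$W$). Restricting to~$W$ is legitimate because $W$ is a trap for Audrey: if $v \in W \cap V_{\Odd}$ had an edge to a $\top$-vertex it would itself have been lifted to~$\top$. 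Hence $\mu^\ast|_W$ is a Steven progress measure on $\Gc \cap W$, its tree is generated by a subset of the leaves of~$T$ and so embeds in~$T$, and by Theorem~\ref{thm:pm-win-str} (applied to the subgame $\Gc \cap W$) every vertex of~$W$ lies in the largest Steven dominion of~$\Gc \cap W$; composing the associated reachability-style winning strategy with the trap property yields a Steven dominion strategy on~$W$, so $W \subseteq D$ by maximality of~$D$. Moreover the progress measure witnessing $W \subseteq D$ has a $T$-embeddable tree, and the strategy read off from $\mu^\ast|_W$ (at each $v \in V_{\Even}$ pick a $\mu^\ast$-progressive outgoing edge) is a Steven dominion strategy on~$W$; it therefore suffices to prove $D \subseteq W$.

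The second claim, $D \subseteq W$, is the heart of the matter and is the standard ``least-fixed-point'' argument. Let $\nu : D \to \mathrm{leaves}(T)$ be a Steven progress measure on $\Gc \cap D$ whose tree embeds in~$T$; via that embedding we view $\nu$ as mapping~$D$ into the leaves of~$T$, and we extend it to all of~$V$ by $\nu(v) = \top$ for $v \notin D$. One checks that this extended $\nu$ is a (global) Steven progress measure on~$\Gc$ in the $\top$-augmented sense. Now one proves by induction on the run of the algorithm that the invariant $\mu(v) \le \nu(v)$ holds for every $v \in V$ at every step: it holds initially since $\mu$ starts at the smallest leaf everywhere, and it is preserved by a lift at~$v$ because the lifted value is by definition the \emph{least} value making $v$ progressive given the current~$\mu$ on its successors, which by the induction hypothesis are all~$\le \nu$, whereas $\nu(v)$ is \emph{some} value making $v$ progressive with respect to~$\nu$ — monotonicity of the progressivity operator in the successor labels then gives that the least such value is $\le \nu(v)$. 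Applying the invariant to the output, $\mu^\ast(v) \le \nu(v)$ for all~$v$; in particular $\mu^\ast(v) \le \nu(v) \neq \top$ for $v \in D$, so $D \subseteq W$. Combining the two inclusions gives $W = D$, and the strategy returned — which is the one induced by $\mu^\ast|_D$ — is a Steven dominion strategy on~$D$, as required.

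The main obstacle is getting the monotonicity step in the last paragraph exactly right across the three-way case split hidden in the $p$-truncation/level-$p$-successor definition of a lift (even priority with $\ge_{\mathrm{lex}}$, odd priority with $>_{\mathrm{lex}}$, and the interaction with the~$\top$ element), and in carefully justifying that $W$ being a trap for Audrey lets us pass between ``progress measure on $\Gc \cap W$'' and ``progress measure on~$\Gc$ with a~$\top$ sink''; but all of this is routine and is exactly the content of the references cited in the statement (\cite{EJ91,Jur00,JL17,CDFJLP19}), so I would state the invariant, indicate the induction, and defer the case analysis to those sources.
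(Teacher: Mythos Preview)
Your proposal is correct and is precisely the standard least-fixed-point analysis of progress-measure lifting that the paper has in mind: the paper does not actually give a proof of this proposition but only states that it ``follows using the same standard arguments'' with a reference to~\cite[Theorem~5]{JL17}, and what you have written is exactly that standard argument (termination by monotone increase in a finite order; the output below~$\top$ is a progress measure into~$T$, giving $W \subseteq D$; and the invariant $\mu \le \nu$ against any $T$-embeddable progress measure~$\nu$ on~$D$, giving $D \subseteq W$). There is nothing to correct.
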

An elementary corollary of this observation is that if the
progress-measure lifting algorithm is run on the tree of a progress 
measure on some Steven dominion in a parity game, then the algorithm
produces a Steven dominion strategy on a superset of that dominion. 
Note that this is achieved in polynomial time because the tree of a
progress measure on an $(n, d)$-small parity game is 
$(n, d/2)$-small and the running time of the algorithm is dominated by
the size of the tree~\cite[Section~IV.B]{JL17}.

\begin{theorem}
\label{thm:Strahler-pm-run-time}
  There is an algorithm for solving $(n, d)$-small parity games of
  Strahler number~$k$ in quasi-linear space and time 
  $n^{O(1)} \cdot (d/2k)^k = n^{{k \lg({d}/{k})}/{\lg n} + O(1)}$,
  which is polynomial in~$n$ if $k \cdot \lg(d/k) = O(\log n)$. 
\end{theorem}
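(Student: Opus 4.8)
## Proof Proposal for Theorem~\ref{thm:Strahler-pm-run-time}

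The plan is to combine the small Strahler-universal trees constructed in Section~\ref{subsec:Strahler-universal} with the progress-measure lifting algorithm, using the equivalence between the attractor-decomposition Strahler number and the progress-measure Strahler number established in Theorem~\ref{thm:ad-Strahler-eq-pm-Strahler}. First I would fix an $(n,d)$-small parity game $\Gc$ of Strahler number~$k$, and apply Theorem~\ref{thm:ad-Strahler-eq-pm-Strahler} to conclude that both the largest Steven dominion and the largest Audrey dominion carry progress measures (for Steven and Audrey, respectively) whose trees have Strahler number at most $k+1$; by Theorem~\ref{thm:pm-win-str} these trees are also $(n, d/2+1)$-small. Hence, setting $h = \lceil d/2\rceil + 1$ and using the tree $U_{\floor{\lg n}, h}^{k+1}$ from Definition~\ref{def:U-and-V}, Lemma~\ref{lem:U-n-h-k-Strahler-universal} guarantees that these progress-measure trees embed into it.

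Next I would invoke Proposition~\ref{prop:output-of-lifting-algo}: running the progress-measure lifting algorithm on $\Gc$ and on $T = U_{\floor{\lg n}, h}^{k+1}$ returns a Steven dominion strategy on the largest Steven dominion $D$ on which there is a Steven progress measure whose tree embeds in~$T$; by the previous paragraph this is the whole largest Steven dominion. Running the dual algorithm with the roles of the players swapped (equivalently, solving the dual game on the same universal tree) yields the largest Audrey dominion. By Theorem~\ref{thm:attractor-decompositions-of-largest-dominia} these two dominia partition $V$, so together they solve~$\Gc$, and the strategies produced are the winning strategies. For the space bound, one records only the current leaf of the tree per vertex, and each leaf of $U_{\floor{\lg n}, h}^{k+1}$ admits the succinct $O((k+t)\log h) = O((k+1+\lg n)\log d)$-bit representation of Lemma~\ref{lemma:leaf-successor-poly-log}; since the lifting algorithm stores one such label per vertex, its working space is $O\!\left(n \cdot (k + \lg n)\log d\right)$, which is quasi-linear in the size of the input.

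For the running-time bound I would recall that the running time of progress-measure lifting on an ordered tree is, up to polynomial factors in~$n$, dominated by the number of leaves of the tree (each lift strictly advances a vertex's leaf, there are at most $n$ vertices, and level-$p$ successors are computed in time $O((k+t)\log h)$ by Lemma~\ref{lemma:leaf-successor-poly-log}). By Theorem~\ref{thm:size-of-U-n-h-k} applied with universality parameter $k+1$ (note $k+1 \leq \lg n + 1$ always holds when $\Gc$ has a progress measure, and the estimate of that theorem is stated for $k \leq \lg n$; one absorbs the off-by-one into the $n^{O(1)}$ factor, or uses $k \le \lg n$ outright since otherwise the claimed bound is vacuous as $(d/2k)^k$ dominates), the number of leaves of $U_{\floor{\lg n}, h}^{k+1}$ is $n^{O(1)} \cdot (h/(k+1))^{k+1} = n^{O(1)} \cdot (d/2k)^k$, using $h = \Theta(d)$ and absorbing constant-base exponentials of the form $c^{k}$ into $n^{O(1)}$ because $k \le \lg n$. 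Rewriting $n^{O(1)} (d/2k)^k = n^{k\lg(d/k)/\lg n + O(1)}$ and observing this is polynomial in~$n$ precisely when $k\lg(d/k) = O(\log n)$ completes the argument.

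I expect the main obstacle to be bookkeeping rather than conceptual: carefully reconciling the Strahler parameter~$k$ of the game with the universality parameter $k+1$ forced by Theorem~\ref{thm:ad-Strahler-eq-pm-Strahler}, and checking that the extra $+1$, together with the constant-base factors $2^{t+k}$ and the binomial $\binom{\floor{\lg n}+k-2}{k-2}$ appearing in Lemma~\ref{lemma:size-of-U}, are all safely swallowed by the $n^{O(1)}$ factor under the standing hypothesis $k \le \lg n$ — so that the clean bound $(d/2k)^k$ (rather than $(d/2(k+1))^{k+1}$) survives in the statement. A secondary point requiring care is that Proposition~\ref{prop:output-of-lifting-algo} is phrased for a fixed player, so the two-sided solution genuinely needs two runs (or a run on the game and its dual), and one must note that the dual universal tree has the same size bound by symmetry of Definition~\ref{def:U-and-V}.
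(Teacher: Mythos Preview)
Your proposal is correct and follows essentially the same approach as the paper's proof: run progress-measure lifting on the $(k{+}1)$-Strahler $(n, d/2{+}1)$-universal tree (the paper uses the labelled version $\Bc_{\floor{\lg n}, d/2+1}^{k+1}$ via Corollary~\ref{cor:Bc-eq-U}, which you should also cite to connect $U$ to Lemma~\ref{lemma:leaf-successor-poly-log}), invoke Theorem~\ref{thm:ad-Strahler-eq-pm-Strahler} and Proposition~\ref{prop:output-of-lifting-algo} for correctness, and derive the bounds from Theorem~\ref{thm:size-of-U-n-h-k} and Lemma~\ref{lemma:leaf-successor-poly-log}. Your write-up is in fact more explicit than the paper's on the two-sided run and on absorbing the $k$ versus $k{+}1$ discrepancy into the $n^{O(1)}$ factor, both of which the paper leaves implicit.
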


\begin{proof}
  By Proposition~\ref{prop:Strahler-small}, we may assume that 
  $k \leq 1 + \lg n$. 
  In order to solve an $(n, d)$-small parity game of Steven Strahler 
  number~$k$, run the progress-measure lifting algorithm for Steven on 
  tree~$\Bc_{\floor{\lg n}, {d}/{2} + 1}^{k+1}$, which is $(k+1)$-Strahler  
  $(n, {d}/{2} + 1)$-universal by
  Lemma~\ref{lem:U-n-h-k-Strahler-universal} and
  Corollary~\ref{cor:Bc-eq-U}.
  By Theorem~\ref{thm:ad-Strahler-eq-pm-Strahler} and by
  Proposition~\ref{prop:output-of-lifting-algo}, the algorithm will
  then return a Steven dominion strategy on the largest Steven
  dominion. 
  The running time and space upper bounds follow from
  Theorem~\ref{thm:size-of-U-n-h-k}, by the standard analysis of 
  progress-measure lifting as in~\cite[Theorem~7]{JL17}, and by
  Lemma~\ref{lemma:leaf-successor-poly-log}. 
\end{proof}

\begin{remark}
\label{remark:2-sqrt-lg}
  We highlight the $k \cdot \lg(d/k) = O(\log n)$ criterion from
  Theorem~\ref{thm:Strahler-pm-run-time}  
  as offering a novel trade-off between two natural structural
  complexity parameters of parity games 
  (number of of priorities~$d$ and the Strahler/Lehtinen number~$k$) 
  that enables solving them in time that is polynomial in the number
  of vertices~$n$.   
  It includes as special cases both the $d < \lg n$ criterion of
  Calude et al.~\cite[Theorem~2.8]{CJKLS17} and the $d = O(\log n)$
  criterion of Jurdzi\'nski and Lazi\'c~\cite[Theorem~7]{JL17}
  (set $k = \floor{\lg n} + 1$ and use
  Propositions~\ref{prop:tree-of-decomposition-is-small} 
  and~\ref{prop:Strahler-small} to justify it),
  and the $k = O(1)$ criterion of Lehtinen~\cite[Theorem~3.6]{Leh18}  
  (by Theorem~\ref{thm:Lehtinen-bounds-Strahler}). 

  We argue that the new 
  $k \cdot \lg(d/k) = O(\log n)$ 
  criterion
  (Theorem~\ref{thm:Strahler-pm-run-time}) 
  enabled by our results 
  (coincidence of the Strahler and the Lehtinen
  numbers: Theorem~\ref{thm:Lehtinen-bounds-Strahler})  
  and techniques 
  (small and efficiently navigable Strahler-universal
  trees: 
  Theorem~\ref{thm:size-of-U-n-h-k}, 
  Corollary~\ref{cor:Bc-eq-U}, 
  and Lemma~\ref{lemma:leaf-successor-poly-log})  
  considerably expands the asymptotic ranges of the natural structural
  complexity parameters in which parity games can be solved in
  polynomial time.  
  We illustrate it by considering the scenario in which the rates of
  growth of both $k$ and $\lg d$ as functions of~$n$ are
  $O\!\left(\sqrt{\log n}\right)$, i.e.,
  $d$ is $2^{O\left(\sqrt{\log n}\right)}$.  
  Note that the number of priorities~$d$ in this scenario is allowed
  to grow as fast as $2^{b \cdot \sqrt{\lg n}}$ for an arbitrary positive
  constant~$b$, which is 
  significantly larger than    
  what is allowed by the $d = O(\log n)$ criterion of Jurdzi\'nski and
  Lazi\'c~\cite[Theorem~7]{JL17}. 
  Indeed, its rate of growth is much larger than any
  poly-logarithmic function of~$n$, because for every positive
  constant~$c$, we have $(\lg n)^c = 2^{c \cdot {\lg {\lg n}}}$, and  
  $c \cdot {\lg {\lg n}}$ is exponentially smaller 
  than~$b \cdot \sqrt{\lg n}$.  
  At the same time, the $O\!\left(\sqrt{\log n}\right)$ rate of
  growth  
  allowed in this scenario for the Strahler number~$k$ 
  substantially exceeds $k = O(1)$ required by
  Lehtinen~\cite[Theorem~3.6]{Leh18}.    
\end{remark}

% \newpage

\bibliography{parity}

\begin{thebibliography}{10}

\bibitem{BKMP19}
P.~Baldan, B.~K\"onig, C.~Mika-Michalski, and T.~Padoan.
\newblock Fixpoint games on continuous lattices.
\newblock {\em Proceedings of the ACM on Programming Languages}, 3(POPL,
  January 2019):26:1--26:29, 2019.

\bibitem{BW18}
J.~C. Bradfield and I.~Walukiewicz.
\newblock {\em Handbook of Model Checking}, chapter The mu-calculus and model
  checking, pages 871--919.
\newblock Springer, 2018.

\bibitem{CJKLS17}
C.~S. Calude, S.~Jain, B.~Khoussainov, W.~Li, and F.~Stephan.
\newblock Deciding parity games in quasipolynomial time.
\newblock In {\em STOC 2017}, pages 252--263, Montreal, QC, Canada, 2017. ACM.

\bibitem{CDFJLP19}
W.~Czerwi\'nski, L.~Daviaud, N.~Fijalkow, M.~Jurdzi\'nski, R.~Lazi\'c, and
  P.~Parys.
\newblock Universal trees grow inside separating automata: {Q}uasi-polynomial
  lower bounds for parity games.
\newblock In {\em Thirtieth Annual ACM-SIAM Symposium on Discrete Algorithms,
  SODA 2019}, pages 2333--2349, San Diego, CA, 2019. SIAM.

\bibitem{DJL18}
L.~Daviaud, M.~Jurdzi\'nski, and R.~Lazi\'c.
\newblock A pseudo-quasi-polynomial algorithm for mean-payoff parity games.
\newblock In {\em 33rd Annual ACM/IEEE Symposium on Logic in Computer Science,
  LICS 2018}, pages 325--334, Oxford, UK, 2018. ACM.

\bibitem{DJL19}
L.~Daviaud, M.~Jurdzi\'nski, and K.~Lehtinen.
\newblock Alternating weak automata from universal trees.
\newblock In {\em 30th International Conference on Concurrency Theory, CONCUR
  2019}, volume 140 of {\em Leibniz International Proceedings in Informatics
  (LIPIcs)}, pages 18:1--18:14, Amsterdam, the Netherlands, 2019. Schloss
  Dagstuhl -- Leibniz-Zentrum f\"ur Informatik.

\bibitem{DJT20}
L.~Daviaud, M.~Jurdzi\'nski, and K.~S. Thejaswini.
\newblock The {S}trahler number of a parity game.
\newblock arXiv:2003.08627, 2020.

\bibitem{EJ91}
E.~A. Emerson and C.~S. Jutla.
\newblock Tree automata, mu-calculus and determinacy.
\newblock In {\em 32nd Annual Symposium on Foundations of Computer Science},
  pages 368--377, San Juan, Puerto Rico, 1991. IEEE Computer Society.

\bibitem{EJS93}
E.~A. Emerson, C.~S. Jutla, and P.~Sistla.
\newblock On model-checking for fragments of $\mu$-calculus.
\newblock In {\em CAV 1993}, volume 697 of {\em LNCS}, pages 385--396, Elounda,
  Greece, 1993. Springer.

\bibitem{Ers58}
A.~P. Ershov.
\newblock On programming of arithmetic operations.
\newblock {\em Communications of the ACM}, 1(8):3--6, 1958.

\bibitem{ELS16}
J.~Esparza, M.~Luttenberger, and M.~Schlund.
\newblock A brief history of {S}trahler numbers---with a preface.
\newblock Technical report, Technical University of Munich, 2016.

\bibitem{Fea10}
J.~Fearnley.
\newblock Exponential lower bounds for policy iteration.
\newblock In {\em ICALP 2010}, volume 6199 of {\em LNCS}, pages 551--562,
  Bordeaux, France, 2010. Springer.

\bibitem{FJKSSW19}
J.~Fearnley, S.~Jain, B.~de~Keijzer, S.~Schewe, F.~Stephan, and D.~Wojtczak.
\newblock An ordered approach to solving parity games in quasi-polynomial time
  and quasi-linear space.
\newblock {\em International Journal on Software Tools for Technology
  Transfer}, 21(3):325--349, 2019.

\bibitem{Fri09}
O.~Friedmann.
\newblock An exponential lower bound for the parity game strategy improvement
  algorithm as we know it.
\newblock In {\em LICS 2009}, pages 145--156, Los Angeles, CA, USA, 2009. IEEE
  Computer Society.

\bibitem{Fri11}
O.~Friedmann.
\newblock A subexponential lower bound for {Z}adeh's pivoting rule for solving
  linear programs and games.
\newblock In {\em IPCO 2011}, volume 6655 of {\em LNCS}, pages 192--206, New
  York, NY, USA, 2011. Springer.

\bibitem{FHZ11}
O.~Friedmann, T.~D. Hansen, and U.~Zwick.
\newblock Subexponential lower bounds for randomized pivoting rules for the
  simplex algorithm.
\newblock In {\em STOC 2011}, pages 283--292, San Jose, CA, USA, 2011. ACM.

\bibitem{GTW01}
E.~Gr\"adel, W.~Thomas, and T.~Wilke, editors.
\newblock {\em Automata, Logics, and Infinite Games: A Guide to Current
  Research}, volume 2500 of {\em LNCS}.
\newblock Springer, 2002.

\bibitem{HS19}
D.~Hausmann and L.~Schr\"oder.
\newblock Computing nested fixpoints in quasipolynomial time.
\newblock arXiv:1907.07020, 2019.

\bibitem{Jur00}
M.~Jurdzi\'nski.
\newblock Small progress measures for solving parity games.
\newblock In {\em 17th Annual Symposium on Theoretical Aspects of Computer
  Science}, volume 1770 of {\em LNCS}, pages 290--301, Lille, France, 2000.
  Springer.

\bibitem{JL17}
M.~Jurdzi\'nski and R.~Lazi\'c.
\newblock Succinct progress measures for solving parity games.
\newblock In {\em 32nd Annual ACM/IEEE Symposium on Logic in Computer Science,
  LICS 2017}, pages 1--9, Reykjavik, Iceland, 2017. IEEE Computer Society.

\bibitem{JM20}
M.~Jurdzi\'nski and R.~Morvan.
\newblock A universal attractor decomposition algorithm for parity games.
\newblock arXiv:2001.04333, 2020.

\bibitem{JPZ08}
M.~Jurdzi\'nski, M.~Paterson, and U.~Zwick.
\newblock A deterministic subexponential algorithm for solving parity games.
\newblock {\em SIAM Journal on Computing}, 38(4):1519--1532, 2008.

\bibitem{Knu73}
D.~E. Knuth.
\newblock {\em The Art of Computer Programming}.
\newblock Addison-Wesley, 1973.

\bibitem{Leh18}
K.~Lehtinen.
\newblock A modal $\mu$ perspective on solving parity games in quasi-polynomial
  time.
\newblock In {\em 33rd Annual ACM/IEEE Symposium on Logic in Computer Science,
  LICS 2018}, pages 639--648, Oxford, UK, 2018. IEEE.

\bibitem{LB20}
K.~Lehtinen and U.~Boker.
\newblock Register games.
\newblock arXiv:1902.10654, April 2020.

\bibitem{LSW19}
K.~Lehtinen, S.~Schewe, and D.~Wojtczak.
\newblock Improving the complexity of {P}arys' recursive algorithm, 2019.
\newblock \href {http://arxiv.org/abs/1904.11810} {\path{arXiv:1904.11810}}.

\bibitem{McN93}
R.~McNaughton.
\newblock Infinite games played on finite graphs.
\newblock {\em Annals of Pure and Applied Logic}, 65(2):149--184, 1993.

\bibitem{Par19}
P.~Parys.
\newblock Parity games: {Z}ielonka's algorithm in quasi-polynomial time.
\newblock In {\em MFCS 2019}, volume 138 of {\em Leibniz International
  Proceedings in Informatics (LIPIcs)}, pages 10:1--10:13, Aachen, Germany,
  2019. Schloss Dagstuhl -- Leibniz-Zentrum f\"ur Informatik.

\bibitem{Par20}
P.~Parys.
\newblock Parity games: {A}nother view on {L}ehtinen's algorithm.
\newblock In {\em 28th EACSL Annual Conference on Computer Science Logic, CSL
  2020}, volume 152 of {\em LIPIcs}, pages 32:1--32:15, Barcelona, Spain, 2020.
  Schloss Dagstuhl -- Leibniz-Zentrum f\"ur Informatik.

\bibitem{Vie90}
X.~G. Viennot.
\newblock Trees everywhere.
\newblock In {\em 15th Colloquium on Trees in Algebra and Programming}, volume
  431 of {\em LNCS}, pages 18--41, Copenhagen, Denmark, 1990. Springer.

\bibitem{VJ00}
J.~V\"oge and M.~Jurdzi\'nski.
\newblock A discrete strategy improvement algorithm for solving parity games.
\newblock In {\em CAV 2000}, volume 1855 of {\em LNCS}, pages 202--215,
  Chicago, IL, USA, 2000. Springer.

\bibitem{Zie98}
W.~Zielonka.
\newblock Infinite games on finitely coloured graphs with applications to
  automata on infinite trees.
\newblock {\em Theoretical Computer Science}, 200(1--2):135--183, 1998.

\end{thebibliography}

\end{document}